\newtheorem{theorem}{Theorem}
\newtheorem{corol}[theorem]{Corollary}
\newtheorem{prop}[theorem]{Proposition}
\newtheorem*{remark}{Remark}
\newcommand{\ket}[1]{| #1 \rangle}
\newcommand{\bra}[1]{\langle #1 |}
\newcommand{\braket}[2]{\langle #1 | #2 \rangle}
\newcommand{\pket}[1]{[ #1 ]}
\newcommand{\Tr}{\text{Tr}}
\newcommand{\eg}{\hbox{\em e.g.{}}}
\newcommand{\etc}{\hbox{\em etc.{}}}
\newcommand{\ie}{\hbox{\em i.e.{}}}
\newcommand{\Ps}{\mathbb{P}}
\newcommand{\Hs}{\mathcal{H}}
\newcommand{\SSC}{S^2_\text{\rm SC}}
\g@addto@macro\bfseries{\boldmath}
\begin{document}
\title{Geometry of spin coherent states}
\author{C.{} Chryssomalakos, E.{} Guzm\'an-Gonz\'alez, and E.{} Serrano-Ens\'astiga}
\affiliation{%
Instituto de Ciencias Nucleares, Universidad Nacional Aut\'onoma de M\'exico, Apartado Postal 70-543, Ciudad de M\'exico 04510, M\'exico.%
}
%
%
%
%
%
%
%
%

%
%
%
%

%
%
%
%
%
\begin{abstract}
Spin states of maximal projection along some direction in space are called (spin) coherent, and are, in many aspects, the ``most classical'' available. For any spin $s$, the spin coherent states form a 2-sphere in the projective Hilbert space $\Ps$ of the system. We address several questions regarding that sphere, in particular its possible intersections with complex lines. We also find that, like Dali's iconic clocks, it extends in all possible directions in $\Ps$.  We give a simple expression for the Majorana constellation of the linear combination of two coherent states, and use Mason's theorem to give a lower bound on the number of distinct stars of a linear combination of two arbitrary spin-$s$ states. Finally, we plot the image of the spin coherent sphere, assuming light in $\Ps$ propagates along Fubini-Study geodesics. We argue that, apart from their intrinsic geometric interest, such questions translate into statements experimentalists might find useful.
\end{abstract}
\maketitle

\section{Introduction}

%
%
%
%
%

Quantum theory's predominantly algebraic beginnings have given way, in the last decades, to an intense interest in its geometric aspects. Berry's discovery of geometric phases, and their description as holonomies in a principal bundle, fueled a  renaissance of the theory that continues to our days, further impulsed by advances in quantum computing. Although quantum dynamics has also benefited by this trend (see, \eg, \cite{Ana:91,Sch:96,Ash.Sch:99,Bro.Hug:00}), it is mostly kinematical considerations that drive the field, the principal object of study being the space of quantum states, particularly in its finite dimensional incarnation. 
Properties of states, like entanglement, that are deemed essential for quantum information processing, are seen to admit natural characterizations in purely geometrical terms~\cite{Dur.Vid.Cir:00,Kus.Zyc:01,Mos.Dan:01,Wei.Gol:03,Hub.Kle.Wei.Gon.Guh:09,Aul.Mar.Mur:10,Che.Aul.Haj:14,Mar.Gir.Bra.Bra.Bas:10,Mar.Gir.Bra.Bra.Bas:10,Bag.Bas.Mar:14}, and the gradual assimilation, by the community, of an ever expanding mathematical arsenal (\eg,~\cite{Miy:03,Hey:08,Hol.Luq.Thi:12}) promises to shed a new, bright light on familiar, yet not sufficiently understood concepts. 

Among quantum states, spin coherent (SC) ones~\cite{Rad:71} are the ``most classical'', just like their harmonic oscillator infinite dimensional counterparts, and generalizations thereof. Viewed as totally symmetric $N$-partite systems, they are characterized by their vanishing entanglement, yet, they have been shown to serve in classifying that same quantity as it pertains to other symmetric states~\cite{Man.Cou.Kel.Mil:14}. In paper, a generic spin state can be expanded in a linear combination of appropriate SC states~\cite{Man.Cou.Kel.Mil:14,San.Egu.DiC.Sab.Lam.Sol:16}, while in the laboratory, it can be reconstructed by a knowledge of corresponding transition probabilities~\cite{Ami.Wei:99}, the relation between these two statements being less trivial than one might assume. SC states have also appeared in the characterization of the polarization of light~\cite{Bjo.Gra.Hoz.Leu.San:15,Bjo.Gra.Hoz.Leu.San:15}, and, there too, correspond to maximally classical behavior, that has recently been studied also experimentally~\cite{Bou.etal:17}.

Our own study of SC states, part of which is reported here, revolves around basic questions about the geometry and topology of quantum state space: going beyond the standard folklore, we aim at an intuitive grasp of what ``living in quantum state space'' might be like. For example, it is an elementary fact  that the SC states form a topological 2-sphere, for any value of the spin of the system, but we feel there is much more to know about this, that is simply absent from the literature: assume one stands on a  particular state $\pket{\Psi}$ in quantum state space (we explain our notation in section~\ref{Ron}) 
and looks around, using light that travels along geodesics of the natural Fubini-Study (FS) metric --- what does one see?   Would the SC sphere  look like a distant moon in the sky? Would it look spherical? What part of the sky would it cover? How many times would a light ray intersect its surface, assuming transparency? We do not deny that we would pose these questions in any case for the sheer pleasure of finding out the answer, but it is also true that they have direct physical implications: for example, if looking at the ``SC moon'' from $\pket{\Psi}$  one can see both a front surface and a rear one, this implies that 
$\ket{\Psi}$ (a lift of $\pket{\Psi}$ in the overlying Hilbert space) can be written as a linear combination of two SC states. This, in turn, implies that an experimentalist, equipped with a magnetic field and a beam of particles in a SC state (and a picture of the SC moon taken from $\pket{\Psi}$!), can split the beam in two, rotate one component with the magnetic field to produce a second SC state, and then reconstruct $\ket{\Psi}$ by recombining the two SC states and rotating the superposition in its final orientation. The same comment holds true in the case the seemingly esoteric statement that there is a certain complex line going through $\pket{\Psi}$ and intersecting the SC sphere in two points, is valid.  Formalizing the above discussion, we are led to consider geodesics of the FS metric, and complex lines,  that pass through an arbitrary state $\pket{\Psi}$, and study how they intersect  the SC sphere, as $\pket{\Psi}$ is moved around the quantum state space. We consider most of the questions we pose elementary, but find some of the answers surprising --- the skeptic reader might want to fast forward to figures~\ref{fig:S2SC_1}, \ref{fig:S2SC_2}, on pages~\pageref{fig:S2SC_1}, \pageref{fig:S2SC_2}, and decide whether \emph{that} looks like a spherical moon.

Others before us have explored quantum state space with a similar geometric/visual point a view (see, \eg,~\cite{Kus.Zyc:01,Ben.Bra.Zyc:02,Bro.Gus.Hug:07}) --- the definitive reference in this regard is~\cite{Ben.Zyc:17}, to which, we are glad to admit,  we owe a great deal.
\section{Majorana Constellations}
In a relatively little known 1932 paper~\cite{Maj.Rep}, E.{} Majorana showed how to completely characterize, up to an overall phase, a normalized spin-$s$ state $\ket{\Psi}$ by a set of $2s$ points (\emph{stars}) on the unit sphere, the latter known as the \emph{Majorana constellation} corresponding to $\ket{\Psi}$. The construction generalizes the well known characterization of a spin-1/2 state, up to phase, by a single point on the Bloch sphere. The precise statement is that points in the projective Hilbert space $\Ps =\mathbb{C}P^N$ of a spin-$s$ system ($N \equiv 2s$) are in one-to-one correspondence with unordered sets of (possibly coincident) $2s$ points on the unit sphere. There are various ways to see why this is so --- we mention three that we find most illuminating, starting from Majorana's original construction, and progressing in order of decreasing abstraction.
\subsection{Majorana polynomial of a spin state}
Given an arbitrary spin-$s$ state, expressed in the $S_z$-eigenbasis,
\begin{equation}
|\Psi\rangle=\sum_{m=-s}^s c_m |s,m\rangle
\, ,
\label{sstate}
\end{equation} 
we associate to it its \emph{Majorana polynomial} \cite{Maj.Rep} $p_{\ket{\Psi}}(\zeta)$,
\begin{equation}
p_{|\Psi\rangle}(\zeta)
=
\sum_{m=-s}^{s}
(-1)^{s-m}\sqrt{\binom{2s}{s-m}}
c_m \,  \zeta^{s+m}
\, ,
\label{pjk}
\end{equation}
where $\zeta$ is an auxiliary complex variable. 
The $N$ roots $\zeta_i \in \mathbb{C}$, $i=1,\ldots,N$, of $p_{\ket{\Psi}}$ can be mapped to $N$ points $n_i$  on the $2$-sphere via stereographic projection from the south pole. The resulting constellation, made up of the $N$ stars, is the \emph{stellar representation} of the state $\ket{\Psi}$. If the polynomial turns out of a lower degree, \ie, if $c_{m}=0$ for $m=s, s-1, \ldots, s-k$, then $\zeta=\infty$ is considered a root of multiplicity $k+1$, resulting in the appearance of $k+1$ stars at the south pole of $S^2$. In the rest of this article, a state with stars $\{ n_k \}_{k=1}^{N}$ is denoted by $\ket{n_1, n_2 , \dots, n_N}$ --- note that the ordering of the stars is immaterial. The particular choice of coefficients in~(\ref{pjk}) results in that a transformation $D(\mathsf{R})$ of $|\Psi\rangle$ in Hilbert space, where $D(\mathsf{R})$ is the spin-$s$ irreducible representation of $\mathsf{R} \in SU(2)$, corresponds to a rotation $\mathsf{R}$ of the corresponding constellation on $S^2$.
\subsection{Spin-$s$ state from spin-1/2 constituents}
It is well known that the spin-$s$ state space is mathematically equivalent to the totally symmetric sector of the $2s$-fold tensor power of the spin-1/2 state space. In other words, even though a particular spin-$s$ system might owe its angular momentum to, say, a pair of particles orbiting each other, the properties, in a certain state of the system, under rotations, are  indistinguishable from those of a system of $2s$ spin-1/2 particles, in a particular, totally symmetric (under exchange of any pair of particles) state. The latter can always be obtained by considering first a separable state $\ket{\hat n_1} \otimes \ldots \otimes \ket{\hat n_N}$, where $\ket{\hat{n}}$ is a spin-1/2 state, and subsequently symmetrizing it by summing over all permutations of the particles in the available tensor factors, to obtain the totally symmetric state $\ket{\Psi}$,
\begin{align}
\ket{\Psi}
&=
\ket{n_1,\ldots,n_N}
\nonumber
\\
 &=
 \frac{A_\Psi}{N!} \sum_{\sigma \in S_N} \ket{\hat n_{\sigma_1}} \otimes \ldots \otimes \ket{\hat n_{\sigma_N}}
\label{psisymm}
\, ,
\end{align}
where $A_\Psi$ is a normalization factor,
\begin{equation}
\label{APsidef}
A_\Psi^2=\frac{N!}{\sum_{\sigma \in S_N} \braket{\hat{n}_1}{\hat{n}_{\sigma_1}} 
\ldots 
\braket{\hat{n}_N}{\hat{n}_{\sigma_N}}}
\, ,
\end{equation}
and $S_N$ is the permutation group of $N$ objects. Thus, any spin-$s$ state is equivalent to a state $\ket{\Psi}$ as in~(\ref{psisymm}), and the Majorana constellation of the former is the set of unit vectors $\{n_i\}$, $i=1,\ldots,N$, that appear in~(\ref{psisymm}). 
\subsection{An operational definition}
The above considerations  lead us to an operational definition of the $N$ (possibly coinciding) directions $n_i$ associated to an arbitrary spin-$s$ state $\ket{\Psi}$ (see, \eg,~\cite{Ami.Wei:99}). Whether or not the system in question is made up of spin-1/2 particles, we may use the representation of $\ket{\Psi}$ in~(\ref{psisymm}) to conclude that there are, in general,  $N$ directions in space, such that if a Stern-Gerlach apparatus is pointed along them, the probability of measuring the minimal spin projection $-s$ is zero. Indeed, if the apparatus is pointed to an arbitrary direction $m$, and the total spin projection in that direction is measured to be $-s$, this means that, in the constituent spin-1/2 picture, each spin-1/2 was measured to have projection $-1/2$. If now the direction $m$ coincides with one of the stars $n_i$, the probability that that particular spin, which ``points along'' $n_i$, will project to $-1/2$ is zero (since $\ket{\hat n_i}$ and $\ket{ - \! \hat{n}_i}$ are orthogonal states), and hence the probability that the total projection of the system is measured to be  $-s$ is also zero. In fact, that same reasoning reveals that if $n_i$ has multiplicity $k$ (\ie, there are $k$ stars coinciding there), then a measurement of the spin projection along that direction has zero probability of producing any of the values $-s$, $-s+1$, \ldots, $-s+k-1$.

In view  of the above, it should not come as a surprise that the inner product $\braket{-n}{\Psi}$  is actually proportional to the Majorana polynomial  $p_{\ket{\Psi}}(\zeta)$, where  $\zeta=\tan\frac{\theta}{2}e^{i \phi}$  is the stereographic image of $n=(\sin\theta \cos\phi,\sin\theta \sin\phi,\cos\theta)$.
\section{Spin Coherent States}

%
%
%
%
%
%
%
\subsection{Remarks on notation}
\label{Ron}
Before we delve into our main object of study, we explain our substantially simplified notation.  General directions in physical $\mathbb{R}^3$ will be denoted by $c$, $n$, $m$, \etc, while $x$, $y$, $z$ will be reserved for the cartesian axes. In what follows we deal with \emph{spin-$s$ coherent states}, which are characterized by having a maximal projection $s$ along a particular direction $n$. We denote such states by $\ket{n}$ --- the spin $s$ of the system will be obvious from the context, but still reserve the symbol $\ket{\hat{n}}$ for the spin-1/2 states, so that we can write without problems the formula $\ket{n}=\ket{\hat{n}} \otimes \ldots \otimes \ket{\hat{n}}$.
 Our discussion takes place either in the Hilbert space $\Hs^{N+1}$ of the system, or in the projective Hilbert space $\Ps(\Hs^{N+1})\equiv \Ps^N$, where points are equivalence classes of normalized states differing by a phase factor, and can be identified with the corresponding density matrix. We often omit the superindex when the dimension of the spin state space is clear from the context. Accordingly, we denote the projection of the state $\ket{\Psi} \in \Hs$ by $\pket{\Psi}$ or $\rho_\Psi \in \Ps$. Finally, the Majorana constellation of a state $\pket{\Psi}$ will be given as a list of unit vectors $\{n_1,\ldots,n_N\}$, or of their corresponding (via stereographic projection) complex numbers, say, $\{ \gamma_1, \ldots, \gamma_N \}$.
\subsection{Majorana polynomial as a transition amplitude}
We begin by clarifying the relation, alluded to above, between $p_{\ket{\Psi}}(\zeta)$ and $\braket{-n}{\Psi}$. For a spin-1/2 state we have
\begin{align}
\label{spin12state}
\ket{\hat n}
&=
\cos\frac{\theta}{2}\ket{\hat{z}}+e^{i \phi}\sin\frac{\theta}{2}\ket{\! - \! \hat{z}}
\\
\ket{\! - \! \hat n}
&=
\sin\frac{\theta}{2}\ket{\hat{z}}-e^{i \phi}\cos\frac{\theta}{2}\ket{\! -\! \hat{z}}
\, ,
\end{align}
so that
\begin{align}
\label{ipnni}
\braket{- \hat n}{\hat n_i}
&=
\cos\frac{\theta}{2} \cos\frac{\theta_i}{2}e^{-i\phi} (\zeta-\zeta_i)
\\
&=
\cos\frac{\theta}{2}e^{-i\phi} 
\frac{\zeta-\zeta_i}{\sqrt{1+|\zeta_i|^2}}
\, .
\end{align}
Using~(\ref{psisymm}) for $\ket{\Psi}$ we then find
\begin{equation}
\label{ipPsimn}
\braket{-n}{\Psi}
=
A_\Psi 
\left(
\cos\frac{\theta}{2}e^{-i\phi} 
\right)^{N}
\prod_{i=1}^{N}
\frac{\zeta-\zeta_i}{\sqrt{1+|\zeta_i|^2}}
\, .
\end{equation}
On the other hand, the coefficient $c_s$ of the maximal power of $\zeta$ in $p_{\ket{\Psi}}(\zeta)$ is equal to $\braket{z}{\Psi}$, so that
\begin{equation}
p_{\ket{\Psi}}(\zeta) 
= 
\braket{z}{\Psi}\prod_{i=1}^N (\zeta - \zeta_i) 
= 
A_{\Psi} \prod_{i=1}^N \frac{(\zeta-\zeta_i)}{\sqrt{1+ |\zeta_i|^2}} 
\, ,
\end{equation}
since $\braket{\hat{z}}{\hat{n}_i}=\cos \theta_i/2=(1+|\zeta_i|^2)^{-1/2}$. Comparing the last two equations we arrive at
\begin{equation}
\label{PsinMajrel}
\braket{- n}{\Psi}
=
\left(
\cos \frac{\theta}{2} e^{-i\phi}
\right)^N
p_{\ket{\Psi}}(\zeta)
\, .
\end{equation}
\subsection{SC bases}
The stellar representation of the SC state $\ket{n}$ consists of $N$ coincident stars in the direction $n$. For any $s$, 
the set of SC states is topologically a 2-sphere~\cite{Ben.Zyc:17}, which we denote by $S^2_{\text{SC}}$, 
sitting inside the full projective space $\Ps$. The unit operator may be resolved in SC states,  
$\mathds{1}= (2s+1) \int \ket{n}\bra{n} d\Omega/4\pi$, implying that any state  can be written as a 
(continuously infinite) linear combination of SC states. The following theorem shows that, in fact, 
\emph{any} $N+1$ SC states  will do --- the theorem may be found in the 
supplementary material to~\cite{San.Egu.DiC.Sab.Lam.Sol:16}, we give nevertheless a (slightly more streamlined) proof below, to establish our notation, and so that we can refer to intermediate results in the rest of the paper.
\begin{theorem}
\label{SCbasis_thm}
 Any set of $N+1$ distinct SC states $\{ \ket{c_k} \}_{k=0}^N$ forms a basis in the Hilbert space $\Hs^{N+1}$.
\end{theorem}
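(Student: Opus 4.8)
The plan is to reduce the linear independence of the $N+1$ spin coherent states $\{\ket{c_k}\}_{k=0}^N$ to the nonvanishing of a Vandermonde-type determinant, using the correspondence between SC states and their Majorana polynomials. Concretely, suppose $\sum_{k=0}^N \lambda_k \ket{c_k}=0$ for some complex coefficients $\lambda_k$. I would pair this relation with the bra $\bra{-n}$ for an arbitrary direction $n$ with stereographic coordinate $\zeta$, and use~(\ref{PsinMajrel}) (or directly~(\ref{ipPsimn})) to convert each overlap $\braket{-n}{c_k}$ into the Majorana polynomial of $\ket{c_k}$. Since $\ket{c_k}$ is coherent with all $N$ stars at the point $c_k$ (stereographic coordinate $\gamma_k$), its Majorana polynomial is, up to a nonzero constant, $(\zeta-\gamma_k)^N$. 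Stripping off the common nonzero prefactor $(\cos\tfrac{\theta}{2}e^{-i\phi})^N$, the relation becomes the polynomial identity $\sum_{k=0}^N \mu_k (\zeta-\gamma_k)^N = 0$ for all $\zeta$, where $\mu_k=\lambda_k A_{c_k}/\prod(\dots)$ differs from $\lambda_k$ by a nonzero factor.

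The heart of the argument is then purely algebraic: the $N+1$ polynomials $\{(\zeta-\gamma_k)^N\}_{k=0}^N$, with the $\gamma_k$ pairwise distinct, are linearly independent in the $(N+1)$-dimensional space of polynomials of degree $\le N$. I would prove this by writing the coefficient matrix of these polynomials in the monomial basis: expanding $(\zeta-\gamma_k)^N=\sum_{j=0}^N \binom{N}{j}(-\gamma_k)^{N-j}\zeta^j$, the matrix factors as a diagonal matrix of binomial coefficients times the $(N+1)\times(N+1)$ Vandermonde matrix built from $-\gamma_0,\ldots,-\gamma_N$ (after an obvious rescaling of rows/columns). The Vandermonde determinant is $\prod_{k<l}(\gamma_l-\gamma_k)\neq 0$ since the $\gamma_k$ are distinct, and the binomial coefficients $\binom{N}{j}$ are all nonzero, so the whole determinant is nonzero. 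Hence all $\mu_k=0$, hence all $\lambda_k=0$, and the $N+1$ SC states are linearly independent; being $N+1$ vectors in the $(N+1)$-dimensional space $\Hs^{N+1}$, they form a basis.

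One subtlety to dispatch: the map $n\mapsto\zeta$ via stereographic projection misses the south pole ($\zeta=\infty$), and the prefactor $(\cos\tfrac{\theta}{2}e^{-i\phi})^N$ in~(\ref{PsinMajrel}) vanishes there; but this is harmless, since the polynomial identity $\sum_k \mu_k(\zeta-\gamma_k)^N=0$ holding for all finite $\zeta\neq$ the excluded values (in fact for all but finitely many $\zeta$) already forces the polynomial to be identically zero. Similarly, if one of the $c_k$ happens to be the south pole itself, one can either rotate the whole configuration first (rotations act on $\Hs$ unitarily and permute SC states among themselves, preserving both linear independence and distinctness), or simply work directly with the stellar picture where $\gamma_k=\infty$ contributes the polynomial $1$ of degree $0$, which is still linearly independent from the others by the same Vandermonde reasoning in a suitably adapted basis. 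I expect the main obstacle to be purely cosmetic — phrasing the degenerate south-pole case cleanly — rather than conceptual; the Vandermonde nonvanishing is the engine and it is immediate.
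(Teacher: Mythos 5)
Your proof is correct and rests on the same engine as the paper's: converting the SC states to their Majorana polynomials, proportional to $(\zeta-\gamma_k)^N$, and reducing the claim to the invertibility of the $(N+1)\times(N+1)$ Vandermonde matrix built from the distinct $\gamma_k$. The only inessential difference is direction of argument --- you prove linear independence of the polynomials (trivial kernel), while the paper solves the same Vandermonde system to expand an arbitrary state in the SC set (spanning) --- and your explicit handling of a star at the south pole is a small bonus the paper leaves implicit.
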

\begin{proof}
 Let $\gamma_k$ be the  complex number associated, via stereographic projection,  to the direction $c_k$ and let $\ket{\Psi}$ be an arbitrary state with $N$ associated complex numbers $\{ \zeta_k \}_{k=1}^N$ (\ie, the $\zeta_k$'s are the roots corresponding to the stars of $\ket{\Psi}$). The expansion
\begin{equation}
\label{SC.bas}
\ket{\Psi}=\sum_{k=0}^N \alpha'^{k} \ket{c_k} 
\end{equation}
implies the following relation for the corresponding Majorana polynomials 
\begin{equation}
\prod_{j=1}^N \frac{(\zeta- \zeta_j )}{\left( 1+ |\zeta_j|^2  \right)^{1/2}} 
= 
\sum_{k=0}^N \frac{\alpha^k}{\left(1+ |\gamma_k|^2 \right)^{N/2}} (\zeta- \gamma_k)^N \, ,
\label{SCcoefficients}
\end{equation}
with $\alpha^k =  \alpha'^k A_{n_k} / A_{\Psi} =\alpha'^k / A_{\Psi}$, and where  the $A$'s are the normalization factors introduced in~(\ref{psisymm}), (\ref{APsidef})  (note that for an SC state $\ket{n}$, $A_n=1$). 
Expanding each side we obtain
\begin{equation}
\begin{split}
\sum_{j=0}^N (-1)^{N-j}& \zeta^j b_{N-j}\\
 &= 
 \sum_{j=0}^N 
 (-1)^{N-j} \zeta^j \binom{N}{j} 
 \sum_{k=0}^N  
 \tilde{\alpha}^k \gamma_k^{N-j}
\end{split} 
\, ,
\end{equation}
where
\begin{equation}
\label{factor.alpha}
\tilde{\alpha}^k 
= 
\frac{\alpha^k}{\left(1+ |\gamma_k|^2 \right)^{N/2}} 
\left( 
\prod_{m=1}^N \left( 1+ |\zeta_m|^2 \right)^{1/2} 
\right) 
\, ,
\end{equation}
and $b_j$ are the symmetric polynomials of the numbers $\{ \zeta_k \}_{k=1}^N$, with $b_N=\prod_{i=1}^N \zeta_i$ and $b_0=1$. Comparing the powers of 
$\zeta$ on both sides in (\ref{factor.alpha}), we obtain the following system of equations
\begin{equation}
\label{s.van}
\left(
\begin{array}{c c c c}
1 & 1 & \dots & 1
\\
\gamma_0 & \gamma_1 & \ddots & \gamma_N
\\
\vdots & \ddots & \ddots & \vdots
\\
\gamma_0^N & \gamma_1^N & \dots & \gamma_N^N 
\end{array}
\right)
\left( 
\begin{array}{c}
\tilde{\alpha}^0 
\\
\tilde{\alpha}^1 
\\
\vdots
\\
\tilde{\alpha}^N
\end{array}
\right)
=
\left( 
\begin{array}{c}
 \tilde{b}_0
\\
 \tilde{b}_1
\\
\vdots
\\
 \tilde{b}_N
\end{array}
\right)
 \, ,
\end{equation}
with $\tilde{b}_j = \binom{N}{j}^{-1} b_j$. The matrix in the left hand side of the above equation, which we will denote by  $\mathsf{V}$, is of  the Vandermonde form, and is invertible if and only if all the numbers $\gamma_i $ are distinct.
\end{proof}
\noindent Using the known formula for the inverse of a Vandermonde matrix~\cite{Vander.inv} we find that $(\mathsf{V}^{-1})_{ij}$ is the coefficient of the term $\zeta^j$ in the polynomial $P_{i}(\zeta) / P_i (\gamma_i)$, where $P_i(\zeta)=\prod_{k=0, k \neq i}^{N}(\zeta-\gamma_k)$ (the indices in these formulas run from 0 to $N$). We discuss now some ramifications of the theorem, before giving a series of examples.
%
%
%
%
%
%
%
\subsection{Dual basis}
\indent
Note that, according to theorem~\ref{SCbasis_thm},  \emph{any} set of $N+1$ SC states forms a basis, without any restriction whatsoever on their relative positions, proximity, \emph{etc}. Given such a basis $\{\ket{c_k} \}_{k=0}^N$, we denote by $\{\ket{c^k} \}_{k=0}^N$  its dual basis, such that 
\begin{align}
\braket{c^j}{c_i}
&=
 \delta^j_{\phantom{j}i}
 \label{duality}
 \\
\sum_{i=0}^N \ket{c^i} \bra{ c_i}
&= 
\mathds{1}
\label{resid}
\, .
\end{align}
Note that, in general, $\braket{c^i}{c^i} \neq 1$. 
It is easy to see that  a spin state 
\begin{equation}
\label{Psin}
\ket{\Psi}=\ket{n_1,\ldots,n_N}
\, ,
\end{equation}
is orthogonal to any SC state whose  direction is antipodal  to one of the stars associated to $\ket{\Psi}$,
\begin{equation}
\label{PsiorthoSC}
\braket{-n_i}{\Psi}=0
\, .
\end{equation}
In particular, if $\ket{\Psi}$ has no degeneracy (\ie, coincident stars), it is orthogonal to $N$ SC states, which shows that the dual basis element  $\ket{c^i}$ is given by
\begin{equation}
\label{cidef}
\ket{c^i} 
=  
\frac{\ket{- \! c_0 , -c_1 , \dots , \widehat{-c_{i}},  \dots , -c_{N}} }{
\braket{\phantom{\rule{0ex}{2.3ex}} c_i}{- \! c_0 , -c_1 , \dots , \widehat{-c_{i}},  \dots , -c_{N}}
} 
\, ,
\end{equation}
where the wide hat denotes omission. Note that the denominator in~(\ref{cidef}) is nonzero, since $\ket{\hat{c}_i}$ is only orthogonal to $\ket{\! - \! \hat{c}_i}$ and no other spin-1/2 state. We remark also that~(\ref{resid}) implies that 
\begin{equation}
\label{expansionadapt}
\ket{\Psi} =  \braket{c^i}{\Psi} \ket{c_i}
\, ,
\end{equation}
which is an alternative to solving~(\ref{s.van}).
%
%
%
%
%
%
%
\subsection{Extrema of the Husimi function and adapted SC bases}
We wish now to associate to a generic state $\ket{\Psi}$ (\ie, one with $N$ distinct stars), an \emph{adapted SC basis} $\{ \ket{c_i} \}_{i=0}^N$, the elements of which, as the name suggests, are all SC states. For $\ket{\Psi}$ as in~(\ref{Psin}), the elements $\{ c_i \}_{i=1}^N$ are just given by the stars of $\ket{\Psi}$, $c_i=n_i$, $i=1,\ldots,N$. The remaining element $\ket{c_0}$ is defined as follows: there is a unique 1D linear subspace that is orthogonal to all $\ket{n_i}$, $i=1,\ldots, N$. In fact, it consists of the complex multiples of the state $\ket{\tilde{\Psi}}$ \emph{antipodal} to $\ket{\Psi}$, \ie, the state  whose stars are antipodal to those of $\ket{\Psi}$, $\ket{\tilde{\Psi}}=\ket{\! - \! n_1, \dots, -n_N}$. This state is not itself SC, but has, generically, a single closest SC state, in the FS metric of $\Ps$ --- this latter state is chosen as $\ket{c_0}$. Central inversion of a constellation is an isometry for the FS metric, so that if $\ket{c_0}$ is the closest SC state to $\ket{\tilde{\Psi}}$, the SC state $\ket{\! - \! c_0}$ is the one closest to  $\ket{\Psi}$. 
The above may be summarized neatly as follows:  \emph{the $N+1$ elements of the SC basis adapted to a generic state $\ket{\Psi}$  are defined by the antipodes of the extremal points (one maximum and $N$ minima) of its Husimi function~\cite{Ben.Zyc:17} $H_{\Psi}(n)=|\braket{n}{\Psi}|^2$}. 

We derive now a necessary and sufficient condition for an SC state $\ket{n_0}$ to be closest to a  generic state $\ket{\Psi}$. SC states $\ket{n}$, nearby $\ket{n_0}$,  can be obtained by a rotation, 
\begin{equation}
\label{nfromn0}
\ket{n}=R\ket{n_0}=e^{-i b S_-}e^{-i a S_z}e^{-i c S_+} \ket{n_0}
\, ,
\end{equation}
where the reference frame has been rotated so as to make $n_0$ coincide with $z$, and $a$, $b$, $c$ $\in \mathbb{C}$ are functions of the rotation parameters (see, \emph{e.g.}, the supplementary material in~\cite{Gir.Bra.Bag.Bas.Mar:15}). With $S_+ \ket{n_0}=0$ and $S_z \ket{n_0}=s \ket{n_0}$, we get
\begin{equation}
\label{Husimiex1}
H_{\Psi}(n) 
= 
e^{2s \Im (a)} \bra{\Psi} e^{-i b S_-} \ket{n_0} \bra{n_0} e^{i \bar{b} S_+} \ket{\Psi} 
\, ,
\end{equation}
where $\Im$ denotes imaginary part.
Taking the derivative with respect to $b$, and setting it equal to zero, at $b=0$, gives 
\begin{equation}
\label{extremecond}
\bra{\Psi}S_-\ket{n_0} \braket{n_0}{\Psi} = 0
\, .
\end{equation}
When the second factor in the left hand side above vanishes, $\ket{n_0}$ is orthogonal to $\ket{\Psi}$, and we get an SC state at maximal distance (equal to $\pi/2$) from $\ket{\Psi}$ --- this only happens for $n_0$ antipodal to any of the stars of $\ket{\Psi}$. For $\ket{n_0}$ to be closest to $\ket{\Psi}$ the first factor must vanish, implying that
\begin{equation}
\label{Hus.f}
\braket{n_0 , s-1}{\Psi} = 0 
\, , 
\quad  
\ket{n_0, s-1} \equiv  \ket{-n_0, n_0 , \dots , n_0 } 
\, ,
\end{equation}
where $(n_0 \cdot S) \ket{n_0,k}=k \ket{n_0,k}$.
We turn now to a  characterization of the nature of the critical points of the Husimi function $H_\Psi$.
\begin{theorem}
Consider a critical point $n_0$ of the Husimi function  $H_\Psi$ and expand $\ket{\Psi}$ in the $n_0 \cdot S$ eigenbasis, $\ket{\Psi}=\sum_{m=-s}^s \rho_m e^{i \alpha_m} \ket{n_0,m}$. Then
\begin{enumerate}
\item  If $\rho_s=0$ then $n_0$ is a global minimum of $H_\Psi$ ( $\pket{n_0}$ is at a maximal distance from $\pket{\Psi}$).
\item If  
$\rho_{s-1}=0$, and 
$\sqrt{s} \rho_s>\sqrt{2s-1} \rho_{s-2}$,
then $n_0$ is a local maximum of $H_\Psi$ ( $\pket{n_0}$ is at a minimal distance from $\pket{\Psi}$).
\item If 
$\rho_{s-1}=0$, and 
$\sqrt{s} \rho_s< \sqrt{2s-1} \rho_{s-2}$,
then $n_0$ is a saddle point of $H_\Psi$: moving along the $\phi=(\alpha_{s-2}-\alpha_s)/2 \mod \pi$ direction on the sphere $n_0$ is a local minimum, while in the orthogonal direction it is a local maximum.
\end{enumerate}
\end{theorem}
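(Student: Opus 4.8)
The plan is to compute the Hessian of $H_\Psi$ at the critical point $n_0$ by a direct Taylor expansion in a stereographic chart centred at $n_0$, and to read the three cases off the sign of the resulting quadratic form. Take $n_0=z$ without loss of generality, so that $\ket{n_0,m}=\ket{s,m}$, and let $\zeta=\tan(\theta/2)e^{i\phi}$ be the local stereographic coordinate of the nearby SC state $\ket{n}$, so that $n=n_0\Leftrightarrow\zeta=0$; this is the coordinate $\zeta=-ib$ of the parametrization~(\ref{nfromn0}), and using $S_+\ket{n_0}=0$, $S_z\ket{n_0}=s\ket{n_0}$ together with unitarity of $R$ one finds $e^{2s\,\Im(a)}=(1+|\zeta|^2)^{-2s}$, so that~(\ref{Husimiex1}) collapses to the expected $H_\Psi(n)=|\braket{\zeta}{\Psi}|^2$. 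Expanding the normalized SC state in the $S_z$-basis then gives
\begin{equation}
H_\Psi(n)=\frac{|f(\bar\zeta)|^2}{(1+|\zeta|^2)^{2s}}\,,\qquad
f(\bar\zeta)=\sum_{m=-s}^{s}\binom{2s}{s-m}^{1/2}\rho_m e^{i\alpha_m}\,\bar\zeta^{\,s-m}\,,
\end{equation}
a polynomial in $\bar\zeta$ whose three lowest-order coefficients are $\rho_s e^{i\alpha_s}$, $\sqrt{2s}\,\rho_{s-1}e^{i\alpha_{s-1}}$, and $\sqrt{s(2s-1)}\,\rho_{s-2}e^{i\alpha_{s-2}}$.

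I would then expand $H_\Psi$ to second order about $\zeta=0$, writing $\zeta=re^{i\phi}$. The constant term is $\rho_s^2=|\braket{n_0}{\Psi}|^2$, and because $d_{\text{FS}}(\pket{n_0},\pket{\Psi})=\arccos|\braket{n_0}{\Psi}|=\arccos\rho_s$ is a strictly decreasing function of $H_\Psi(n_0)$, maxima of $H_\Psi$ correspond to minima of the FS distance and conversely. The linear term is proportional to $\rho_s\rho_{s-1}\,\Re(e^{i(\alpha_s-\alpha_{s-1})}\zeta)$, and its vanishing reproduces the criticality condition $\rho_s\rho_{s-1}=0$ of~(\ref{extremecond}). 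Using $(1+|\zeta|^2)^{-2s}=1-2s|\zeta|^2+\dots$, the quadratic part then works out to
\begin{equation}
H_\Psi-\rho_s^2=r^2\Bigl[\,2\sqrt{s(2s-1)}\,\rho_s\rho_{s-2}\cos\!\bigl(2\phi+\alpha_s-\alpha_{s-2}\bigr)+2s\bigl(\rho_{s-1}^2-\rho_s^2\bigr)\Bigr]+O(r^3)\,.
\end{equation}

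The three cases now follow immediately. If $\rho_s=0$, then $H_\Psi(n_0)=0$ is the global minimum and $\pket{n_0}$ lies at the maximal FS distance $\pi/2$ from $\pket{\Psi}$ --- this is exactly the antipodal-star case of~(\ref{PsiorthoSC}). If $\rho_s\neq0$, criticality forces $\rho_{s-1}=0$, and the bracket reduces to $2\sqrt{s(2s-1)}\,\rho_s\rho_{s-2}\cos(2\phi+\alpha_s-\alpha_{s-2})-2s\rho_s^2$, whose maximum over $\phi$ is negative precisely when $\sqrt{s}\,\rho_s>\sqrt{2s-1}\,\rho_{s-2}$; in that case $H_\Psi-\rho_s^2\le-cr^2+O(r^3)<0$ for small $r$, so $n_0$ is a strict local maximum of $H_\Psi$, hence $\pket{n_0}$ a local minimum of the FS distance. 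When the inequality is reversed, the bracket takes both signs, so $n_0$ is a saddle; its principal directions are those with $\cos(2\phi+\alpha_s-\alpha_{s-2})=\pm1$, \ie\ $\phi=(\alpha_{s-2}-\alpha_s)/2 \mod \pi$ and the orthogonal one --- along the former the bracket is positive ($H_\Psi$ increases, so a local minimum) and along the latter negative ($H_\Psi$ decreases, a local maximum), exactly as stated.

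The computation is essentially mechanical, so there is no serious obstacle --- the care needed is largely bookkeeping. Two points are worth flagging: (i) the normalization $e^{2s\,\Im(a)}=(1+|\zeta|^2)^{-2s}$ must be pinned down so that $H_\Psi$ appears as the clean ratio above and $\zeta$ is the genuine local stereographic coordinate, which is what makes $\phi$ and $\phi+\pi/2$ honest orthogonal tangent directions at $n_0$ (and $\phi\equiv\phi+\pi$ the natural identification of a line of directions); and (ii) the angular part of the Hessian must be recognized as a single shifted cosine, which renders both the definite-versus-indefinite dichotomy and the location of the principal axes transparent. Finally, the borderline case $\sqrt{s}\,\rho_s=\sqrt{2s-1}\,\rho_{s-2}$ and the doubly degenerate case $\rho_s=\rho_{s-1}=0$ lie outside the theorem's hypotheses and would require carrying the expansion to higher order.
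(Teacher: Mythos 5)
Your proposal is correct and follows essentially the same route as the paper: a second-order expansion of $H_\Psi$ about the critical point, with the quadratic form $2\sqrt{s(2s-1)}\,\rho_s\rho_{s-2}\cos(2\phi+\alpha_s-\alpha_{s-2})-2s\rho_s^2$ (your bracket with $\rho_{s-1}=0$) coinciding, via $r\simeq\theta/2$, with the paper's $-\Delta$, and the three cases read off from the shifted cosine exactly as in the paper. The only difference is presentational --- you expand the coherent-state overlap in the stereographic coordinate $\zeta$, whereas the paper expands the rotation operators $e^{-iS_z\phi}e^{-i\theta S_y}$ directly in $\theta$.
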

\begin{proof}
If $\rho_s=|\braket{n_0}{\Psi}|=0$ then $H_\Psi$ attains a global minimum at $n_0$  since it is  either positive or zero. 
For the other critical points, the condition for criticality is $\rho_{s-1}=|\braket{n_0,s-1}{\Psi}|=0$, as we have already proved.
To further characterize the critical points, we will expand the Husimi function around them, up to second order in the angular distance. 
Assume, as before,  that $n_0$ is along the $z$ axis
and consider an SC state close to $\ket{z}$ characterized by the angles $\theta$ and $\phi$. Then we have, up to second order in $\theta$,
\begin{align*}
H_\Psi
&=
\left|\bra{\Psi}e^{-i S_z \phi} e^{-i \theta S_y }\ket{z}\right|^2
\\
&=
\left|
\bra{\Psi}
\left (
( 1-\frac{s\theta^2}{4}) \ket{z}
\right. \right.
\\
&
\left. \left.
\quad +\frac{\theta^2}{4} e^{2i\phi} \sqrt{s(2s-1)} \ket{z,s-2}
\right)
\right|^2
\\
&=
(1-\frac{s\theta^2}{2}) \rho_s^2
\\
&
\quad
+\frac{\theta^2}{2} \sqrt{s(2s-1)}\rho_{s-2} \rho_s \cos(2\phi \! - \!  \alpha_{s-2} \! + \! \alpha_s) 
\\
&=
H(z)-\frac{\theta^2}{2} \Delta
\, ,
\end{align*}
where
\begin{equation*}
\Delta 
\equiv
s \rho_s^2
- \sqrt{s(2s-1)} \rho_{s-2} \rho_s \cos(2\phi - \alpha_{s-2} + \alpha_s)
\, ,
\end{equation*}
and  $\rho_{s-1}=0$ was used.  In order for $H_\Psi$ to have a local maximum, it is necessary for 
$\Delta$ to be positive for all $\phi$.
On the other hand, the minimum value of $\Delta$ is obtained when $2\phi-\alpha_{s-2}+\alpha_s=2k\pi$, $k \in \mathbb{Z}$, 
and for that minimum to be positive it must hold 
\begin{align*}
\sqrt{s}\rho_s >\sqrt{2s-1}\rho_{s-2}
\, ,
\end{align*}
which proves the second case of the theorem. If the previous inequality is reversed the minimum value of $\Delta$ will be negative. Given that its maximal value is evidently positive,  we have a saddle point, and the stated principal directions follow easily. This concludes the last case of the proof.
\end{proof}
It has been shown in~\cite{Man.Cou.Kel.Mil:14} that a spin-$s$ state $\ket{\Psi}$, with maximal star degeneracy less than $\lfloor(N+1)/2\rfloor$, can be written as a linear combination of at most $\lfloor(N+1)/2\rfloor$ SC states, which depend on $\ket{\Psi}$ ($\lfloor \cdot \rfloor$ denotes integer part) --- it would be interesting to explore the relation between that decomposition and ours here, equation~(\ref{expansionadapt}). Note that the moduli of the expansion coefficients $\braket{c^i}{\Psi}$ in that equation are invariant under rotations of $\ket{\Psi}$ --- whether they provide coordinates in the quotient (shape) space $\Ps /SU(2)$ is an open question that we plan on addressing elsewhere.
%
%
%
%
%
%

\subsection{Examples of adapted bases}

%
%
%
%
%
Denote by  $n_i$, $i=1,\ldots,N$, the stars  of the state $\ket{\Psi}$ and by $\zeta_i$ their projections in the complex plane. Similarly, denote by $c_k$,  $k=0,\ldots, N$, the stars of an arbitrary SC  basis, and by $\gamma_k$ their complex projections. Finally, denote by $\alpha_k$, $k=0, \ldots, N$ the expansion coefficients  in (\ref{SCcoefficients}). 
\subsubsection{Adapted SC basis for spin 1/2}
Any two distinct SC  states form a basis in the Hilbert space $\Hs^2$. For $\ket{\Psi}=\ket{\hat n}$ the adapted SC basis is 
$(\ket{\hat c_0}, \ket{\hat c_1})=(\ket{\! \! - \! \! \hat n}$, $\ket{\hat n})$ and the corresponding expansion coefficients are trivially $(\alpha_0,\alpha_1)=(0,1)$.
\subsubsection{Adapted SC basis for spin 1}
For $s=1$, any set of 3 SC states forms a basis in $\Hs^3$. Relations~(\ref{s.van}) become
\begin{equation}
\label{exps1}
\begin{split}
\big(1+ \zeta_1  \bar{\zeta}_1 & \big)^{1/2}
\big(1+ \zeta_2 \bar{\zeta}_2 \big)^{1/2}
\left( 
\begin{array}{c}
\frac{\alpha^0}{1+ \gamma_0 \bar{\gamma}_0}
\\
\frac{\phantom{\rule{0ex}{2.5ex}} \alpha^1}{1+ \gamma_1 \bar{\gamma}_1}
\\
\frac{\phantom{\rule{0ex}{2.5ex}} \alpha^2}{1+ \gamma_2 \bar{\gamma}_2}
\end{array}
\right)
\\
&
\equiv
\left( 
\begin{array}{c}
\tilde{\alpha}^0
\\
\phantom{\rule{0ex}{2.8ex}}
\tilde{\alpha}^1
\\
\phantom{\rule{0ex}{2.8ex}}
\tilde{\alpha}^2
\end{array}
\right)
=
\left(
\begin{array}{c}
\frac{s_{11}s_{22}+s_{12}s_{21}}{2\gamma_{01}\gamma_{02}}
\\
\phantom{\rule{0ex}{2.8ex}}
\frac{ s_{10}s_{22}+s_{12}s_{20}}{2\gamma_{10}\gamma_{12}}
\\
\phantom{\rule{0ex}{2.8ex}}
\frac{ s_{10} s_{21}+s_{11}s_{20}}{2\gamma_{20}\gamma_{21}}
\end{array}
\right) 
\end{split}
\, ,
\end{equation}
where $s_{ij} \equiv \zeta_i-\gamma_j$, and $\gamma_{ij}\equiv \gamma_i-\gamma_j$. We orient the constellation of $\ket{\Psi}$ so that the two stars are in the $x$-$y$ plane, bisected by the $x$ axis. Then, $\zeta_1 = \gamma_1= e^{i \phi}$ and $\zeta_2 = \gamma_2 = e^{-i \phi}$, with $0 < \phi < \pi /2$. The SC state closest  to $\ket{\Psi}$ has its star at $x$, so $\gamma_0=-1$. Relations (\ref{exps1}) give
\begin{equation}
\label{N2.fact}
\left( 
\begin{array}{c}
\alpha^0
\\
\alpha^1
\\
\alpha^2
\end{array}
\right)
=
\left(
\begin{array}{c}
1- \cos \phi
\\
e^{-i \phi}/2
\\
e^{i \phi}/2
\end{array}
\right) 
\, .
\end{equation}
%
%
%
%
%
%
%
%
%
%
%
%
%
\subsubsection{Adapted SC basis for spin 3/2}
Our last example is a state with $s=3/2$. For the general case, the (tilded) expansion coefficients are
\begin{equation}
\left( 
\begin{array}{c}
\tilde{\alpha}^0
\\
\phantom{\rule{0ex}{2.8ex}}
\tilde{\alpha}^1 
\\
\phantom{\rule{0ex}{2.8ex}}
\tilde{\alpha}^2
\\
\phantom{\rule{0ex}{2.8ex}}
\tilde{\alpha}^3
\end{array}
\right)
=
\left(
\begin{array}{c}
\frac{
s_{11} s_{22} s_{33}
+
s_{12} s_{23} s_{31}
+
s_{13} s_{21} s_{32}
}{
3 \gamma_{01} \gamma_{02} \gamma_{03}
}
\\
\phantom{\rule{0ex}{2.8ex}}
\frac{ 
s_{10} s_{22} s_{33} 
+
s_{12} s_{23} s_{30}
+
s_{13} s_{20} s_{32}
}{
3\gamma_{10} \gamma_{12} \gamma_{13}
}
\\
\phantom{\rule{0ex}{2.8ex}}
\frac{
s_{10} s_{21} s_{33}
+
s_{11} s_{23} s_{30}
+
s_{13} s_{20} s_{31}
}{
3\gamma_{20} \gamma_{21} \gamma_{23}
}
\\
\phantom{\rule{0ex}{2.8ex}}
\frac{
s_{10} s_{21} s_{32}
+
s_{11} s_{22} s_{30}
+
s_{12} s_{20} s_{31}
}{
3 \gamma_{30} \gamma_{31} \gamma_{32}
}
\end{array}
\right) \, ,
\end{equation}
where 
\begin{equation}
\left( 
\begin{array}{c}
\tilde{\alpha}^0
\\
\phantom{\rule{0ex}{2.8ex}}
\tilde{\alpha}^1
\\
\phantom{\rule{0ex}{2.8ex}}
\tilde{\alpha}^2
\\
\phantom{\rule{0ex}{2.8ex}}
\tilde{\alpha}^3
\end{array}
\right)
\equiv
\left(
\prod_{i=1}^3
\left(1+ \zeta_i \bar{\zeta}_i \right)^{1/2}
\right)
\left( 
\begin{array}{c}
\frac{\alpha^0}{(1+ \gamma_0 \bar{\gamma}_0)^{3/2}}
\\
\phantom{\rule{0ex}{2.8ex}}
\frac{\alpha^1}{(1+ \gamma_1 \bar{\gamma}_1)^{3/2}}
\\
\phantom{\rule{0ex}{2.8ex}}
\frac{\alpha^2}{(1+ \gamma_2 \bar{\gamma}_2)^{3/2}}
\\
\phantom{\rule{0ex}{2.8ex}}
\frac{\alpha^3}{(1+ \gamma_3 \bar{\gamma}_3)^{3/2}}
\end{array}
\right)
 \, ,
\end{equation}
and with the associated SC basis, $\{\gamma_0, \gamma_i=\zeta_i \}_{i=1}^N$, they reduce to
\begin{equation}
\left( 
\begin{array}{c}
\tilde{\alpha}^0
\\
\phantom{\rule{0ex}{2.8ex}}
\tilde{\alpha}^1 
\\
\phantom{\rule{0ex}{2.8ex}}
\tilde{\alpha}^2
\\
\phantom{\rule{0ex}{2.8ex}}
\tilde{\alpha}^3
\end{array}
\right)
=
\left(
\begin{array}{c}
0
\\
\phantom{\rule{0ex}{2.8ex}}
\frac{-(\gamma_2-\gamma_3)^2}{3(\gamma_1 - \gamma_2)(\gamma_1 - \gamma_3)}
\\
\phantom{\rule{0ex}{2.8ex}}
\frac{-(\gamma_3-\gamma_1)^2}{3(\gamma_2 - \gamma_1)(\gamma_2 - \gamma_3)}
\\
\phantom{\rule{0ex}{2.8ex}}
\frac{-(\gamma_1-\gamma_2)^2}{3(\gamma_3 - \gamma_1)(\gamma_3 - \gamma_2)}
\end{array}
\right)
\, ,
\end{equation}
independent of the choice of $\gamma_0$.
Note that $\alpha^0 = 0$, and $\alpha^1, \alpha^2, \alpha^3$ do not depend of $\gamma_0$. This result generalizes to any half-integer spin state, as the following proposition asserts, and originates in the fact that, for such states, $\bra{\Psi} T\ket{\Psi}=0$, where $T$ is the time-reversal operator, that acts like the antipode map on constellations.
\begin{prop}
The expectation value of the time-reversal operator $T$ in a half-integer spin state vanishes. 
\end{prop}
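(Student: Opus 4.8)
The plan is to prove $\bra{\Psi}T\ket{\Psi}=0$ for any half-integer spin state by exploiting the defining property of the time-reversal operator, namely that $T$ is antiunitary with $T^2=-\mathds{1}$ when $s$ is half-integer (Kramers' theorem). I would first recall that $T$ acts antiunitarily, so that for any two states $\ket{\phi}$, $\ket{\chi}$ one has $\braket{T\phi}{T\chi}=\overline{\braket{\phi}{\chi}}=\braket{\chi}{\phi}$. Applying this with $\ket{\phi}=\ket{\Psi}$ and $\ket{\chi}=T\ket{\Psi}$ gives $\braket{T\Psi}{T^2\Psi}=\braket{T\Psi}{\Psi}$. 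Using $T^2=-\mathds{1}$ on the left side turns it into $-\braket{T\Psi}{\Psi}$, so that $\braket{T\Psi}{\Psi}=-\braket{T\Psi}{\Psi}$, whence $\bra{\Psi}T\ket{\Psi}=\braket{T\Psi}{\Psi}^{*}=0$. (One must be slightly careful with the convention for where the antilinearity sits in $\bra{\Psi}T\ket{\Psi}$, but any consistent choice gives the same vanishing conclusion.)

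The one ingredient that deserves justification is $T^2=-\mathds{1}$ for half-integer spin. I would note that on a spin-$s$ irreducible representation the time-reversal operator can be written $T=e^{-i\pi S_y}K$, where $K$ denotes complex conjugation in the $S_z$-eigenbasis; then $T^2=e^{-i\pi S_y}\,\overline{e^{-i\pi S_y}}=e^{-i\pi S_y}e^{+i\pi S_y \cdot(\mp)}$ — more cleanly, $T^2=e^{-2\pi i S_y}=(-1)^{2s}\mathds{1}$, which equals $-\mathds{1}$ precisely when $2s$ is odd, i.e.\ for half-integer spin. This is exactly the statement, alluded to in the text, that $T$ implements the antipodal map on Majorana constellations (since $e^{-i\pi S_y}$ rotates by $\pi$ about $y$ and $K$ reflects), and the sign $(-1)^{2s}$ is the familiar $2\pi$-rotation phase.

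There is no real obstacle here; the proof is a two-line consequence of antiunitarity plus $T^2=-\mathds{1}$. The only thing to be careful about is bookkeeping of complex conjugates in the antiunitary inner-product identity, so I would state that identity explicitly at the outset and then simply substitute. If desired, I could also give the alternative, more pedestrian argument: expand $\ket{\Psi}=\sum_m c_m\ket{s,m}$ in the $S_z$-basis, use $T\ket{s,m}=(-1)^{s-m}\ket{s,-m}$ (in a standard convention), so that $\bra{\Psi}T\ket{\Psi}=\sum_m (-1)^{s-m}\,c_m c_{-m}$ — wait, with the antilinear slot one gets $\sum_m (-1)^{s-m} c_m c_{-m}$ or $\sum_m(-1)^{s-m}\bar c_m \bar c_{-m}$ depending on convention — and in either case the sum is antisymmetric under $m\to -m$ because $(-1)^{s-(-m)}=(-1)^{s+m}=-(-1)^{s-m}$ for half-integer $s$, so pairing the $m$ and $-m$ terms (the $m=0$ term being absent for half-integer $s$) shows the sum vanishes. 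I would present the antiunitarity argument as the main proof for its brevity and conceptual transparency, perhaps mentioning the Kramers-degeneracy interpretation as a remark.
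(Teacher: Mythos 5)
Your proof is correct and takes essentially the same route as the paper's: antiunitarity of $T$ together with $T^2=(-1)^{2s}\mathds{1}$ forces $\braket{T\Psi}{\Psi}$ to equal its own negative when $2s$ is odd, hence $\bra{\Psi}T\ket{\Psi}=0$. The only cosmetic difference is in how $T^2=-\mathds{1}$ is established --- the paper takes the tensor power of $-i\sigma_y K$ acting on the symmetrized product (Majorana) representation to get $(-1)^N$, while you work directly in the spin-$s$ irrep with $T=e^{-i\pi S_y}K$ and the $2\pi$-rotation phase.
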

\begin{proof}
For $s=1/2$, $T=-i \sigma_y K$, where $K$ is the complex conjugate operator --- for higher spins, $T$ is just the tensorial power of this expression. It is easily seen that $T^2 \ket{n_1 , \dots , n_N} = (-1)^N \ket{n_1 , \dots , n_N}$, and $T$ is antiunitary, $\left( T \ket{\Psi_1} , T \ket{\Psi_2}  \right) = \left( \ket{\Psi_2} , \ket{\Psi_1}  \right) $, where we denote the inner product between two states as $(\cdot , \cdot) $. With these properties of $T$ in mind, we compute $(-1)^N (\ket{\Psi}, T \ket{\Psi}) = (T^2 \ket{\Psi}, T \ket{\Psi}) = ( \ket{\Psi}, T \ket{\Psi})$, and therefore, for $N=2s$ odd, $ (\ket{\Psi}, T \ket{\Psi}) =0$. 
\end{proof}
In our case,  for $\ket{\Psi}=\ket{n_1,\ldots,n_N}$, we have $\ket{c^0}\propto \ket{- \! n_1,\ldots,-n_N}$, so that $\alpha^0\propto \braket{c^0}{\Psi}=0$ for $s$ half-integer ($N$ odd).
%
%
%
%
%
%
%
%
%
%
%

\section{Geometrical Aspects of the Spin Coherent Sphere}

%
%
%
%
%
%
\subsection{A Dali 2-sphere}
In this section we study how the 2-sphere of SC states $\SSC$ is immersed in the projective Hilbert space of spin-$s$ states $\Ps$. The restriction of 
the Fubini-Study metric to $S^2_{\text{SC}}$ renders the latter isometric to a euclidean ``round'' 2-sphere. The question we pose is how many independent directions does $\SSC$ explore in $\Ps$? We find that, just like Dali's iconic clocks (topological discs) cannot be contained in any 2-plane, the spin coherent 2-sphere extends in all available directions in $\Ps$. The precise statement is the following
\begin{theorem}
\label{thm:maxdim}
Consider an arbitrary state $\pket{\Psi}$ in $\Ps$ and let $\exp_\Psi$ be the exponential map from $T_{\pket{\Psi}} \Ps$, the tangent space at $\pket{\Psi}$,
to $\Ps$. The inverse image of $\SSC$ under this map, $\log_\Psi(\SSC)$, is of maximal dimension in $T_{\pket{\Psi}} \Ps$.
\end{theorem}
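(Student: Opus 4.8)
The plan is to establish the sharper statement that the \emph{real-linear span} of $\log_\Psi(\SSC)$ is the whole tangent space $T_{\pket{\Psi}}\Ps$, which has real dimension $2N$; since the continuous image of a $2$-sphere has topological dimension at most $2$, this is the only sensible reading of ``maximal dimension'' here. Observe first that $\exp_\Psi^{-1}(\SSC)$ contains the image of $\SSC$ under the ordinary (single-valued, minimizing) Riemannian logarithm, which is defined off the cut locus of $\pket{\Psi}$ --- this meets $\SSC$ in at most $N$ points, the SC states $\pket{n}$ with $\braket{n}{\Psi}=0$ --- so it suffices to show that this smaller set spans. I would work in the standard identification $T_{\pket{\Psi}}\Ps\cong\ket{\Psi}^{\perp}\subset\Hs^{N+1}$, under which, for $\pket{\Phi}$ off the cut locus, $\log_\Psi\pket{\Phi}$ is a positive real multiple of $e^{-i\arg\braket{\Psi}{\Phi}}\,P_\perp\ket{\Phi}$, with $P_\perp=\mathds{1}-\ket{\Psi}\bra{\Psi}$ and $\ket{\Phi}$ any lift (the phase merely selects the lift with $\braket{\Psi}{\Phi}>0$, \ie, the one realizing the minimizing geodesic); the overall metric normalization is immaterial, as it only rescales $\log_\Psi$.

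Now the span equals $\ket{\Psi}^{\perp}$ iff no nonzero real-linear functional annihilates $\log_\Psi(\SSC)$; every such functional has the form $\ket{\chi}\mapsto\Re\braket{\phi}{\chi}$ for a unique $\ket{\phi}\in\ket{\Psi}^{\perp}$, so I must show that $\Re\braket{\phi}{\log_\Psi\pket{n}}=0$ for all $n\in S^2$ forces $\ket{\phi}=0$. Because $\braket{\phi}{\Psi}=0$, one has $\braket{\phi}{\log_\Psi\pket{n}}=(\text{positive real})\cdot e^{-i\arg\braket{\Psi}{n}}\braket{\phi}{n}$. Parametrising $\SSC$ by the stereographic coordinate $z\in\mathbb{C}$ and using the well-known fact that the normalized SC state is $(1+|z|^{2})^{-s}$ times an $\Hs^{N+1}$-valued polynomial in $z$, both $\braket{\Psi}{n}$ and $\braket{\phi}{n}$ become the common positive factor $(1+|z|^{2})^{-N/2}$ times holomorphic polynomials $Q_\Psi(z)$, $Q_\phi(z)$, where $\ket{\Psi}\mapsto Q_\Psi$ is a conjugate-linear bijection onto polynomials of degree $\le N$; in particular $Q_\Psi\not\equiv 0$ since $\ket{\Psi}\neq 0$, and the linear independence underlying this bijection is precisely the content of Theorem~\ref{SCbasis_thm}. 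The phase modulus and all positive prefactors cancel, and on the dense open set where $Q_\Psi\neq 0$ --- hence, by continuity, everywhere on $\mathbb{C}$ --- the condition collapses to $\Re\!\bigl(\overline{Q_\Psi(z)}\,Q_\phi(z)\bigr)=0$.

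The endgame is the elementary lemma: if $Q_\Psi\not\equiv 0$ and $\overline{Q_\Psi}\,Q_\phi$ is purely imaginary on all of $\mathbb{C}$, then $Q_\phi=ic\,Q_\Psi$ for some $c\in\mathbb{R}$. Indeed, on the complement of the finite zero set of $Q_\Psi$ (an open connected set) the ratio $Q_\phi/Q_\Psi$ is holomorphic with values in the line $i\mathbb{R}$, which has empty interior, so the open mapping theorem forces it to equal a constant $ic$; then $Q_\phi=ic\,Q_\Psi$ by polynomial continuation. Hence $Q_\phi\propto Q_\Psi$, so $\ket{\phi}\propto\ket{\Psi}$, and the only multiple of $\ket{\Psi}$ lying in $\ket{\Psi}^{\perp}$ is $0$ --- the desired contradiction. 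I expect the only delicate bookkeeping to be tracking the phase $e^{-i\arg\braket{\Psi}{n}}$ so that it combines with the common $(1+|z|^{2})^{-N/2}$ factor into the clean condition above; conceptually, the point worth stressing is that the \emph{complex} span of the vectors $P_\perp\ket{n}$ is already all of $\ket{\Psi}^{\perp}$ by Theorem~\ref{SCbasis_thm}, so the present theorem asserts that the \emph{real} span survives the position-dependent phase twist, and it is exactly this that the open mapping lemma secures.
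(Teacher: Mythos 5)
Your argument, as written, proves only that the \emph{real-linear span} of $\log_\Psi(\SSC)$ is all of $T_{\pket{\Psi}}\Ps$, i.e.\ that the set is not contained in any real hyperplane \emph{through the origin}. But the theorem — and the paper's proof makes this explicit — asserts more: $\log_\Psi(\SSC)$ is not contained in any proper \emph{affine} subspace, so the negation yields a tangent vector $X$ with $\frac{1}{2}\Tr(v_n X)=\lambda$ constant, where $\lambda$ need not vanish. A surface lying in an affine hyperplane off the origin can perfectly well have full linear span, so your dismissal of the affine reading as not "sensible" is a misreading, and your proof covers only the $\lambda=0$ case. The $\lambda\neq 0$ case genuinely resists your method: with your normalizations the constancy condition reads $\frac{\omega}{\sin\omega}\,(1+|z|^2)^{-N/2}\,\Re\bigl(\overline{Q_\Psi(z)}\,Q_\phi(z)\bigr)/|Q_\Psi(z)|=\lambda$, where $\cos\omega=(1+|z|^2)^{-N/2}|Q_\Psi(z)|$, and the transcendental factor $\omega/\sin\omega$ destroys the polynomial structure on which your open-mapping lemma relies. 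The paper disposes of this case by a different device: multiply by $\braket{n}{\Psi}\geq 0$, integrate against the SC resolution of the identity, and play $\braket{\Psi}{\chi}=0$ off against the strict positivity (almost everywhere) of the resulting integrand. Alternatively, you could rescue your own scheme by \emph{not} discarding the cut locus, which you explicitly threw away at the start: each SC state $\pket{-n_i}$ orthogonal to $\ket{\Psi}$ has as its full $\exp_\Psi$-preimage a circle of radius $\pi/2$ centered at the origin of $T_{\pket{\Psi}}\Ps$, symmetric under $v\mapsto -v$; any affine hyperplane containing $\log_\Psi(\SSC)$ must contain this circle, hence pass through the origin, forcing $\lambda=0$ and reducing matters to the case you did treat.

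For that $\lambda=0$ case your route is correct and genuinely different from the paper's: where the paper argues that the real function $-i\braket{n}{\chi}$ cannot change sign and then invokes the SC completeness relation, you convert the condition into the polynomial identity $\Re\bigl(\overline{Q_\Psi}Q_\phi\bigr)\equiv 0$ and note that a holomorphic function valued in $i\mathbb{R}$ is constant, which pins down the annihilator exactly as the phase direction $i\ket{\Psi}$ (correctly excluded by $\braket{\phi}{\Psi}=0$). That is a clean, self-contained complex-analytic alternative for the linear case; but until the affine ($\lambda\neq 0$) possibility is closed by one of the routes above, the proposal establishes a strictly weaker statement than the theorem.
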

\begin{proof}
We represent $\pket{\Psi} \in \Ps$ by the density matrix $\rho_\Psi=\ket{\Psi}\bra{\Psi}$. Then, a tangent vector in 
$T_{\pket{\Psi}} \Ps$ is represented by the matrix $\ket{\Psi} \bra{\varphi}+\ket{\varphi}\bra{\Psi}$ for some $\ket{\varphi} \in \Hs$ satisfying $\braket{\varphi}{\Psi}=0$.  Suppose that $\log(\SSC)$ is contained in an affine subspace of $T_{\pket{\Psi}} \Ps$ of real dimension lower than 
$4s$.  Then there exists a tangent vector $X$ in $T_{\pket{\Psi}} \Ps$,
\begin{equation}
X=\ket{\Psi}\bra{\chi}+ \ket{\chi}\bra{\Psi}\,,\quad \text{with $\braket{\chi}{\Psi}$=0}
\, ,
\label{eq:Xpsi}
\end{equation}
 such that the inner product between $X$ and $v_n\equiv \log_\Psi \rho_n$, $\rho_n \equiv \ket{n}\bra{n}$,
is constant, say, equal to $\lambda$, for all SC states $\pket{n}$.
Using the explicit expression for $v_n$ in  eqs. (\ref{rhodt0}) and (\ref{logrhon}) below, we  find
\begin{align}
\lambda&=\frac{1}{2}\Tr( v_n X) \nonumber \\
&= \frac{\omega_{n\Psi}}{2 \sin \omega_{n\Psi}} \left( e^{i \eta_{n\Psi}} \braket{\chi}{n}+e^{-i \eta_{n\Psi}} \braket{n}{\chi}\right)
\,,
\label{eq:XprodCoherent}
\end{align}
where  $\braket{n}{\Psi} \equiv \cos \omega_{n\Psi} e^{i \eta_{n\Psi}}$, $\omega_{n\Psi} \in [0,\pi/2]$.
The condition $\eta_{n\Psi}=0$ fixes the phase of all $\ket{n}$, except for the isolated points where $\braket{n}{\Psi}=0$ --- since this latter set is of measure zero, it does not affect our argument below. From~(\ref{eq:XprodCoherent}) we find
\begin{equation}
\Re\braket{\chi}{n}= 
\frac{\sin \omega_{n\Psi}}{\omega_{n\Psi}} \lambda
\, ,
\label{eq:nchireal}
\end{equation}
where $\Re$ denotes real part.
Now we will consider two cases, $\lambda=0$ and $\lambda \neq 0$.
If $\lambda=0$ then, by equation (\ref{eq:nchireal}), $\braket{n}{\chi}$ is imaginary,
\begin{equation}
\braket{n}{\chi}= \cos \omega_{n \chi} e^{i \eta_{n \chi}}=\pm i \cos \omega_{n \chi}
\,.
\end{equation}
The crucial observation at this point is that the real function $f(n)=-i \braket{n}{\chi}=\pm  \cos \omega_{\eta \chi}$, where one of the two possible signs is chosen, cannot change sign on the sphere, since it only has a finite number of isolated zeros. Indeed, assuming that $f$ takes both positive and negative values, one may always choose a curve on the sphere that connects the corresponding points, without passing through any of the isolated zeros of $f$, leading to \emph{absurdum}, as, by the intermediate value theorem, $f$ must have a zero in a certain point of the curve. Having established this fact about $f$, we use the completeness relation for $\ket{n}$ to arrive at
\begin{align}
0
&=
\braket{\Psi}{\chi}
=
\frac{2s+1}{4\pi}\int \braket{\Psi}{n}\braket{n}{\chi} d\Omega
\nonumber\\
&=
\pm i \frac{2s+1}{4\pi}\int \cos \omega_{n\Psi}\cos  \omega_{n\chi} d\Omega
\,,
\label{eq:zeroBraketPsiChi}
\end{align}
which is not possible, since the integrand is non-negative.

If $\lambda\neq 0$, equation (\ref{eq:nchireal}), and the fact  that $\braket{n}{\Psi}$ is real, imply
\begin{equation*}
\Re(\braket{\chi}{n}) \braket{n}{\Psi} 
= 
\frac{\cos \omega_{n\Psi} \sin \omega_{n\Psi}}{\omega_{n\Psi}} \lambda 
\, ,
\end{equation*}
so that
\begin{equation*}
\int \Re(\braket{\chi}{n}) \braket{n}{\Psi} d\Omega 
= 
\lambda \int\frac{\cos \omega_{n\Psi} \sin \omega_{n\Psi}}{\omega_{n\Psi}} d\Omega
\,.
\end{equation*}
By the same argument that lead to (\ref{eq:zeroBraketPsiChi}), the left hand side is zero, but
the  integrand in the right hand side is positive almost everywhere, leading again to  \emph{absurdum}, which shows that 
 such $X$ does not exist, and the proof is complete.
  \end{proof}
\subsection{Closest SC states}
\label{CSCs}
Given a state $\pket{\Psi}$, we would like to know which SC states are closest to it, and which ones are furthest away. We may think of this question in the following terms: consider a geodesic $(4s-1)$-sphere of radius $r$, $S_r$, centered at $\pket{\Psi}$, \ie, the locus of points in $\Ps$ that are a fixed geodesic distance $r$ from $\pket{\Psi}$ . The intersection points of $S_r$ with $\SSC$ give those SC states that are at a distance $r$ from $\pket{\Psi}$. For $\pket{\Psi}$ non-SC, and $r$ sufficiently small, the intersection is null. As $r$ increases, it reaches a critical value $r_\text{c}$ at which $S_{r_\text{c}}$ just touches $\SSC$ at, generically, a single point $\pket{n_0}$. The value of $r_\text{c}$ is the geometrical measure of entanglement of $\pket{\Psi}$ \cite{Bro.Hug:00}. For $r > r_\text{c}$, the intersection is one-dimensional, consisting, generically,  of the union of topological circles. When $r$ reaches its maximal value $\pi/2$, $\SSC$ is tangent to $S_{\pi/2}$ ``from the inside'', touching it at exactly $N$ points, which are the SC states antipodal to the stars (assumed distinct) of $\pket{\Psi}$ --- the collection of these states, lifted arbitrarily in $\Hs$, forms a basis of the orthogonal complement of $\ket{\Psi}$ in $\Hs$.  
\begin{remark}
 The set of states such that the closest SC state is not unique is of measure zero. In fact, this set is at most of dimension $4s-1$. 
\end{remark}
\begin{proof}
 Consider two distinct SC states $\pket{c_1}$ and $\pket{c_2}$ and let $\pket{\Psi}$ be any state such that  $\pket{c_1}$ and $\pket{c_2}$ are both the 
 closest SC states of $\pket{\Psi}$. As was shown in (\ref{Hus.f}) this implies that $\braket{c_1,s-1}{\Psi}=\braket{c_2,s-1}{\Psi}=0$. These are two complex
 equations so that the locus of states that satisfy them has real dimension $4s-4$. Since they must also satisfy the condition 
 $|\braket{c_1}{\Psi}|=|\braket{c_2}{\Psi}|$ for
them to be equidistant, the dimension of all the states whose closest SC state are $\pket{c_1}$ and $\pket{c_2}$ is at most $4s-5$. Finally note that the space 
of the pair of SC states
$\pket{c_1}$ and $\pket{c_2}$ is of dimension $4$. Because of these observations, the space of states where the closest SC state is not unique is of dimension at most $4s-1$ as
claimed. 
\end{proof}
\begin{figure}
\centering
\includegraphics[scale=0.3]{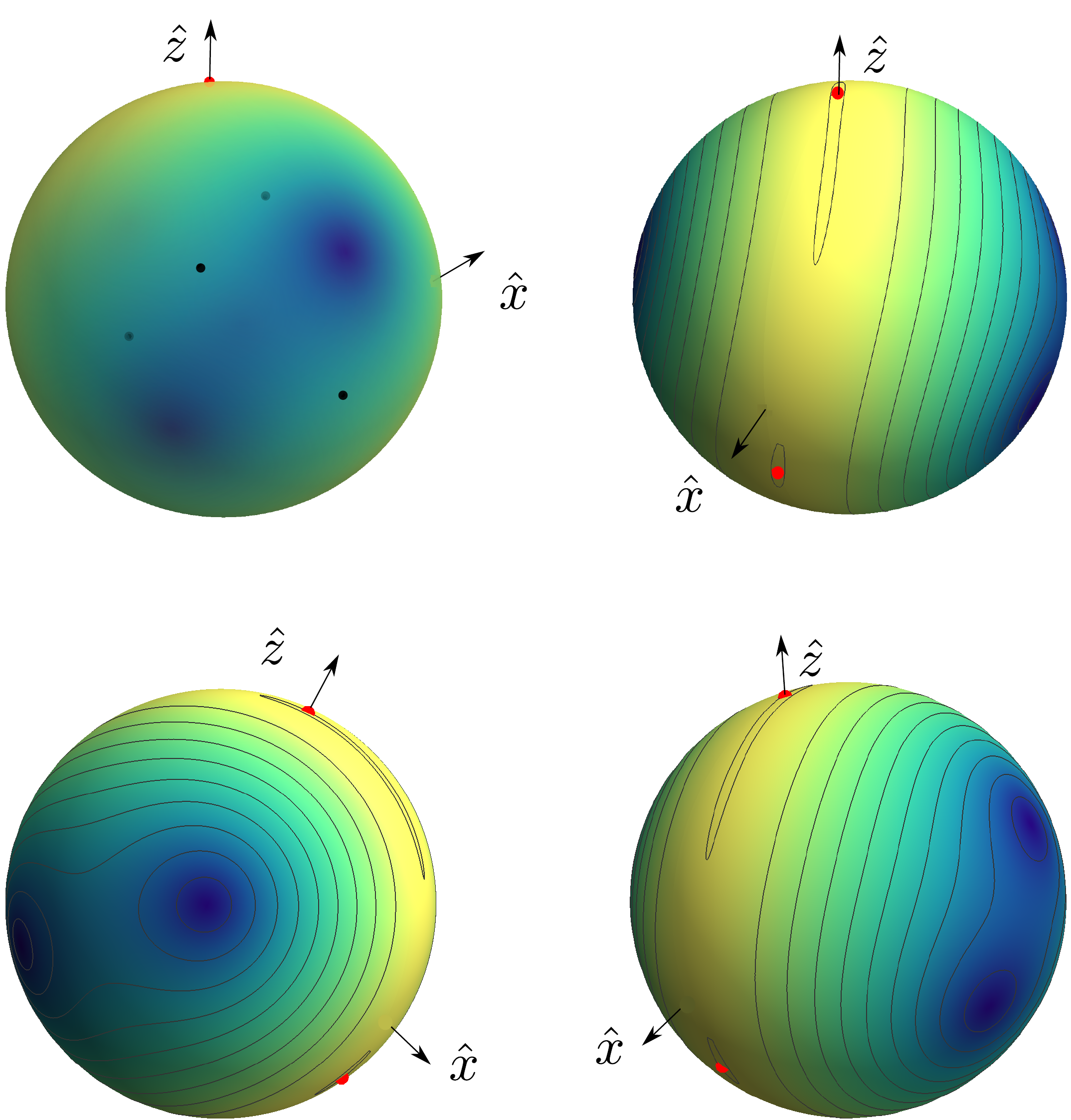}
\\
\vspace{3ex}
\includegraphics[scale=0.2]{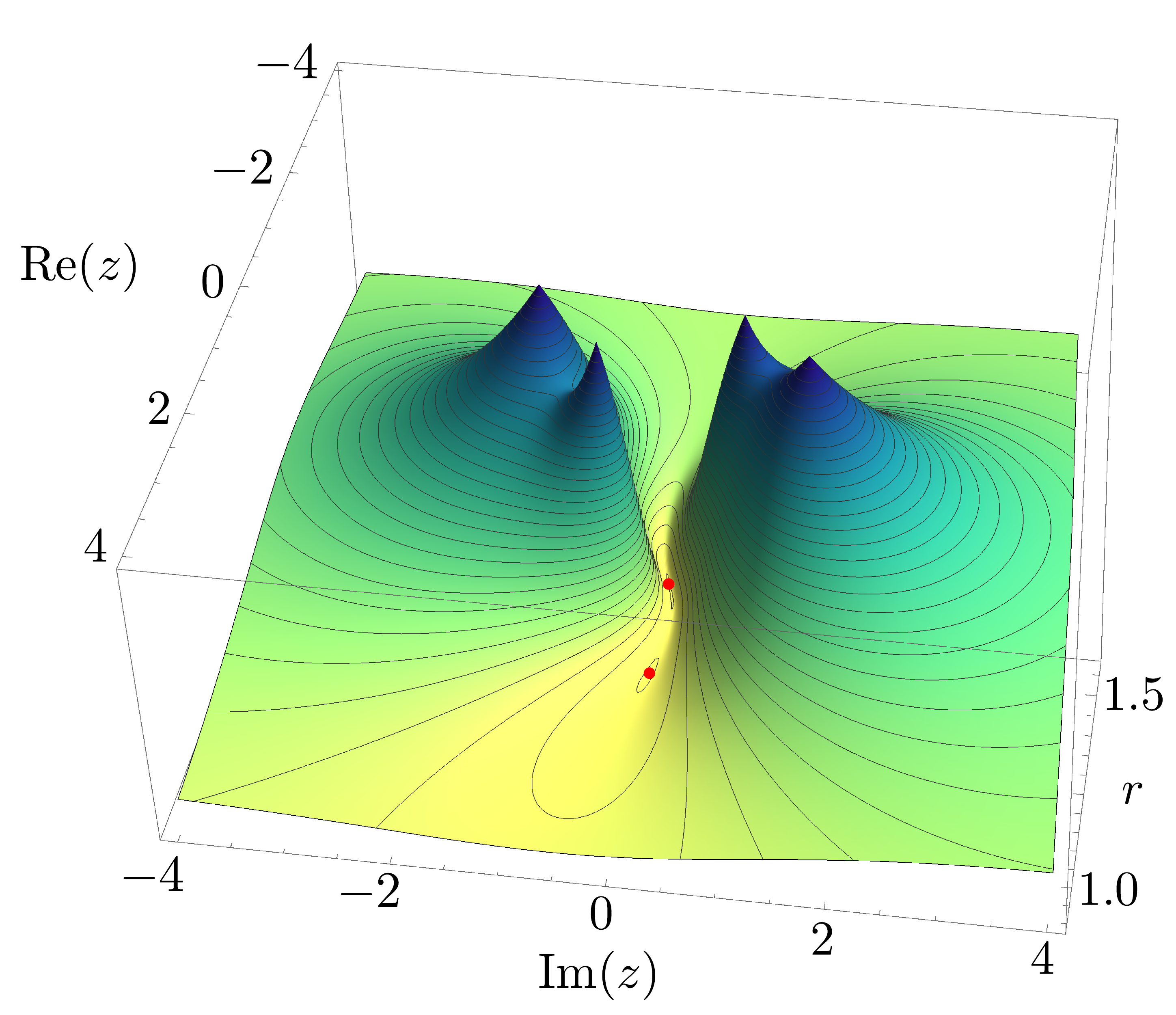}
\caption{\textbf{Top left sphere}: constellation of the spin-2 state 
$\pket {\Psi} \approx (0.634, 0, 0.417 + 0.292 i, 0.053 + 0.048 i,  0.553 + 0.167 i)$, which has two closest SC states. 
\textbf{Top right and middle two spheres:} Plots of $\SSC$, from different viewpoints, with level curves of the Fubini-Study distance to $\pket{\Psi}$. Warmer (online) colors correspond to shorter distances. 
The two red points denote the two closest SC states to $\pket{\Psi}$.
\textbf{Bottom plot:} The above distance function, plotted over the stereographic projection of $\SSC$ on the complex plane. The conical maxima correspond to the directions antipodal to the stars of $\pket{\Psi}$.}
\label{fig:husimiPlot2cpp}
\end{figure}
For $s=1$, there are no states with more than one closest SC state, except for those whose stars are antipodal --- in this latter case the closest SC states form a great circle in the plane that bisects perpendicularly
the diameter connecting the antipodal points. For $s=3/2$ all states with two closest SC states possess a symmetry plane, as is shown below.
For $s=2$ there are states with more than one closest SC states that have no particular symmetry --- an example is shown in figure \ref{fig:husimiPlot2cpp}. 
\begin{theorem}
Let $\pket{\Psi}$ be a spin-$3/2$ state with two closest SC states. Then the constellation associated to $\ket{\Psi}$ is symmetric with respect to
the plane that bisects perpendicularly the segment connecting the stars  of the closest  SC states.
\end{theorem}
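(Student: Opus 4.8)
The plan is to use the Husimi criterion for closest SC states, equation~(\ref{Hus.f}), together with the low‑dimensionality of the spin‑$3/2$ state space, to force a reflection symmetry. Write $N=2s=3$ and let $\pket{c_1}$, $\pket{c_2}$ be the two closest SC states, with directions $c_1,c_2$ on the sphere. Choose the reference frame so that $c_1$ and $c_2$ are symmetric with respect to the $x$‑$z$ plane, say $c_{1,2}$ having azimuthal angles $\pm\phi$ and common polar angle $\theta$; equivalently, place them so that the perpendicular bisector plane of the chord $c_1c_2$ is a fixed coordinate plane. The goal is then to show that the three stars $\{n_1,n_2,n_3\}$ of $\ket{\Psi}$ are invariant (as an unordered set) under reflection in that plane. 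Reflection in a plane acts on spin states as an antiunitary operator $\Theta=U K$ (a rotation composed with time reversal / complex conjugation), and on constellations exactly as the geometric reflection; so what must be shown is $\Theta\ket{\Psi}\propto\ket{\Psi}$.

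First I would translate the ``two closest SC states'' hypothesis into equations. By~(\ref{Hus.f}), $\ket{c_1}$ and $\ket{c_2}$ closest to $\ket{\Psi}$ means $\braket{c_1,s-1}{\Psi}=\braket{c_2,s-1}{\Psi}=0$, and equidistance means $|\braket{c_1}{\Psi}|=|\braket{c_2}{\Psi}|$. Since $\Theta$ swaps $c_1\leftrightarrow c_2$ (they are mirror images), it maps the state $\ket{c_1,s-1}$ to $\ket{c_2,s-1}$ up to a phase, and likewise $\ket{c_1}\leftrightarrow\ket{c_2}$. Hence the vector $\Theta\ket{\Psi}$ satisfies the \emph{same} three conditions: $\braket{c_1,s-1}{\Theta\Psi}=\overline{\braket{c_2,s-1\,}{\Psi}}\cdot(\text{phase})=0$, similarly for $c_2$, and $|\braket{c_i}{\Theta\Psi}|=|\braket{c_i}{\Psi}|$. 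So both $\ket{\Psi}$ and $\Theta\ket{\Psi}$ lie in the common solution set of $\braket{c_1,s-1}{\cdot}=\braket{c_2,s-1}{\cdot}=0$. For $s=3/2$ the Hilbert space is $4$‑dimensional and the two states $\ket{c_1,s-1}$, $\ket{c_2,s-1}$ are linearly independent (their constellations $\{-c_1,c_1,c_1\}$ and $\{-c_2,c_2,c_2\}$ differ), so their orthogonal complement is exactly a complex $2$‑plane $W\subset\Hs^4$. Thus $\ket{\Psi},\Theta\ket{\Psi}\in W$.

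The next step is to pin down $W$ and show it forces symmetry. The antiunitary $\Theta$ preserves $W$ (since it permutes the two defining conditions), and $\Theta^2=(-1)^N=-1$ on spin‑$3/2$ states, so $\Theta$ restricted to $W$ is an antiunitary with $\Theta^2=-1$ on a $2$‑dimensional complex space — i.e. it is (the analogue of) a quaternionic structure, acting like $j$. For such a $\Theta$, every nonzero $\ket{\Psi}\in W$ has $\braket{\Psi}{\Theta\Psi}=0$ (the Proposition above applied within $W$), meaning $\Theta\ket{\Psi}$ is the unique (up to scale) vector of $W$ orthogonal to $\ket{\Psi}$. Now I bring in the equidistance condition $|\braket{c_1}{\Psi}|=|\braket{c_2}{\Psi}|$: I would show this extra real condition is precisely what selects, within the $S^2$‑worth of lines in $\Ps(W)=\mathbb{C}P^1$, the ``real'' (fixed) locus of the involution $\pket{\Psi}\mapsto\pket{\Theta\Psi}$ on $\mathbb{C}P^1$ — equivalently that $\pket{\Psi}$ and $\pket{\Theta\Psi}$ coincide. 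Concretely: parametrise $W$ by an orthonormal basis $\{\ket{e},\Theta\ket{e}\}$ adapted to the mirror, write $\ket{\Psi}=a\ket{e}+b\,\Theta\ket{e}$, compute $\braket{c_1}{\Psi}$ and $\braket{c_2}{\Psi}=\overline{\braket{c_1}{\Theta\Psi}}$ in terms of $a,b$, and check that $|\braket{c_1}{\Psi}|=|\braket{c_2}{\Psi}|$ forces $|a|=|b|$ (or $b=0$), which is exactly the condition $\pket{\Psi}=\pket{\Theta\Psi}$, i.e. $\ket{\Psi}$ is $\Theta$‑invariant up to phase. Since $\Theta$ reflects constellations in the bisector plane, this is the claimed symmetry.

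The main obstacle I anticipate is the last step: verifying that the equidistance condition is equivalent to $|a|=|b|$, rather than merely implied by it. This requires an honest computation of the matrix elements $\braket{c_i}{\Psi}$ in the adapted basis, using the explicit spin‑$3/2$ coherent states and the explicit action of $\Theta$; one must be careful that the basis $\{\ket{e},\Theta\ket{e}\}$ can be chosen so that $\ket{c_1}$ and $\ket{c_2}$ have ``conjugate'' coordinates in it (this uses that $c_1,c_2$ are $\Theta$‑related and that $\Theta$ is antiunitary with $\Theta^2=-1$ on $W$). A secondary point to handle carefully is the degenerate sub‑case where $\ket{\Psi}$ has coincident stars or where $\ket{c_1,s-1},\ket{c_2,s-1}$ fail to be independent; but coincident stars for a state with two \emph{distinct} closest SC states can be excluded directly, so this is a minor case check rather than a real difficulty.
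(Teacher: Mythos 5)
Your strategy hinges on a structural claim that is false. The operator implementing reflection in the bisector plane on constellations is not time reversal $T$ itself but $\Theta=D(R_\pi^{\hat m})\,T$, a rotation by $\pi$ about the plane's normal composed with time reversal; its square is $\Theta^2=D(R_{2\pi})\,T^2=(-1)^{2s}(-1)^{2s}=+\mathds{1}$ for \emph{every} spin (and this sign is independent of the phase chosen in the lift, since $\Theta$ is antilinear). So on the invariant plane $W$ you construct, $\Theta$ is a \emph{real} structure, not a quaternionic one, and the paper's Proposition --- which concerns $T$, not $\Theta$ --- cannot be invoked to get $\braket{\Psi}{\Theta\Psi}=0$ on $W$. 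The error is not cosmetic: if $\Theta^2$ really were $-1$ on $W$, the induced antiholomorphic involution on $\Ps(W)\cong\mathbb{C}P^1$ would be fixed-point free (a Kramers-type obstruction), so \emph{no} state could satisfy $\pket{\Theta\Psi}=\pket{\Psi}$, i.e.\ the symmetry you are trying to establish would be impossible; correspondingly, in your basis $\{\ket{e},\Theta\ket{e}\}$ with $\Theta^2=-1$, the condition $|a|=|b|$ is \emph{not} equivalent to projective $\Theta$-invariance (writing out $\Theta(a\ket{e}+b\,\Theta\ket{e})=-\bar b\ket{e}+\bar a\,\Theta\ket{e}$, proportionality forces $|a|^2=-|b|^2$).

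Even after repairing the sign ($\Theta^2=+1$, choose a $\Theta$-fixed basis of $W$, projective invariance becomes $a\bar b\in\mathbb{R}$), the decisive implication --- that equidistance $|\braket{c_1}{\Psi}|=|\braket{c_2}{\Psi}|$ \emph{within} $W$ forces $\Theta$-invariance --- is precisely the step you defer, and it does not follow from symmetry bookkeeping alone: the $\Theta$-fixed locus and the equidistance locus are, a priori, two circles in $\Ps(W)$ with one contained in the other, and one must still verify the equidistance locus does not degenerate (e.g.\ to all of $\Ps(W)$). The paper's proof does this work head-on: with the mirror taken to be the $x$-$z$ plane, it imposes the two criticality conditions $\braket{c_i,1/2}{\Psi}=0$ together with $\braket{c_2}{\Psi}=e^{i\gamma}\braket{c_1}{\Psi}$ (equidistance with an explicit relative phase $\gamma$), solves the resulting linear system, and finds that all four components of $\ket{\Psi}$ can be taken real; hence the Majorana polynomial has real coefficients, its roots are real or occur in conjugate pairs, and the constellation is symmetric under reflection in the $x$-$z$ plane. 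Your proposal, as written, replaces that computation with a symmetry argument whose key lemma is incorrect, so the gap is genuine.
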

\begin{proof}
 Suppose, without loss of generality, that the closest equidistant coherent states point in the directions $n_1=(\theta,\phi=\pi/2)$ and $n_2=(\theta, \phi=3\pi/2)$ --- the bisecting plane is then the $x$-$z$ plane. This implies that
\begin{align*}
\braket{n_1,1/2}{\Psi}
&=0
\, ,
\\
\braket{n_2,1/2}{\Psi}
&=0
\, ,
\\
\braket{n_2}{\Psi}
&=e^{i\gamma}\braket{n_1}{\Psi}
\, ,
\end{align*}
with $\gamma$ a real number. Writing $\ket{\Psi}=(A,B,C,D)$ and imposing
the above conditions leads to
\begin{align*}
A
&=
\lambda(1+3 \cos \theta) \cos(\gamma/2) \,,
\\
B
&=
2\lambda \sqrt{3}\sin (\gamma/ 2) \cos ^2(\theta /2) \cot (\theta/2)
\,,
\\
C
&=
-\sqrt{3}\lambda \cos(\gamma/2) (1+\cos \theta)
\,,
\\
D
&=
-\lambda(1-3\cos \theta) \sin (\gamma/ 2) \cot^3(\theta /2)
\,,
\end{align*}
where $\lambda$, which can be taken real, is fixed by the normalization condition on $\ket{\Psi}$. 
The important point here is that  all components of $\ket{\Psi}$ are real, implying that the coefficients of the corresponding Majorana polynomial are also
real. Therefore, all the  roots of the latter are either real or come in conjugate pairs, so that when projected stereographically  onto the sphere, they give rise to a constellation symmetric with respect to the $x$-$z$ plane, as claimed. 
\end{proof}
\subsection{Conical structure of maxima}
A close look at the maxima of the distance $r$ in figure \ref{fig:husimiPlot2cpp} suggests that the level curves around them are, approximately, circles. We can show that this is the case in general:
 consider a state $\ket{\Psi}=\ket{c_1,\dots,c_N}$ and the corresponding Husimi function defined over $\SSC$. 
Note that $H_\Psi(-c_i)=0$, $i=1,\ldots,N$. We assume, without loss of generality,   that a particular $-c_i$ points toward the north pole. This implies that in the expansion of $\ket{\Psi}$ in $S_z$-eigenstates, the maximal projection eigenstate is absent, $\ket{\Psi}=\sum_{k=-N}^{N-1} \braket{z,k}{\Psi}\ket{z,k}$.
Given any nearby SC state $\ket{n}$, characterized by the angles $(\theta,\phi)$, with $\theta \ll 1$, we compute
\begin{align*}
H_\Psi(n)
&=
|\bra{z}  e^{i \theta S_y}  e^{i \phi S_z} \ket{\Psi}|^2
\\
&=
|\bra{z}(\mathds{1} +i \theta S_y)  e^{i \phi S_z}\ket{\Psi}|^2 +O(\theta)^3
\\
&=
\frac{1}{16} |2\sqrt{2s} \theta e^{i \phi (s-1)}  \braket{z,s-1}{\Psi}|^2+O(\theta^3)
\\
&=
\frac{s}{2}	|\braket{z,s-1}{\Psi}|^2 \theta^2+O(\theta^3) 
\, ,
\end{align*}
where we used the expression of $\ket{\Psi}$ in terms of the eigenstates of $S_z$ to obtain the last line. Since there is no $\phi$ dependence, to this order in $\theta$, we conclude that the blue-colored peaks in Fig.~\ref{fig:husimiPlot2cpp} are, approximately, circular cones. 
\subsection{How do complex lines intersect $S^2_\text{SC}$?}
Another way to explore the way $S^2_\text{SC}$ sits inside the projective space, is to inquire about its intersection with complex lines.   
Theorem~\ref{SCbasis_thm} places severe restrictions in this regard.
\begin{corol}
 For $s \geq 1$, any complex line in $\Ps$ intersects  $\SSC$ at most twice.
\end{corol}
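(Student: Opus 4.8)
The plan is to argue by contradiction: suppose a complex line $L \subset \Ps$ meets $\SSC$ in three distinct points, which (by definition of $\SSC$) are three distinct SC states $\ket{c_0}$, $\ket{c_1}$, $\ket{c_2}$, corresponding to three distinct directions $c_0, c_1, c_2$ and three distinct complex stereographic coordinates $\gamma_0, \gamma_1, \gamma_2$. Recall that a complex line in $\Ps = \mathbb{C}P^N$ is the projectivization of a two-dimensional subspace $W \subset \Hs^{N+1}$; hence every point of $L$ is of the form $\pket{a\ket{u} + b\ket{v}}$ for fixed spanning vectors $\ket{u},\ket{v}$ of $W$ and $(a:b) \in \mathbb{C}P^1$. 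So the assumption says that all three $\ket{c_k}$ lie in a common two-dimensional subspace $W$.

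The key step is to show this forces a linear dependence that contradicts Theorem~\ref{SCbasis_thm}. If $N + 1 \geq 3$, i.e. $s \geq 1$, then having three distinct SC states inside a 2-plane means they are linearly dependent: any three vectors in a two-dimensional space satisfy a nontrivial relation $\lambda_0\ket{c_0} + \lambda_1\ket{c_1} + \lambda_2\ket{c_2} = 0$ with not all $\lambda_k$ zero. But Theorem~\ref{SCbasis_thm} (applied after padding: extend $\{c_0,c_1,c_2\}$ to a set of $N+1$ distinct SC states $\{c_0,\dots,c_N\}$, which is possible since there are infinitely many directions on the sphere) asserts that $\{\ket{c_k}\}_{k=0}^N$ is a \emph{basis} of $\Hs^{N+1}$, in particular linearly independent — so no such relation can exist. (Equivalently, one can invoke directly the Vandermonde structure: the statement that $\ket{c_0},\ket{c_1},\ket{c_2}$ are dependent translates, via the Majorana-polynomial bookkeeping of equation~(\ref{SCcoefficients}), into the vanishing of a $3\times 3$ Vandermonde-type determinant in the distinct $\gamma_k$, which is impossible.) This contradiction establishes that $L \cap \SSC$ has at most two points.

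I would also note why the bound is sharp and why $s \geq 1$ is needed: for $s = 1/2$, $\Ps = \mathbb{C}P^1$ is itself the only complex line and it coincides with $\SSC$, so the intersection is infinite; the hypothesis $s \geq 1$ is exactly what makes $N + 1 \geq 3$ so that three SC states overdetermine a 2-plane. For sharpness, any two distinct SC states $\ket{c_0},\ket{c_1}$ span a unique complex line, which by construction meets $\SSC$ in (at least) those two points, so ``at most twice'' cannot be improved to ``at most once''.

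The main obstacle is essentially bookkeeping rather than conceptual: one must be careful that ``three points of $L$ lying on $\SSC$'' genuinely means ``three linearly independent-or-not lifts spanning a 2-dimensional $W$'', i.e. that the projective incidence statement is correctly converted into the linear-algebra statement before Theorem~\ref{SCbasis_thm} can be applied. Once that translation is made cleanly, the theorem does all the work, and the corollary follows in a couple of lines.
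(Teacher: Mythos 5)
Your proof is correct and follows essentially the same route as the paper: both arguments reduce the statement to Theorem~\ref{SCbasis_thm}, which forces three distinct SC states to be linearly independent for $s\geq 1$, contradicting the fact that three points of a complex line lift to a two-dimensional subspace. The only cosmetic difference is that the paper exhibits the dependence by writing a non-SC state of the line in two ways and subtracting, while you argue directly by dimension counting (with the padding to $N+1$ distinct SC states made explicit, which the paper leaves implicit).
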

\begin{proof}
Suppose a complex line $\ell$ goes through three SC states $\{ \ket{n_k} \}_{k=1}^3$, then another (non-SC) state  $\ket{\Psi}$ on $\ell$ can be written in the form $\ket{\Psi} = \alpha_1 \ket{n_1} + \alpha_2 \ket{n_2}$ and also $\ket{\Psi} = \beta_1 \ket{n_1} + \beta_2 \ket{n_3}$. Combining the two equations we obtain $  (\alpha_1 - \beta_1 ) \ket{n_1} + \alpha_2 \ket{n_2} - \beta_2 \ket{n_3}= 0$. However, by theorem~\ref{SCbasis_thm}, any 3 SC states are linearly independent for $s\geq 1$, implying that $\ket{n_1} = \ket{\Psi}$, which is a contradiction. 
\end{proof}
In particular, the complex line defined by two SC states  $\ket{n_1}$, $\ket{n_2}$, itself topologically a 2-sphere, only intersects $\SSC$ in these two points. Interestingly, Fermat's (last) theorem for polynomials, a classic result in the  Diophantine inequalities literature~\cite{Lan:90}, is relevant in this regard, as it states that for $A(\zeta)$, $B(\zeta)$, $C(\zeta)$ relatively prime polynomials, the equation
\begin{equation}
\label{FermatPoly}
A(\zeta)^n+B(\zeta)^n=C(\zeta)^n
\, ,
\end{equation}
only has solutions for $n \leq 2$. Taking all three polynomials of the first degree, we deduce that no linear combination of SC states can itself be SC, for $s\geq 3/2$ --- our result above is stronger, as it includes the $s=1$ case.

The following particular case is also of interest: 
\begin{prop}
Given two spin-1 states $\pket{Z}$, $\pket{\mathit{\Xi}}$, with constellations $\{\zeta_1,\zeta_2\}$, $\{\xi_1,\xi_2\}$, respectively, the complex line they define intersects $\SSC$
\begin{enumerate}
\item
in two points, if the states have no star in common
\item 
in the single point $\pket{\chi}$, if the two states have the star $\chi$  in common.
\end{enumerate}
\end{prop}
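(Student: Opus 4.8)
The plan is to reduce the statement to plane projective geometry. For $s=1$, the Majorana map~(\ref{pjk}), read as $\ket{\Psi}\mapsto p_{\ket{\Psi}}(\zeta,\eta)$ with values in the three-dimensional space of binary quadratic forms, is a $\mathbb{C}$-linear isomorphism; it therefore descends to a projective-linear isomorphism $\Ps\cong\mathbb{C}P^2$ that carries complex lines to projective lines and $\SSC$ — the spin-$1$ states whose two stars coincide — onto the Veronese conic $\mathcal{C}$ of perfect squares. Binary quadratics carry the $SL(2,\mathbb{C})$-invariant apolarity pairing $\langle f,g\rangle$, and the only fact about it I would use is the one-line identity $\langle f,\ell_\gamma^2\rangle=f(\gamma)$, where $\ell_\gamma$ is the linear form vanishing at $\gamma\in\mathbb{C}P^1\cong S^2$; in particular $\langle g,g\rangle=0$ exactly on $\mathcal{C}$ and the pairing is nondegenerate, so $\mathcal{C}$ is a smooth, irreducible conic. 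The key consequence I would draw is a transparent description of the tangent line to $\mathcal{C}$ at $\pket{\gamma}$: as the polar line of $\pket{\gamma}$ it equals $\{[f]:\langle\ell_\gamma^2,f\rangle=0\}=\{[f]:f(\gamma)=0\}$, that is, \emph{exactly the set of spin-$1$ states having $\gamma$ among their stars}.

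With this dictionary I would invoke B\'ezout's theorem: a projective line meets the smooth conic $\mathcal{C}$ in precisely two points counted with multiplicity, and since $\mathcal{C}$ contains no line these are either two distinct points (the line is a secant of $\mathcal{C}$) or a single point of multiplicity two (the line is tangent to $\mathcal{C}$). Hence the complex line $\ell$ spanned by the distinct states $\pket{Z}$ and $\pket{\mathit{\Xi}}$ meets $\SSC$ in two points when $\ell$ is a secant, and in exactly one point — the point of tangency — when $\ell$ is tangent to $\mathcal{C}$. (En route this recovers, for $s=1$, the bound of the corollary above, and shows moreover that the intersection is never empty.)

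It remains to translate tangency into the star condition, which is now immediate from the first paragraph: $\ell$ is tangent to $\mathcal{C}$ iff it coincides with the tangent line at some $\pket{\gamma}\in\mathcal{C}$, iff \emph{both} $\pket{Z}$ and $\pket{\mathit{\Xi}}$ lie in $\{[f]:f(\gamma)=0\}$, iff $\gamma$ is a star of $Z$ \emph{and} a star of $\mathit{\Xi}$. Since $\pket{Z}\neq\pket{\mathit{\Xi}}$ cannot share both stars, such a $\gamma$, when it exists, is the unique common star $\chi$, and then the single point of $\ell\cap\SSC$ is $\pket{\chi}$ — this is case~2. If instead the two constellations are disjoint, no such $\gamma$ exists, $\ell$ is a secant of $\mathcal{C}$, and $\ell$ meets $\SSC$ in two distinct points — this is case~1.

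I do not expect a serious obstacle: the whole content lies in recognizing $\SSC$ as a smooth conic and in the apolarity identity that pins down its tangent lines; the rest is B\'ezout. The genuinely degenerate configurations — $Z$ or $\mathit{\Xi}$ itself SC, or a common star at the stereographic pole — require no separate treatment: the conic/tangent argument is uniform in where $\pket{Z}$ and $\pket{\mathit{\Xi}}$ sit relative to $\mathcal{C}$, and the homogeneous reading of $p_{\ket{\Psi}}$ removes any anomaly at the pole (alternatively, a preliminary rotation, an isometry of $\SSC$ taking complex lines to complex lines, brings all four stars to finite points). A reader wanting a computational proof may instead write a point of $\ell$ as $aP_Z(\zeta)+bP_{\mathit{\Xi}}(\zeta)$, observe that its $\zeta$-discriminant is a binary quadratic in $(a:b)$ whose two roots are the SC points of $\ell$, and check that this quadratic has a double root — equivalently, that its own discriminant vanishes — precisely when $P_Z$ and $P_{\mathit{\Xi}}$ share a root; performing that factorization is the one mildly unpleasant step, and the apolarity argument is meant to sidestep it.
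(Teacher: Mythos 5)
Your proof is correct, but it follows a genuinely different route from the paper's. The paper proceeds computationally: it sets $\alpha_1(z-\zeta_1)(z-\zeta_2)+\alpha_2(z-\xi_1)(z-\xi_2)=(z-\gamma)^2$ and solves explicitly for $\gamma,\alpha_1,\alpha_2$, reading off the two cases from whether the radical $\sqrt{(\zeta_1-\xi_1)(\zeta_1-\xi_2)(\zeta_2-\xi_1)(\zeta_2-\xi_2)}$ vanishes. You instead recognize that for $s=1$ the Majorana map is a projective-linear isomorphism $\Ps^2\cong\mathbb{C}P^2$ sending $\SSC$ to the Veronese conic of perfect squares, verify smoothness via the apolarity form, identify the tangent line at $\pket{\gamma}$ with the locus of states having $\gamma$ as a star, and conclude by B\'ezout that a line meets the conic in two points or is tangent, with tangency equivalent to a common star. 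Each approach buys something: yours is coordinate-free, handles uniformly the degenerate configurations that the paper's formulas skirt (a common star at the pole, $\zeta_1+\zeta_2=\xi_1+\xi_2$ making the displayed denominator vanish, or one of the states being SC), and yields non-emptiness of the intersection as a bonus, so it essentially re-proves the preceding corollary and the secant-variety statement $S_1(\SSC,\Ps^2)=\Ps^2$ for free; the paper's explicit solution, on the other hand, produces the closed-form $\gamma$ and coefficients that are reused immediately afterwards (the M\"obius-transform corollary is obtained by setting $\zeta_1=\zeta_2$ in that formula), which your argument does not supply. Your closing uniqueness remark (distinct states cannot share both stars, and a tangent line touches a smooth conic at a single point) correctly pins the tangency point to $\pket{\chi}$, so no gap remains.
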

\begin{proof}
We set a linear combination of the two states equal to an SC state, with associated complex root $\gamma$, 
 which, in terms of Majorana polynomials, implies
\begin{equation}
\alpha_1 (z- \zeta_1)(z- \zeta_2) + \alpha_2 (z- \xi_1)(z- \xi_2) = (z- \gamma)^2
 \, .
\end{equation}
 Solving for $\gamma$, $\alpha_1$, $\alpha_2$,  gives
\begin{align}
\gamma 
&= 
\frac{ \scriptstyle \zeta_1 \zeta_2-\xi_1 \xi_2 \pm \sqrt{(\zeta_1-\xi_1) (\zeta_1-\xi_2) (\zeta_2-\xi_1) (\zeta_2-\xi_2)}}{\scriptstyle \zeta_1+\zeta_2-\xi_1-\xi_2} 
\label{gammasol}
\\
\alpha_1 
&= 
\frac{2 \gamma -\xi _1-\xi _2}{\zeta _1+\zeta _2-\xi _1-\xi _2}
\\
\alpha_2
& = 
\frac{-2 \gamma +\zeta_1+\zeta_2}{\zeta_1+\zeta_2-\xi_1-\xi_2}
 \, .
\end{align}
If the stars of $\pket{Z}$ are different from those of $\pket{\mathit{\Xi}}$,  the radical  in the right hand side of~(\ref{gammasol})  is nonzero, and one obtains two distinct solutions, \ie, the complex line intersects $\SSC$ in two distinct points. 
 On the other hand, if the two states have one star in common, say, $\zeta_1=\xi_1=\chi$, then~(\ref{gammasol}) implies $\gamma=\chi$, \ie, the complex line  intersects $\SSC$ in only  one point, the SC state $\pket{\chi}$ corresponding to the common star. 
 \end{proof}
 Fixing the state $\pket{Z}$ in the previous proposition, and letting $\pket{\mathit{\Xi}}$ range over $\Ps$, one arrives at 
\begin{corol}
 For $s=1$, every complex line through a non SC state $\pket{Z}=\pket{n_1,n_2}$ intersects $\SSC$ twice, except for two lines, each of which intersects $\SSC$  once, at  $\pket{n_i}$, $i=1,2$.
\end{corol}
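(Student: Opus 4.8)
The plan is to parametrize the complex lines through $\pket{Z}$ and to single out the two exceptional ones as the ``star pencils'' of $\pket{Z}$. For $s=1$ we have $\Ps=\mathbb{C}P^2$. Write $\pket{Z}=\pket{n_1,n_2}$ with $n_1\neq n_2$ (which is exactly what non-SC means here), and for $i=1,2$ let $\mathcal{P}_i\subset\Ps$ be the set of spin-1 states having $n_i$ among their stars. After rotating so that $n_i$ is not the south pole, $\mathcal{P}_i$ consists of the (homogenized) Majorana quadratics divisible by the linear factor associated to $n_i$; this is a two-dimensional linear subspace of the three-dimensional space of binary quadratics, hence a complex line, and it clearly contains $\pket{Z}$. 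Moreover $\mathcal{P}_1\cap\mathcal{P}_2=\{\pket{n_1,n_2}\}=\{\pket{Z}\}$, and a state in $\mathcal{P}_i$ is SC precisely when its other star is also $n_i$, so $\mathcal{P}_i\cap\SSC=\{\pket{n_i}\}$. Thus $\mathcal{P}_1,\mathcal{P}_2$ are complex lines through $\pket{Z}$, each meeting $\SSC$ in exactly one point, $\pket{n_1}$ and $\pket{n_2}$ respectively.

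Next I would show these are the only exceptional lines. The key observation is that a complex line $\ell$ through $\pket{Z}$ equals some $\mathcal{P}_i$ if and only if some point $\pket{W}\in\ell\setminus\{\pket{Z}\}$ shares a star with $\pket{Z}$: if $\pket{W}$ carries the star $n_i$ then $\pket{W},\pket{Z}\in\mathcal{P}_i$, and the complex line through two distinct points is unique, so $\ell=\mathcal{P}_i$; conversely every state in $\mathcal{P}_i$ carries $n_i$. Now let $\ell$ be any complex line through $\pket{Z}$ with $\ell\neq\mathcal{P}_1,\mathcal{P}_2$. By the Corollary above, $\ell\cap\SSC$ is finite, so we may choose $\pket{W}\in\ell$ distinct from $\pket{Z}$ and non-SC; by the observation, $\pket{W}$ shares no star with $\pket{Z}$. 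Since $\ell$ is the complex line spanned by $\pket{Z}$ and $\pket{W}$, the Proposition applies in its first case and yields $\ell\cap\SSC$ consisting of exactly two distinct points. Together with the previous paragraph, this is the statement of the corollary.

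The one genuine gap in this direct route is the degenerate case of the Proposition's explicit formula~(\ref{gammasol}), which requires $\zeta_1+\zeta_2\neq w_1+w_2$ for the chosen $\pket{W}$; when this fails, one of the two SC states on $\ell$ has run off to the south pole, $\gamma=\infty$, and is simply invisible to a formula written for finite roots. The cleanest fix, which also re-derives everything uniformly, is the projective picture: the Veronese map makes $\SSC$ a smooth conic $\mathcal{C}\subset\mathbb{C}P^2$, every complex line meets $\mathcal{C}$ in exactly two points counted with multiplicity (B\'ezout), with a single intersection point precisely for tangent lines, and an external point lies on exactly two tangents of a smooth conic over $\mathbb{C}$. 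Since $\pket{Z}$ is non-SC it is external to $\mathcal{C}$; the lines $\mathcal{P}_1,\mathcal{P}_2$, each meeting $\mathcal{C}$ in one point, are tangent lines through $\pket{Z}$ and hence are all of them, so every other line through $\pket{Z}$ is non-tangent and meets $\mathcal{C}$ in two distinct points. (If one prefers to stay within the paper's computational setting, the same edge case is equally dispatched by a continuity argument, perturbing $\pket{W}$ along $\ell$, or by the one-line check that the missing intersection point is the south-pole SC state.) I expect the only points requiring care to be the bookkeeping around $\mathcal{P}_1,\mathcal{P}_2$ and this degenerate case; the remainder is an immediate application of the Proposition and the Corollary.
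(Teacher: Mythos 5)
Your proof is correct, and its main line is the one the paper itself uses: the corollary there is obtained by fixing $\pket{Z}$ in the preceding proposition and letting the second state range over $\Ps$, which is exactly your device of picking an auxiliary non-SC point $\pket{W}$ on each line through $\pket{Z}$. What you add is the bookkeeping the paper leaves implicit --- identifying the two exceptional lines as the pencils $\mathcal{P}_1,\mathcal{P}_2$ of states carrying the star $n_1$ (resp.\ $n_2$), verifying that these are complex lines meeting $\SSC$ only at $\pket{n_i}$, and invoking the earlier at-most-two corollary to choose $\pket{W}$ non-SC --- together with a repair of a genuine, if minor, gap: the proposition's proof rests on the explicit formula~(\ref{gammasol}), which degenerates when $\zeta_1+\zeta_2=\xi_1+\xi_2$ (the missing second intersection is then the SC state with both stars at the south pole, a double root at $\zeta=\infty$) or when a star of either state already sits at the south pole, cases the paper does not address. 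Your Veronese/B\'ezout route --- $\SSC$ for $s=1$ is a smooth conic in $\mathbb{C}P^2$, every line meets it in two points counted with multiplicity, a single set-theoretic intersection means tangency, and a point off a smooth conic lies on exactly two tangents --- is a genuinely different, coordinate-free argument that treats all degenerate configurations uniformly, identifies the exceptional lines $\mathcal{P}_i$ as the tangents to the conic at $\pket{n_i}$, and in fact re-proves the proposition itself, which the paper's formula-based computation only establishes generically; the price is importing classical projective geometry not used elsewhere in the paper. Either route suffices for the corollary as stated.
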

Another interesting implication is contained in
\begin{corol}
Given a spin-1 state $\ket{\mathit{\Xi}}$, with constellation $\{\xi_1,\xi_2\}$, and an arbitrary SC state $\ket{n}$, with single (multiple) star $\zeta$, $\zeta \neq \xi_1$, $\xi_2$, there exists a unique SC state $\ket{n'}$ such that $\ket{\mathit{\Xi}}$
can be written as a linear combination of $\ket{n}$, $\ket{n'}$. 
\end{corol}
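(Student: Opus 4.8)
The plan is to reduce the claim to part~(1) of the preceding Proposition by producing an auxiliary spin-1 state $\pket{Z}$ whose constellation contains $\zeta$ but is otherwise disjoint from $\{\xi_1,\xi_2\}$, and then invoking the Proposition to the pair $(\pket{Z},\pket{\mathit{\Xi}})$. Concretely, first I would pick any complex number $\zeta'\notin\{\zeta,\xi_1,\xi_2\}$ and set $\ket{Z}=\ket{\zeta,\zeta'}$ (in the Majorana-root notation of section~\ref{Ron}); this is a legitimate spin-1 state that is \emph{not} SC, and it shares the single star $\zeta$ with $\ket{n}$ while sharing no star with $\ket{\mathit{\Xi}}$, since $\zeta\neq\xi_1,\xi_2$ and $\zeta'$ was chosen generic.

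Next I would apply the Proposition (case~1) to $\pket{Z}$ and $\pket{\mathit{\Xi}}$: since their constellations are disjoint, the complex line they span meets $\SSC$ in exactly two points, and by formula~(\ref{gammasol}) neither of these coincides with $\zeta$ (the radical is nonzero and $\zeta$ is a root of $\pket Z$ only, not of the combination — one checks directly that $\gamma=\zeta$ would force $\pket{\mathit{\Xi}}$ to also have $\zeta$ as a star). Now observe that $\ket{\mathit{\Xi}}$, lying on that line, can be written as a linear combination of $\ket{Z}$ and of either of those two SC states; but $\ket{Z}$ is itself a linear combination of $\ket{n}=\ket{\zeta,\zeta}$ and $\ket{\zeta',\zeta'}$. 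To get the statement in the form demanded — $\ket{\mathit{\Xi}}=a\ket{n}+b\ket{n'}$ for a \emph{single} SC state $\ket{n'}$ — I would instead argue more directly: the set of pairs $(\ket{n},\ket{n'})$ of SC states such that $\ket{\mathit{\Xi}}\in\mathrm{span}\{\ket{n},\ket{n'}\}$ corresponds, via Majorana polynomials, to factorizations $\alpha_1(z-\zeta)^2+\alpha_2(z-\gamma)^2=(z-\xi_1)(z-\xi_2)$ with $\gamma$ the root of $\ket{n'}$; solving this linear-in-$(\alpha_1,\alpha_2)$-and-quadratic system with $\zeta$ fixed yields, generically, a unique $\gamma$. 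Equating coefficients of $z^2$, $z^1$, $z^0$ gives three equations in the three unknowns $\alpha_1,\alpha_2,\gamma$; eliminating $\alpha_1,\alpha_2$ leaves a single equation determining $\gamma$ uniquely, provided $\zeta\neq\xi_1,\xi_2$ (which guarantees the relevant determinant, proportional to $(\zeta-\xi_1)(\zeta-\xi_2)$ or to $(\zeta-\xi_1)+(\zeta-\xi_2)$ depending on the elimination order, does not vanish).

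The main obstacle I anticipate is handling the degenerate branches of the algebra cleanly: when $\zeta+\zeta=\xi_1+\xi_2$, i.e.\ $\zeta$ is the midpoint (in $\mathbb{C}$) of the roots of $\pket{\mathit{\Xi}}$, the leading coefficient in the equation for $\gamma$ drops and one must check that a (finite) solution still exists and is unique — this is exactly the stereographic shadow of the ``$\zeta=\infty$'' bookkeeping flagged after~(\ref{pjk}), so the honest fix is to run the whole argument projectively (homogeneous binary quadratics) rather than in the affine chart, where $\ket{n}$, $\ket{n'}$ and $\ket{\mathit{\Xi}}$ are points of $\Ps^2$ and one simply counts dimensions. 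I would also need the case $\zeta=\xi_1$ (or $\xi_2$) genuinely excluded, as it is in the hypothesis, since there case~2 of the Proposition shows the line through $\ket{n}$ and $\ket{\mathit{\Xi}}$ meets $\SSC$ only at $\ket{n}$ itself, leaving no room for a second, distinct $\ket{n'}$. Everything else is the routine linear algebra of matching three polynomial coefficients, which I would not grind through here.
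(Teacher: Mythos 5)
Your final argument is correct and is essentially the paper's own: the paper simply sets $\zeta_1=\zeta_2=\zeta$ in~(\ref{gammasol}) and reads off $\gamma=\bigl((\xi_1+\xi_2)\zeta-2\xi_1\xi_2\bigr)/\bigl(2\zeta-(\xi_1+\xi_2)\bigr)$ as a M\"obius transform of $\zeta$, which is exactly what your coefficient-matching elimination reproduces, with your ``$2\zeta=\xi_1+\xi_2$'' (south-pole, $\gamma=\infty$) branch absorbed by viewing the M\"obius map on the Riemann sphere. The auxiliary-state detour in your opening paragraph is unnecessary, as you yourself conclude before switching to the direct computation.
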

\begin{proof}
Put $\zeta_1=\zeta_2=\zeta$ in~(\ref{gammasol}) to find 
\begin{equation}
\label{gammaz12}
\gamma=\frac{(\xi_1+\xi_2)\zeta-2\xi_1 \xi_2}{2\zeta-(\xi_1+\xi_2)}
\, ,
\end{equation}
\ie, the complex number $\gamma$ corresponding to $n'$ is a M\"obius transform of the one corresponding to $n$, with coefficients that depend on $\ket{\mathit{\Xi}}$. 
\end{proof}
The fact that projective lines, defined by pairs of points in $\SSC$, pass through every point in $\Ps^2$ can be phrased in terms of secant varieties~\cite{Zak:93}: 
the $k$-secant variety $S_k(A,\Ps)$ of a variety $A$ in a projective space $\Ps$ is the (Zariski closure of) the union of all secant $k$-planes  to $A$ (\ie, $k$-planes defined by $k+1$ (non-$k$-coplanar) points of $A$).
\begin{corol}
\label{s1secant}
For  $s=1$, the first secant variety of the spin coherent sphere coincides with the ambient projective space, 
 $S_1(\SSC, \, \Ps^2) = \Ps^2$.
\end{corol}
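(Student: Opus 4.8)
The plan is to deduce this directly from the corollary immediately preceding it. That corollary asserts that through any non-SC spin-$1$ state $\pket{Z}=\pket{n_1,n_2}$ there passes a complex line meeting $\SSC$ in two distinct points; but a complex line meeting $\SSC$ in two distinct points is, by the very definition of a secant $1$-plane, a secant line of $\SSC$. Hence the corollary already says exactly that every non-SC point of $\Ps^2$ lies on some secant line of the spin coherent sphere, so the union of all secant lines of $\SSC$ contains the entire complement of $\SSC$ in $\Ps^2$.

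It then remains only to account for the points of $\SSC$ itself. This is automatic, since $S_1(\SSC,\Ps^2)$ is defined as the Zariski closure of the union of secant lines and therefore contains $\SSC$; more concretely, given an SC state $\pket{n}$ one joins it to any other SC state $\pket{n'}$ with a distinct star to get a genuine secant line passing through $\pket{n}$. Either way $\SSC\subseteq S_1(\SSC,\Ps^2)$, and combining with the previous paragraph yields $S_1(\SSC,\Ps^2)=\Ps^2$ — in fact no closure is needed, the bare union of secant lines already covering all of $\Ps^2$.

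As a self-contained alternative I would observe that, for $s=1$, the Majorana correspondence identifies $\SSC$ with the image of the degree-two Veronese map $\gamma\mapsto\pket{(\zeta-\gamma)^2}$, i.e. with a smooth plane conic $C\subset\Ps^2$, and then invoke B\'ezout: any complex line meets $C$ in two points counted with multiplicity, so through any $\pket{\Psi}\notin C$ one may choose a line whose two intersection points with $C$ are distinct (avoiding the at most two tangent lines to $C$ through $\pket{\Psi}$), exhibiting $\pket{\Psi}$ on a secant line. The only step demanding a moment's care is this bookkeeping of tangent and other degenerate lines, which are not secant in the strict sense; but they are harmlessly absorbed either by the Zariski closure in the definition of $S_1$ or by the explicit choice of a non-tangent line, so there is no substantive obstacle — the statement is essentially a repackaging of the previous corollary.
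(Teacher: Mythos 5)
Your proposal is correct and takes essentially the same route as the paper, which presents this corollary as an immediate rephrasing of the preceding results: every non-SC spin-1 state lies on a complex line meeting $\SSC$ in two distinct points, and points of $\SSC$ itself trivially lie on secant lines (or are swept in by the closure). Your Veronese-conic/B\'ezout alternative is a valid bonus, but the paper's own justification is exactly your first two paragraphs.
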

 For higher values of spin, we have the following
\begin{corol}
Through a point $\pket{\Psi}$ in $\Ps^N$, $N \geq 3$, passes at most one line intersecting $\SSC$ twice.
\end{corol}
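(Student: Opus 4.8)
The plan is to argue by contradiction: suppose two distinct lines $\ell_1, \ell_2$ through $\pket{\Psi}$ each meet $\SSC$ in two points, say $\ell_1 \cap \SSC = \{\pket{n_1}, \pket{n_2}\}$ and $\ell_2 \cap \SSC = \{\pket{m_1}, \pket{m_2}\}$, with all four SC states distinct (if some $n_i = m_j$ one must handle that case separately, but then $\pket{\Psi}$ lies on the line through $n_i$ and would have to be a specific SC-combination — see below). Since $\pket{\Psi} \in \ell_1$ we may lift to $\ket{\Psi} = \alpha_1 \ket{n_1} + \alpha_2 \ket{n_2}$, and since $\pket{\Psi} \in \ell_2$ also $\ket{\Psi} = \beta_1 \ket{m_1} + \beta_2 \ket{m_2}$ for suitable coefficients (the same lift of $\ket{\Psi}$, absorbing any phase into the $\beta$'s). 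Subtracting gives
\begin{equation}
\alpha_1 \ket{n_1} + \alpha_2 \ket{n_2} - \beta_1 \ket{m_1} - \beta_2 \ket{m_2} = 0 \, .
\label{eq:fourSC}
\end{equation}

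The key step is to invoke Theorem~\ref{SCbasis_thm}: for $N \geq 3$, any set of at most $N+1$ distinct SC states is linearly independent, so any four distinct SC states are linearly independent as soon as $N \geq 3$. Hence \eqref{eq:fourSC} forces all coefficients to vanish, contradicting $\ket{\Psi} \neq 0$ (note $\alpha_1, \alpha_2$ cannot both be zero). Therefore the four SC states cannot all be distinct. If exactly one coincidence occurs, say $\pket{n_1} = \pket{m_1} =: \pket{c}$, then $\ell_1$ and $\ell_2$ both pass through $\pket{\Psi}$ and through $\pket{c}$; two distinct points determine a unique line, so $\ell_1 = \ell_2$, again a contradiction with the lines being distinct. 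Two coincidences would make $\ell_1 = \ell_2$ immediately. This exhausts the cases, so at most one such line exists.

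I expect the only delicate point to be making sure the enumeration of coincidence cases is exhaustive and that Theorem~\ref{SCbasis_thm} is being applied to a set of \emph{distinct} SC states of size $\leq N+1$ — for $N \geq 3$ a 4-element set qualifies, which is exactly where the hypothesis $N \geq 3$ enters (for $N = 2$ four SC states are generically dependent, and indeed Corollary~\ref{s1secant} shows the $s=1$ behaviour is different). A secondary subtlety is the bookkeeping of phases when writing the two lifts of $\ket{\Psi}$: one should fix a single representative $\ket{\Psi}\in\Hs$ once and for all, then the expansions on $\ell_1$ and $\ell_2$ are genuine equalities in $\Hs$, and the subtraction \eqref{eq:fourSC} is literally zero rather than zero up to phase. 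With that fixed, the argument is essentially a one-line consequence of the SC-basis theorem plus the fact that two points determine a line.
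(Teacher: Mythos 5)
Your proposal is correct and follows essentially the same route as the paper's own proof: lift $\pket{\Psi}$ to the two decompositions in SC states and invoke Theorem~\ref{SCbasis_thm} (linear independence of any $\leq N+1$ distinct SC states, which is where $N\geq 3$ enters) to force all coefficients, hence $\ket{\Psi}$, to vanish. Your explicit enumeration of the coincidence cases is in fact a bit more careful than the paper's one-line argument, which tacitly takes the four SC states to be distinct (and note that both versions implicitly exclude the degenerate situation where $\pket{\Psi}$ is itself spin coherent, for which the statement would fail).
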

\begin{proof}
Assume there are two lines through $\pket{\Psi}$ and intersecting $\SSC$ twice, at $\pket{n_{1}}$, $\pket{n_{2}}$, and $ \pket{m_{1}}$, $\pket{m_{2}}$, respectively.  Then the relation  $\alpha_1 \ket{n_{1}} + \alpha_2 \ket{n_{2}} = \beta_1 \ket{m_{1}} + \beta_2 \ket{m_{2}} $ may  be inferred, and by linear independence of the SC states, $\ket{\Psi} = 0$ follows. 
 \end{proof}
 Note that, as a consequence, for $N \geq 3$, if a state $\ket{\Psi}$ can be written as a linear combination of two SC states, that decomposition is unique.
In $\Ps$, the linear span of two SC states has real dimension at most 6, hence, for $s \geq 2$, there will be states which cannot be expressed as a linear combination of two SC states. For $s=3/2$ such a decomposition is possible, and unique,  for most of the states, as the following proposition asserts
\begin{prop}
\label{N32SC}
 For $s=3/2$, any state $\pket{\Psi}$ without degenerate constellation lies on a complex line defined by two SC states.
\end{prop}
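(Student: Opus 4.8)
The plan is to count parameters and then make the counting rigorous by exhibiting an explicit system of polynomial equations. A spin-$3/2$ state has a projective Hilbert space $\Ps^3$ of real dimension $6$, and the space of (ordered) pairs of SC states has real dimension $4$; a complex line through two given SC states is itself a $\Ps^1$, of real dimension $2$. Thus the union of all lines through pairs of SC states is swept by a family of real dimension $4 + 2 = 6$, matching $\dim \Ps^3$, so one expects $S_1(\SSC,\Ps^3) = \Ps^3$ generically. The substance of the proposition is to show the sweep is actually onto an open dense set (in fact the complement of the degenerate constellations), not merely that the dimensions match.

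First I would translate the condition ``$\pket{\Psi}$ lies on a line through two SC states'' into Majorana polynomials, exactly as in the $s=1$ case treated just above: writing the constellation of $\pket{\Psi}$ as $\{\zeta_1,\zeta_2,\zeta_3\}$ (distinct, by hypothesis) and the two sought SC states as $\ket{\mu}$, $\ket{\nu}$ with complex parameters $\mu,\nu$, the requirement $\ket{\Psi} = \alpha_1 \ket{\mu} + \alpha_2 \ket{\nu}$ becomes, up to the normalization factors appearing in~(\ref{SCcoefficients}),
\begin{equation}
\lambda\,(z-\zeta_1)(z-\zeta_2)(z-\zeta_3) = \alpha_1 (z-\mu)^3 + \alpha_2 (z-\nu)^3
\, .
\label{N32master}
\end{equation}
Matching the four coefficients of powers $z^0,\dots,z^3$ gives four complex equations in the four complex unknowns $\mu,\nu,\alpha_1,\alpha_2$ (the overall scale $\lambda$ is fixed by projectivity). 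So the task reduces to showing this square polynomial system has a solution for every choice of distinct $\zeta_i$. The clean way is elimination: the $z^3$ and $z^2$ equations let one solve $\alpha_1+\alpha_2$ and $\alpha_1\mu + \alpha_2\nu$ linearly, and then the remaining two equations, after clearing the $\alpha$'s, should reduce to a single algebraic relation between $\mu$ and $\nu$ together with (after using the elementary symmetric functions $e_1,e_2,e_3$ of the $\zeta_i$) one more — leaving, generically, a finite nonempty solution set. I would in fact expect this to be related to the classical fact that a binary cubic can be written as a sum of two cubes of linear forms essentially uniquely (the ``canonical form'' of a binary cubic), the obstruction being precisely the vanishing of the discriminant-type quantity that corresponds to the constellation becoming degenerate; that is why the hypothesis of a non-degenerate constellation is exactly what is needed.

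The main obstacle is making the last step honest: one must check that the resulting algebraic system is genuinely solvable (over $\mathbb{C}$) for \emph{all} non-degenerate constellations, and that the solution is a bona fide pair of \emph{distinct} SC states, i.e. that one does not land on $\mu=\nu$ or $\mu,\nu = \infty$ except in controlled, still-degenerate situations. The cleanest route is to invoke the classical canonical form for binary cubics: a binary cubic form with nonzero discriminant decomposes as a sum of two distinct perfect cubes of linear forms, and this decomposition is unique; since ``nonzero discriminant'' of the Majorana polynomial is exactly the statement that the three stars of $\pket{\Psi}$ are distinct, this gives both existence and (matching the corollary above for $N\geq 3$) uniqueness of the line through two SC states. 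I would present the explicit solution of~(\ref{N32master}) via the Hessian covariant of the cubic — the two roots $\mu,\nu$ being the two roots of the Hessian — which makes the whole statement a one-line consequence once~(\ref{N32master}) is written down, with the non-degeneracy hypothesis ensuring the Hessian is not a perfect square and hence $\mu\neq\nu$.
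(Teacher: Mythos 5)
Your proof is correct, and it starts from the same master identity as the paper --- equating $(z-\zeta_1)(z-\zeta_2)(z-\zeta_3)$ with $\alpha_1(z-\gamma_1)^3+\alpha_2(z-\gamma_2)^3$ --- but resolves it by a genuinely different key step. The paper simply solves the resulting system in closed form, exhibiting $\gamma_{1,2}$ as the two roots of an explicit quadratic whose discriminant term is $\pm i\sqrt{3}\,\zeta_{12}\zeta_{23}\zeta_{31}$, so that distinctness of the stars visibly guarantees two distinct SC directions; these explicit formulas are then reused later to analyze the degenerate limit $\zeta_3\to\zeta_2$ (the $\ket{\mathrm{W}}$ state) and the tangent-line picture. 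You instead invoke the classical canonical form of a nondegenerate binary cubic as a sum of two cubes of independent linear forms, with $\mu,\nu$ the roots of the Hessian covariant --- in fact the Hessian of the cubic is exactly the quadratic the paper solves, so the two arguments are equivalent in content, and yours additionally delivers uniqueness for free (consistent with the paper's corollary that for $N\geq 3$ at most one line through a point meets $\SSC$ twice). Two small remarks: your first ``elimination'' paragraph is only heuristic and is superseded by the Hessian argument, so it could be dropped; and the edge cases you flag ($\mu$ or $\nu$ at infinity, or a star of $\pket{\Psi}$ at the south pole) are harmless if you phrase everything in terms of homogeneous binary forms, since a linear form proportional to $w$ cubed is just the south-pole SC state --- the paper's explicit formulas share the same implicit caveat (its quantity $A$ can vanish for distinct stars, sending one $\gamma$ to infinity), so neither treatment loses generality there.
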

\begin{proof}
Setting $\ket{\Psi}$ equal to a linear combination of the SC states $\ket{n_1}$, $\ket{n_2}$,  implies for the corresponding Majorana polynomials 
\begin{equation}
\label{N3decomp}
(z- \zeta_1)(z- \zeta_2)(z- \zeta_3) = \alpha_1 (z- \gamma_1)^3 + \alpha_2 (z- \gamma_2)^3 
\, .
\end{equation}
Solving for $\gamma_1$, $\gamma_2$, $\alpha_1$, $\alpha_2$, we get
\begin{align}
\label{gammaSolprop1}
\gamma_{1,2} 
&=
A^{-1}
\big(
 \zeta_1^2 ( \zeta_2 + \zeta_3) 
 + 
 \zeta_2^2 (\zeta_3+\zeta_1) 
 + 
 \zeta_3^2 (\zeta_1+ \zeta_2)
 \nonumber
 \\
&
\qquad \qquad
- 6 \zeta_1 \zeta_2 \zeta_3 
 \pm 
 i \sqrt{3}\zeta_{12}\zeta_{23}\zeta_{31} 
 \big)
 \\
 \label{a1Solprop1}
 \alpha_1 
 &= 
 \frac{-3\gamma_2 + \zeta_1 + \zeta_2 + \zeta_3}{3(\gamma_1 - \gamma_2)}
 \\
 \label{a2Solprop1}
 \alpha_2 
 &= 
 \frac{3\gamma_1 - \zeta_1 - \zeta_2 - \zeta_3}{3(\gamma_1 - \gamma_2)}
 \, ,
\end{align}
where 
\begin{equation*}
A \equiv 2
\left(\zeta_1^2 + \zeta_2^2+\zeta_3^2  - \zeta_1 \zeta_2 - \zeta_2 \zeta_3 - \zeta_3 \zeta_1
\right)
\, .
\end{equation*}
\end{proof}
Consider, as an example, the two representative, $s=3/2$, non-biseparable states, $\ket{\text{GHZ}}$ and $\ket{\text{W}}$   \cite{Dur.Vid.Cir:00}. The constellation of the first is a maximal equilateral triangle that can, by a suitable rotation,  be placed on the equator, with one star on the positive $x$-axis. For this orientation, the decomposition in two SC states of proposition~\ref{N32SC} is $\ket{\text{GHZ}}=\frac{1}{\sqrt{2}}(\ket{z}+ \ket{\! - \! z})$. A similar conclusion can be reached from the analysis in~\cite{Gan.Mar.Zyc:12} --- see figure 10 in that reference and the related discussion.

On the other hand, the constellation of the state $\ket{\text{W}}$ consists of two coincident stars, and a third one, antipodal to the other two. As suggested by proposition~\ref{N32SC}, such a state cannot be written as a superposition of two SC states, which is also consistent  with the results of~\cite{Man.Cou.Kel.Mil:14} mentioned earlier. 
Still, it is of interest to inquire what exactly happens if eqs.~(\ref{gammaSolprop1}), (\ref{a1Solprop1}), (\ref{a2Solprop1}), are pushed to their limit in this case. It is easily seen that as $\zeta_3 \rightarrow \zeta_2$ in (\ref{N3decomp}), eq.~(\ref{gammaSolprop1}) implies that $\gamma_1$ and $\gamma_2$ tend to $\zeta_2$, while both $\alpha_1$, $\alpha_2$ blow up. However, a slight reaccommodation of~(\ref{N3decomp}),
\begin{equation}
\begin{split}
(z- \zeta_1)&(z- \zeta_2)(z- \zeta_3) 
\\
& 
= 
(\alpha_1 + \alpha_2) (z- \gamma_1)^3 
\\
&\qquad
+ \alpha_2 \left( (z- \gamma_2)^3 - (z- \gamma_1)^3 \right) 
\end{split}
\, ,
\end{equation}
fixes all problems: the coefficient of the first term on the right hand side is constant, $\alpha_1+\alpha_2=1$, while the exploding $\alpha_2$ in the second term is matched with the vanishing difference $(\zeta-\gamma_2)^3-(\zeta-\gamma_1)^3$, their product having a finite limit,
\begin{align*}
(z- \zeta_1)(z- \zeta_2)^2 
&
= (z- \zeta_2)^3
\\
&\quad
 + \! \! \lim_{\zeta_3 \rightarrow \zeta_2} 
 \! \!
 \alpha_2 \! \left( (\zeta- \gamma_2)^3 \! - \!  (\zeta- \gamma_1)^3 \right) 
 \\
 &=
  (\zeta- \zeta_2)^3 + (\zeta_2 - \zeta_1) (\zeta - \zeta_2)^2 
  \, .
\end{align*}
Clearly, what transpires here is that the spin-3/2 state with a double degeneracy lies on a complex line defined by an SC state and a vector tangent to $\SSC$ at that same state. Thus, states with degenerate constellations are  also  in $S_1(\SSC,\Ps^3)$ and, combining this with proposition~\ref{N32SC} we arrive at a statement analogous to corollary~\ref{s1secant}, for $s=3/2$:
\begin{corol}
\label{s1secantN3}
 $S_1(\SSC, \, \mathbb{P}^3) = \mathbb{P}^3$
 \, .
\end{corol}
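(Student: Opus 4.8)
The statement $S_1(\SSC,\,\mathbb{P}^3) = \mathbb{P}^3$ asserts that the first secant variety of the spin coherent sphere fills the whole $\mathbb{P}^3$ in the spin-$3/2$ case. The plan is to decompose $\mathbb{P}^3$ into two pieces according to whether the constellation of a state is degenerate or not, and to show each piece lies in $S_1(\SSC,\mathbb{P}^3)$. For the generic stratum of states with three distinct stars, Proposition~\ref{N32SC} already does the job: such a state lies on a complex line defined by two genuine SC states, hence on a secant line of $\SSC$, hence in $S_1(\SSC,\mathbb{P}^3)$. So the only work left is to handle the remaining locus of states with a degenerate constellation (one double star, or one triple star), which has lower dimension.

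For the degenerate locus I would argue exactly as the passage preceding the corollary suggests: take $\zeta_3 \to \zeta_2$ in the decomposition~(\ref{N3decomp}) and track the limiting geometry. The reaccommodated identity exhibited just above the corollary shows that in the limit one gets
\begin{equation*}
(z-\zeta_1)(z-\zeta_2)^2 = (z-\zeta_2)^3 + (\zeta_2-\zeta_1)(z-\zeta_2)^2
\, ,
\end{equation*}
i.e. the Majorana polynomial of the doubly degenerate state is an affine combination of the Majorana polynomial of the SC state $\pket{\zeta_2}$ and the polynomial $(z-\zeta_2)^2$, the latter representing a tangent vector to $\SSC$ at $\pket{\zeta_2}$. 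Because $S_1(\SSC,\mathbb{P})$ is defined as a \emph{Zariski closure}, tangent lines (limits of secant lines) to $\SSC$ are automatically included, so the doubly degenerate state lies in $S_1(\SSC,\mathbb{P}^3)$. The case of a triple star is trivial: the state is itself SC and hence already on $\SSC \subset S_1(\SSC,\mathbb{P}^3)$.

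Putting the pieces together: the complement of the degenerate locus is covered by Proposition~\ref{N32SC}, the doubly degenerate states are covered by the tangent-line (closure) argument, and the triple-star states are on $\SSC$ itself; since these three strata exhaust $\mathbb{P}^3$, we conclude $S_1(\SSC,\mathbb{P}^3) = \mathbb{P}^3$. The one point requiring a little care — and the only place where a skeptic could object — is the claim that the limiting object really is a \emph{tangent} vector to $\SSC$, so that the limiting line is in the Zariski closure of the set of secant lines; this is where one should be explicit that $(z-\zeta_2)^2$, viewed in the space of degree-$\le 3$ Majorana polynomials with the leading term of the target kept fixed, is precisely the derivative of $(z-\gamma)^3$ with respect to $\gamma$ at $\gamma=\zeta_2$ (up to the irrelevant constant), and that the secant line through $\pket{\gamma_1}$ and $\pket{\gamma_2}$ with $\gamma_1,\gamma_2 \to \zeta_2$ converges projectively to the line through $\pket{\zeta_2}$ in that tangent direction. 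Everything else is bookkeeping already carried out in the text.
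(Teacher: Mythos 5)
Your proposal is correct and follows essentially the same route as the paper: Proposition~\ref{N32SC} handles the nondegenerate stratum, the $\zeta_3\to\zeta_2$ limit of the decomposition~(\ref{N3decomp}) places doubly degenerate states on tangent lines to $\SSC$, which belong to the Zariski closure defining $S_1(\SSC,\Ps^3)$, and triple-star states lie on $\SSC$ itself. You are merely a bit more explicit than the text about the triple-star case and about why the limiting direction $(z-\zeta_2)^2$ is genuinely tangent (the $\gamma$-derivative of $(z-\gamma)^3$), which is a welcome clarification but not a different argument.
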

We pursue this matter further, studying the case of higher order degeneracies and their relation to tangent varieties, in a forthcoming publication. 

We focus now on the constellations corresponding to the points (states) of a complex line passing through two SC states. Our main result is contained in
\begin{theorem}
\label{the:roots2SC}
Given two  spin-$s$ SC states with roots $\gamma_1$, $\gamma_2 \in \mathbb{C}$, respectively.  The roots $\zeta_k(t)$, $k=0,\ldots,N-1$, of a linear combination of their Majorana polynomials 
\begin{equation*}
 \alpha_1^N (\zeta-\gamma_1)^N- \alpha_2^N  (\zeta-\gamma_2)^N
\end{equation*}
where $\alpha_1=\cos t$, $\alpha_2=e^{i\Omega}\sin t$, trace out circles that intersect equiangularly at $\gamma_1$, $\gamma_2$.
\end{theorem}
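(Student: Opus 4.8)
The plan is to reduce the whole statement to the conformality of a single Möbius transformation, after which the argument is essentially one line. First I would write the roots down explicitly. For a value of $t$ with $\alpha_1\neq 0$, and for $\zeta\neq\gamma_2$, the vanishing of $\alpha_1^N(\zeta-\gamma_1)^N-\alpha_2^N(\zeta-\gamma_2)^N$ is equivalent to
\[
\left(\frac{\zeta-\gamma_1}{\zeta-\gamma_2}\right)^{N}
=\left(\frac{\alpha_2}{\alpha_1}\right)^{N}
=\bigl(e^{i\Omega}\tan t\bigr)^{N}\, ,
\]
so, introducing $w\equiv w(\zeta)=\frac{\zeta-\gamma_1}{\zeta-\gamma_2}$ and the primitive root $\omega=e^{2\pi i/N}$, the $N$ roots at parameter value $t$ are exactly the $w$-preimages of the $N$ points $\omega^{k}e^{i\Omega}\tan t$, $k=0,\dots,N-1$. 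Here $w(\zeta)$ is the Möbius transformation of the Riemann sphere sending $\gamma_1\mapsto 0$ and $\gamma_2\mapsto\infty$; we work on the Riemann sphere throughout, so the cases where $\gamma_i=\infty$ (a star at the south pole) need no separate treatment, and we may assume $\gamma_1\neq\gamma_2$, since otherwise the combination is a constant multiple of $(\zeta-\gamma_1)^N$.

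The second step is to describe the loci in the $w$-plane. With $\Omega$ held fixed and $t$ varying, $e^{i\Omega}\tan t$ sweeps the whole line through the origin in direction $e^{i\Omega}$ (a ray, if one restricts $t$ to $[0,\pi/2)$), and multiplying by $\omega^{k}$ rotates this by $2\pi k/N$. Hence the $k$-th root sweeps out, in the $w$-plane, the line through the origin in direction $e^{i(\Omega+2\pi k/N)}$. All of these lines pass through both $w=0$ and $w=\infty$, and their directions at the origin are equally spaced by $2\pi/N$: they form a sub-pencil of the pencil of all lines through the origin, meeting equiangularly at $0$ and at $\infty$.

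The third step is to pull this picture back by $w^{-1}$. A Möbius map carries generalized circles to generalized circles and is conformal, so each of the lines above maps to a circle through $w^{-1}(0)=\gamma_1$ and $w^{-1}(\infty)=\gamma_2$ (with the straight line through $\gamma_1$, $\gamma_2$ as the degenerate member), and the $2\pi/N$ angular spacing at the origin is reproduced at $\gamma_1$; applying the same reasoning in the chart $w\mapsto 1/w$, or invoking the reflection symmetry of the elliptic pencil of circles through two fixed points, gives the equiangular intersection at $\gamma_2$ as well. This is precisely the assertion.

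I do not anticipate a genuine obstacle; the remaining work is bookkeeping. One should check that the continuous branch $\zeta_k(t)$ really does follow one line of the $w$-plane across the entire parameter range, including the endpoints where all $N$ roots collide (at $\gamma_1$ when $\alpha_2=0$, at $\gamma_2$ when $\alpha_1=0$); observe that for $N$ even the loci of $k$ and $k+N/2$ coincide, so that one gets $N/2$ distinct circles rather than $N$, which does not affect the equiangular conclusion; and record that ``circle'' is meant on the Riemann sphere, so as to include the line through $\gamma_1$ and $\gamma_2$ and, where relevant, configurations passing through $\zeta=\infty$.
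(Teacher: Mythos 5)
Your proof is correct and follows essentially the same route as the paper: both pass to the M\"obius transformation $w(\zeta)=(\zeta-\gamma_1)/(\zeta-\gamma_2)$, observe that the root loci become lines through the origin at angles $\Omega+2\pi k/N$, and conclude by conformality of M\"obius maps. The only cosmetic difference is that you extract $N$-th roots directly where the paper factors the polynomial explicitly to exhibit $\zeta_k(t)$, so no further changes are needed.
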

\begin{proof}
We compute
\begin{equation}
\begin{split}
 \alpha_1^N (z & - \gamma_1)^N  - \alpha_2^N (z- \gamma_2)^N
 \\
 & = 
\prod_{k=1}^N 
\left( \alpha_1(z- \gamma_1) - \xi^k \alpha_2(z- \gamma_2) \right)
\\
&=
\left( \alpha_1^N - \alpha_2^N \right)\prod_{k=1}^N
 \left( z - \frac{ \gamma_1 \alpha_1  - \gamma_2 \xi^k \alpha_2 }{\alpha_1 - \xi^k \alpha_2  } \right)
\end{split}
 \, ,
\end{equation}
with $\xi=e^{i 2 \pi /N}$ a primitive $N$th root of  unity, which shows that
\begin{equation}
\label{root.comb}
\zeta_k (t) 
= 
\frac{ \gamma_1 \cos t  - \gamma_2 e^{i \Omega} \xi^k \sin t }{\cos t - e^{i \Omega} \xi^k \sin t }
\, .
\end{equation}
Consider now the M\"obius transformation $M(\zeta)= (\zeta-\gamma_1)/(\zeta - \gamma_2)$
and substitute from (\ref{root.comb}) to find
\begin{equation}
M(\zeta_k (t) ) = e^{i (\Omega+2\pi k/N)}  \tan t 
\, ,
\end{equation} 
which is a line through the origin making an angle $\Omega + 2 \pi k /N$ with the real axis. The proof is completed by noting that M\"obius transformations are conformal.
\end{proof}
\noindent 
 We make some related comments: 
 \begin{enumerate}
 \item
 The theorem could be stated in terms of a linear combination of the states themselves --- passing to the corresponding Majorana polynomials involves a rescaling of the coefficients in the linear combination. 
 \item
As usual, ``circles'', in the complex plane, includes the case of straight lines through the origin (see, \eg, top of figure~\ref{zetakorbits}).
 \item
Given that stereographic projection is also conformal, we may conclude that the trajectories of the stars on the Bloch sphere are also circles, intersecting equiangularly. This fact, for the case $s=1$, has been pointed out before --- see figure 11 in~\cite{Gan.Mar.Zyc:12}. 
\item
The theorem provides a  proof of the fact that a superposition of two SC states  cannot produce a state with degenerate stars, as suggested, for $s=3/2$, in proposition~\ref{N32SC}. 
\end{enumerate}
A particular $s=3/2$ case is depicted in figure~\ref{zetakorbits}.
\begin{figure}
\includegraphics[scale=0.3]{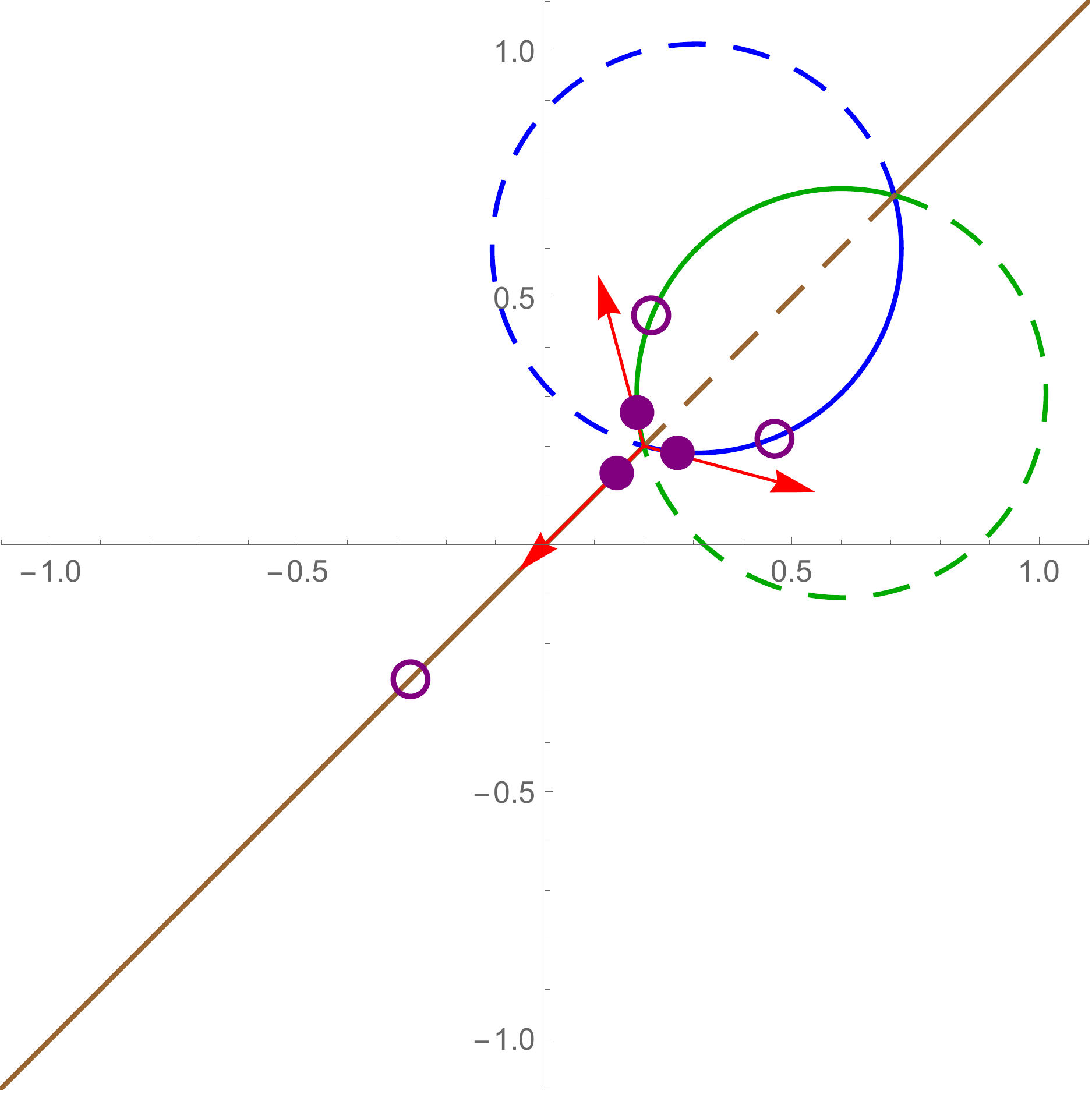}
\\
\includegraphics[scale=0.35]{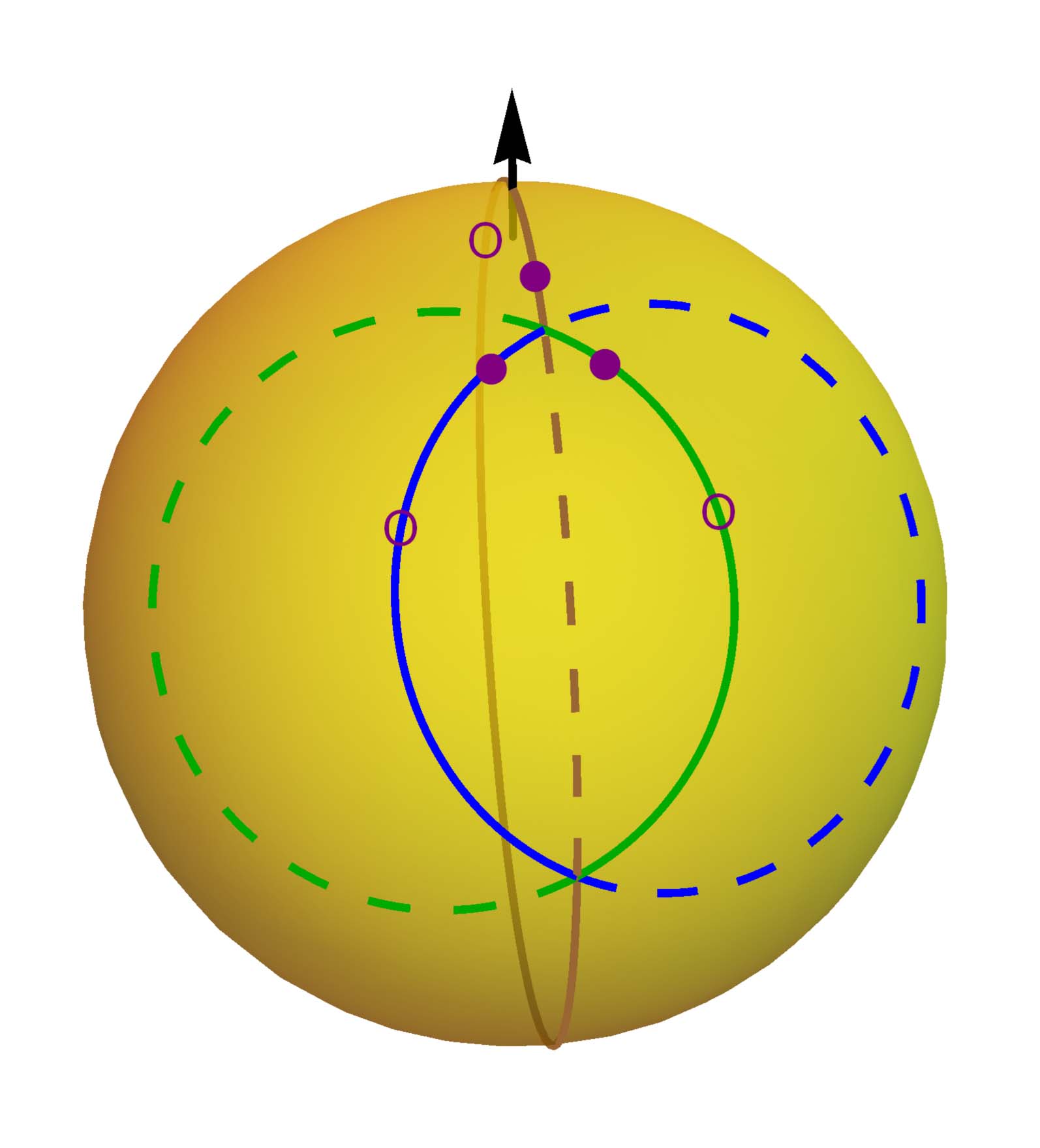}
\caption{%
Plot of the curves $\zeta_k(t)$ in~(\ref{root.comb}) in the complex plane (top) and its stereographic projection (bottom), for $s=3/2$, $\gamma_1= (1+i)/5$, $\gamma_2=(1+i)/\sqrt{2}$ and $\Omega=0$. The solid part of each circle takes a root $\zeta_i$ from $\gamma_1$ to $\gamma_2$, for $0 \leq t \leq \pi/2$, while the dashed part returns it from $\gamma_2$ to $\gamma_1$, for $\pi/2 \leq t \leq \pi$. The red arrows at $\gamma_1$, in the top figure,  are the vectors tangent to the curves at $t=0$, with an angle $2\pi/3$ between any two of them. The solid dots and little circles denote the configuration of the constellation at $t=0.1$ and $t=0.45$, respectively.
}
\label{zetakorbits}
\end{figure}

We end this subsection with a general statement about the number of distinct stars of a linear combination of \emph{any} two  states. To begin with, note that if the states share a star $n$, with multiplicities, say, $r$, $s$, respectively, then a linear combination of them will also have $n$ as a star, with multiplicity equal to $\min(r,s)$. Clearly, an analogous result holds in the case of several stars $\{n_i\}$ in common, each with different multiplicities $\{r_i\}$, $\{s_i\}$, in the two states. When factoring the linear combination of the two corresponding Majorana polynomials, such common factors may be canceled, and the problem reduces to that of a lower spin, without common stars. Therefore, we may assume, without loss of generality, that the states in question have no stars in common (note though that each state may have stars with multiplicity). Then the following result holds
\begin{theorem}
\label{cor:roots2SC}
Consider two  spin-$s$ states, $\ket{\Psi_1}$, $\ket{\Psi_2}$, with $n_1$, $n_2$ distinct stars respectively (each with possible multiplicity),  of which none are in common between the two states. Then an arbitrary linear combination $\ket{\Phi}=a\ket{\Psi_1}+b \ket{\Psi_2}$ has itself at least $N -n_1-n_2+1$  distinct stars (each with possible multiplicity).  
\end{theorem}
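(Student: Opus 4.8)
The plan is to translate the claim into a statement about polynomials and apply Mason's theorem (the $abc$ theorem for polynomials), exactly as the abstract promises. Writing $P_1(\zeta)=p_{\ket{\Psi_1}}(\zeta)$ and $P_2(\zeta)=p_{\ket{\Psi_2}}(\zeta)$ for the Majorana polynomials, the linear combination $\ket{\Phi}=a\ket{\Psi_1}+b\ket{\Psi_2}$ has Majorana polynomial $Q(\zeta)=aP_1(\zeta)+bP_2(\zeta)$ (up to the coefficient-rescaling noted in the remarks after Theorem~\ref{the:roots2SC}, which does not affect root counts). So I want to lower-bound the number of distinct roots of $Q$. The number of distinct stars of $\ket{\Psi_i}$ on $S^2$ is $n_i$; note $\deg P_i$ may be less than $N$ (degeneracy at the south pole), but a deficit of degree just corresponds to stars sitting at $\zeta=\infty$, which is still counted among the $n_i$, so I will need to treat $\infty$ symmetrically with the finite roots at the end.

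The key step is to set $A=aP_1$, $B=bP_2$, $C=-Q=-(aP_1+bP_2)$, so that $A+B+C=0$. First I should reduce to the case $\gcd(P_1,P_2)=1$: since the two states share no stars, $P_1$ and $P_2$ have no common finite root, and they also cannot both vanish at $\infty$ (that would be a common star at the south pole), so $A,B,C$ are pairwise coprime after possibly dividing out a scalar — actually pairwise coprimality of $A,B,C$ follows because any common factor of two of them would be a common factor of all three and hence of $P_1$ and $P_2$. Now Mason's theorem gives $\max(\deg A,\deg B,\deg C)\le N_0(ABC)-1$, where $N_0(ABC)$ is the number of distinct roots of the product $ABC$. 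The distinct roots of $ABC$ are exactly (distinct roots of $P_1$) $\cup$ (distinct roots of $P_2$) $\cup$ (distinct roots of $Q$); counting finite roots, $P_i$ contributes $n_i$ or $n_i-(\text{number of stars of }\ket{\Psi_i}\text{ at }\infty)$, and $Q$ contributes its number $d$ of distinct finite roots. Putting $\max(\deg A,\deg B,\deg C)$ on the left — which is $N$ when at least one of the polynomials has full degree, and otherwise $N$ minus the common degree deficit — one gets $d \ge N - n_1 - n_2 + 1$ after carefully bookkeeping the roots at infinity: every star at the south pole that is "lost" from $\deg P_i$ is also "lost" from $N_0$, so the two effects cancel and the clean bound $N-n_1-n_2+1$ survives with $n_i$, $d$ interpreted as counts of distinct points on the full sphere $S^2 = \mathbb{C}\cup\{\infty\}$.

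The honest bookkeeping at $\zeta=\infty$ is the step I expect to be the main obstacle — Mason's theorem is a statement about polynomials in the affine variable $\zeta$, whereas the natural invariant object is the divisor on $\mathbb{P}^1$. The cleanest fix is to first apply a generic Möbius transformation (equivalently, rotate all three constellations) so that none of the relevant stars — those of $\ket{\Psi_1}$, of $\ket{\Psi_2}$, and of the finitely many degenerate linear combinations whose constellations I care about, or simply of $\ket{\Phi}$ for the fixed $a,b$ at hand — sit at the south pole; then all three polynomials have exact degree $N$, $N_0(ABC)=n_1+n_2+d$ with everything finite, and Mason gives $N \le n_1+n_2+d-1$ directly, i.e. $d\ge N-n_1-n_2+1$. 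Since rotations preserve both the number of distinct stars of each state and of the combination, the bound transfers back. I should also record the two degenerate cases that Mason excludes: if $A+B+C=0$ with one of them zero then $Q\equiv 0$ up to scalar, meaning $\ket{\Psi_1}\propto\ket{\Psi_2}$, contrary to having no common stars unless $N=n_1=n_2$ and the constellations coincide — excluded by hypothesis; and the theorem's conclusion is vacuous (or trivially true) when $n_1+n_2\ge N+1$.

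\begin{proof}
By the remarks following Theorem~\ref{the:roots2SC}, passing from a linear combination of states to the linear combination of their Majorana polynomials only rescales the coefficients, so it suffices to bound the number of distinct roots of $Q(\zeta)=aP_1(\zeta)+bP_2(\zeta)$, where $P_i=p_{\ket{\Psi_i}}$. Since the number of distinct stars of a state, and of $\ket{\Phi}$, are invariant under rotations (equivalently, under the Möbius action on $\zeta$), we may first rotate all constellations so that none of the stars of $\ket{\Psi_1}$, $\ket{\Psi_2}$, or $\ket{\Phi}$ lies at the south pole. Then $\deg P_1=\deg P_2=\deg Q=N$, and $P_1$, $P_2$ have $n_1$, $n_2$ distinct (finite) roots respectively, while $Q$ has, say, $d$ distinct roots.

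Set $A=aP_1$, $B=bP_2$, $C=-Q$, so that $A+B+C=0$. If $a=0$ or $b=0$ the statement is trivial, so assume $ab\neq 0$; then $\deg A=\deg B=\deg C=N$. Any irreducible common factor of two of $A,B,C$ would, via $A+B+C=0$, divide the third, hence divide both $P_1$ and $P_2$, giving a common finite root and thus a common star — contrary to hypothesis. Hence $A,B,C$ are pairwise coprime. They are also not all constant. Mason's theorem (the $abc$ theorem for polynomials) therefore applies and yields
\begin{equation}
\max(\deg A,\deg B,\deg C)\le N_0(ABC)-1\,,
\end{equation}
where $N_0(ABC)$ is the number of distinct roots of $ABC$. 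The left-hand side is $N$. Since $P_1$, $P_2$ share no roots, and $Q$ contributes $d$ distinct roots, $N_0(ABC)\le n_1+n_2+d$. Combining,
\begin{equation}
N\le n_1+n_2+d-1\,,
\end{equation}
that is, $d\ge N-n_1-n_2+1$. As the number of distinct stars of $\ket{\Phi}$ equals $d$ in the chosen orientation, and this count is rotation-invariant, the claimed bound holds in general. Finally, if the two states had common stars to begin with, cancel the corresponding common factors as in the discussion preceding the theorem and apply the above to the resulting lower-spin polynomials; the bound $N-n_1-n_2+1$ is unchanged.
\end{proof}
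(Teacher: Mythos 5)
Your proposal is correct and follows essentially the same route as the paper: apply Mason's theorem to $A=a\,p_{\Psi_1}$, $B=b\,p_{\Psi_2}$, $C=p_{\Phi}$, use the no-common-star hypothesis to get pairwise coprimality (and to show $C$ shares no roots with $A$, $B$), and read off $d\ge N-n_1-n_2+1$. The only difference is cosmetic: you handle possible stars at the south pole by a preliminary rotation making all three polynomials of exact degree $N$, whereas the paper argues directly that at most one of the two input polynomials can have a degree deficit, so the left-hand side of Mason's inequality is $N$ in every case.
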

\begin{proof}
The statement is an immediate consequence of Mason's theorem~\cite{Mas:83,Mas:84,Lan:90}. Let $n_0(F(\zeta))$ denote the number of distinct roots of the complex polynomial $F(\zeta)$. Let $A$, $B$, $C$ be relatively prime polynomials such that $A+B=C$. Then Mason's theorem states that 
\begin{equation}
\label{Masonineq}
\max \deg\{A,B,C\} \leq n_0(ABC)-1
\, .
\end{equation}
To apply this to our case, put 
\begin{equation*}
A=a p_{\Psi_1}
\, ,
\quad
B=b p_{\Psi_2}
\, ,
\quad
C=p_{\Phi}
\, ,
\end{equation*}
with the Majorana polynomials as in~(\ref{pjk}), so that $n_0(A)=n_1$, $n_0(B)=n_2$, and, say, $n_0(C)=n_3$.
With our assumption about no common roots,  if one of the $\ket{\Psi_i}$ has a star at the south pole, and, hence, the degree of its Majorana polynomial is less than $N$, the other cannot also have a star there, and the left hand side of~(\ref{Masonineq}) is, in all cases, equal to $N$. Note also that if $C$ shared a root with, say, $A$, then it would have to also share it with $B$, which contradicts our assumptions, so all $n_3$ distinct roots of $C$ are different  from those of $A$ and $B$. Then the number of distinct roots of the product $ABC$ is $n_0(ABC)=n_1+n_2+n_3$, and the statement follows from~(\ref{Masonineq}).
\end{proof}
For the case of two SC states, $n_1=n_2=1$, we get $n_3 \geq N-1$, which is weaker than our result that in fact $n_3=N$. On the other hand, for two states with star multiplicities, such that $n_1+n_2<N$, we get $n_3 \geq 2$, which is a new result: the complex line through such states does not intersect $\SSC$.
\subsection{Visualizing $S^2_{\textbf{SC}}$}
The motivation for this subsection came from our struggling with the mental picture we presented at the beginning of  section~\ref{CSCs}: an expanding geodesic sphere $S_r$ that ends up tangent to $\SSC$ at exactly $N$ points. Now, intersections of submanifolds are robust --- wiggling a little bit the intersecting parts one still ends up with an intersection, but tangencies are not: when perturbed, they either disappear, or get converted to intersections. It is a bit puzzling then that the above two spheres remain tangent at $N$ points, for \emph{any} position of the center $\pket{\Psi}$  of $S_{\pi/2}$ (the $N$ points of tangency, of course, change, as $\pket{\Psi}$ is moved around in $\Ps$). Looking at figure~\ref{fig:husimiPlot2cpp}, and trying to imagine the surface depicted there wrapped around $\SSC$, we arrive at the cartoon in figure~\ref{fig:cartoon}, where, for simplicity, we have assumed that $s=1$, so that there are only two ``peaks'' on $\SSC$. 
%
%
%
\begin{figure}
\includegraphics[scale=0.5]{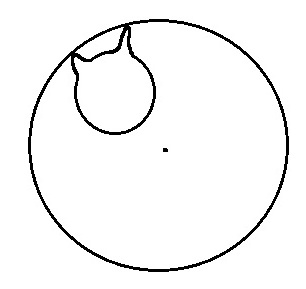}
\caption{Artist's rendition of the geodesic sphere $S_{\pi/2}$ (outlying circle), centered at $\pket{\Psi}$, and of $S^2_\text{SC}$ (cat shaped curve), tangent to $S_{\pi/2}$ at two points (assuming $s=1$).}
\label{fig:cartoon}
\end{figure}
But this image is hardly convincing: for example, how  are the peaks compatible with the known fact that the restriction of the Fubini-Study metric on $\SSC$ gives a perfectly ``round'' sphere, with constant curvature? And how can $S_{\pi/2}$ remain tangent to $\SSC$ when $\pket{\Psi}$ is moved freely in $\Ps$? Worse still,  how many peaks does $\SSC$ really have, if any?  Now, some of these puzzles are simply byproducts of vague phrasing, naively drawn cartoons, and other such easily fixable looseness. For example, $S_{\pi/2}$ in figure~\ref{fig:cartoon} is actually a codimension-1 object (\eg, 3D for $s=1$), which is certainly not what that image conveys. Other aspects of these questions though seem to persist, even when elementary corrections are taken into account. We felt, therefore,  that a good starting point in trying to answer them would be ``taking a picture'' of $\SSC$, from $\pket{\Psi}$'s position. In this, we assume that the light used to take the picture follows Fubini-Study geodesics, and use the inverse of the exponential map, based at $\pket{\Psi}$, to lift the image of $\SSC$ into the tangent space at $\pket{\Psi}$ --- the result is what we called $\log_\Psi(\SSC)$, and we wonder what it looks like. Our theorem~\ref{thm:maxdim} guarantees we can only  plot projections of the 4D image in, say, 3-planes, and that is indeed our goal. We sketch the calculation, fixing, for simplicity, $s=1$, and identifying a point $\pket{\Psi}$ with the density matrix $\rho_\Psi=\ket{\Psi}\bra{\Psi}$. The two stars $n_1$, $n_2$, of $\ket{\Psi}$ are taken in the $x$-$z$ plane, symmetrically with respect to the $z$-axis, and making an angle $\alpha \in [0,\pi/2]$ with it, \ie,
\begin{align}
\label{n1n2def}
n_1
&=
(\sin\alpha,0,\cos\alpha)
\\
n_2
&=
(-\sin\alpha,0,\cos\alpha)
\, .
\end{align}
 The corresponding state, in the $S_z$-basis $(1,0,-1)$, is
\begin{equation}
\label{psialphadef}
\ket{\Psi}
=
2b^{-1}
\left(
\cos^2 \frac{\alpha}{2}, 0, -\sin ^2 \frac{\alpha}{2}
\right)
\, ,
\end{equation}
with $b \equiv \sqrt{3+\cos 2\alpha}$. The SC states corresponding to the stars are
\begin{align}
\label{n1Sdef}
\ket{n_1}
&=
\left(
\cos^2 \frac{\alpha}{2}, \frac{1}{\sqrt{2}} \sin\alpha, \sin^2 \frac{\alpha}{2}
\right)
\\
\label{n2Sdef}
\ket{n_2}
&=
\left(
\cos^2 \frac{\alpha}{2},- \frac{1}{\sqrt{2}} \sin\alpha, \sin^2 \frac{\alpha}{2}
\right)
\, .
\end{align}
The curve
\begin{equation}
\label{ctcurve}
\ket{c(t)}
=
(\cos t- \cot \omega \sin t)\ket{\Psi}
+e^{i \eta} \sin t \csc \omega \ket{n}
\, ,
\end{equation}
in $\Hs$, where $\braket{n}{\Psi} \equiv \cos \omega \, e^{i \eta}$, projects to a geodesic $\rho_{c(t)}$ in $\Ps$, starting, at $t=0$, at $\rho_{\Psi}$ and reaching, at $t=\omega$, the SC state $\rho_{n}$. The tangent vector $\partial_t \rho_{c(t)}|_{t=0} \equiv \dot{\rho}_c(0)$ is given by
\begin{equation}
\label{rhodt0}
 \dot{\rho}_c(0)
 =
 -2 \cot \omega \rho_\Psi
 +
 \csc \omega 
 \left(
 e^{i \eta} \ket{n}\bra{\Psi}
 +
 e^{-i\eta} \ket{\Psi}\bra{n}
 \right)
 \, ,
 \end{equation}
 and is of unit length, as $t$ is arclength along $\rho_c(t)$. Then 
 \begin{equation}
 \label{logrhon}
 v_n 
\equiv 
\log_\Psi  \rho_n
=
\omega \dot{\rho}_c(0)
\, ,
\end{equation}
is the sought image of $\rho_n$ in $T_\Psi \Ps$, since $\omega$ is the geodesic distance between $\rho_\Psi$ and $\rho_n$. We choose an orthonormal hermitian basis 
$
\{ 
h_1, h_2, h_3, h_4
\}
$ in $T_\Psi \Ps$, where
\begin{align}
\label{hidef}
H_1 
&\equiv 
h_1 + i h_2
\\
&=
2b^{-1}
\left(
\begin{array}{ccc}
0 & 0 & 0
\\
\scriptstyle
\cos \alpha+1 & 0 &
\scriptstyle
\cos\alpha - 1
\\
0 & 0 & 0
\end{array}
\right)
\\
H_2
&\equiv
h_3 + i h_4
\\
&=
b^{-2}
\left(
\begin{array}{ccc}
\scriptstyle
1-\cos 2\alpha
& 
0
&
\scriptstyle
-8 \sin^4 \frac{\alpha}{2}
\\
0 & 0 & 0
\\
\scriptstyle
3+\cos 2\alpha +4 \cos \alpha
&
0
&
\scriptstyle
\cos 2\alpha -1
\end{array}
\right)
\, ,
\end{align}
and compute the corresponding components $v_n^i=\frac{1}{2}\Tr(v_n h_i)$,
\begin{align}
v_n^1+i v_n^2
&=
\frac{\sqrt{2} e^{-i \phi}  \omega \chi \sin \theta}{b \sin \omega \cos \omega}
\\
v_n^3+i v_n^4
&=
\frac{4 e^{-i 2\phi} \omega \chi \xi}{b^2 \cos \omega \sin \omega}
\, ,
\end{align}
where
\begin{align}
\chi
&=
\cos^2 \frac{\alpha}{2} \cos^2 \frac{\theta}{2}-e^{i 2\phi} \sin^2 \frac{\alpha}{2} \sin^2 \frac{\theta}{2}
\\
\xi
&=
\cos^2 \frac{\alpha}{2} \sin^2 \frac{\theta}{2} +e^{i 2\phi} \sin^2 \frac{\alpha}{2} \cos^2 \frac{\theta}{2}
\, ,
\end{align}
and the phase of the SC states was chosen so that $\braket{n}{\Psi}=\cos \omega \geq 0$.
We plot the projection of $\log_\Psi \SSC$ in the 123-plane, for $\alpha=\pi/12$, $\pi/3$, and $\pi/2$, in figure~\ref{fig:S2SC_1}. Since normal coordinates, centered at $\rho_\Psi$, are being used, $\rho_\Psi$ lies at the origin in the figure and  Fubini-Study geodesics through it look like straight lines.
\begin{figure*}
\centering
\raisebox{4ex}{\includegraphics[scale=0.25]{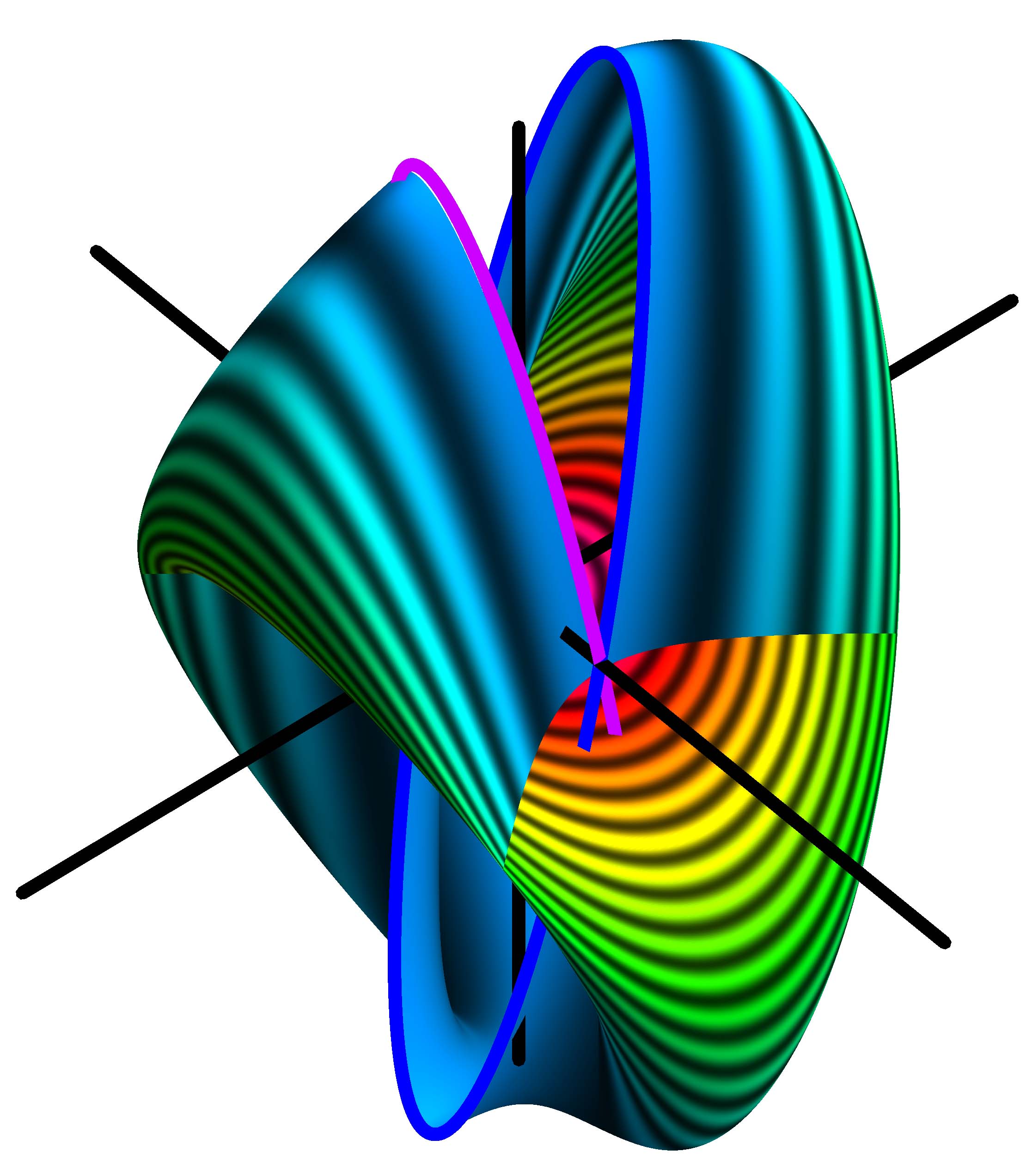}}
\hfill
\raisebox{1.4ex}{\includegraphics[scale=0.33]{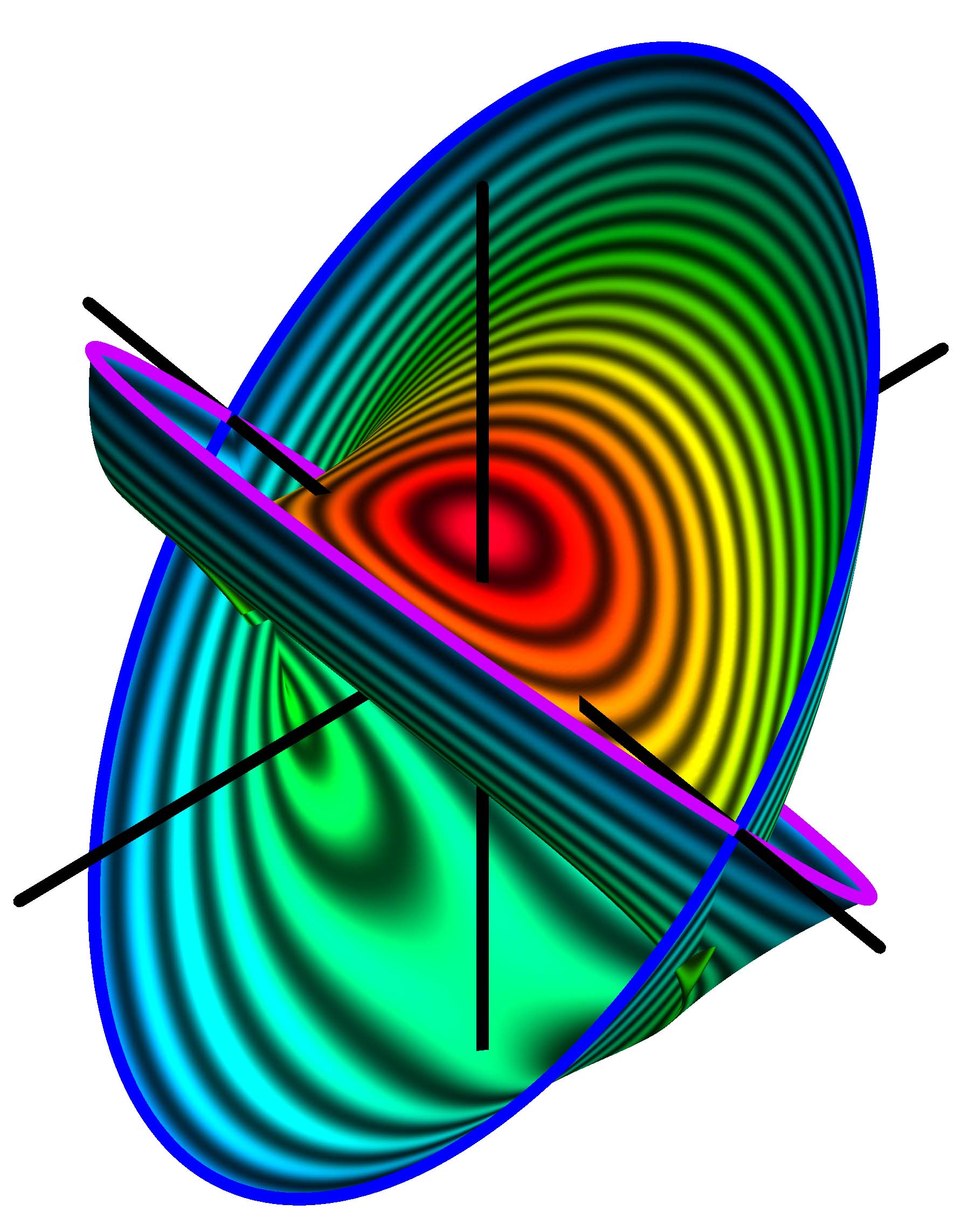}}
\hfill
\raisebox{1.5ex}{\includegraphics[scale=0.25]{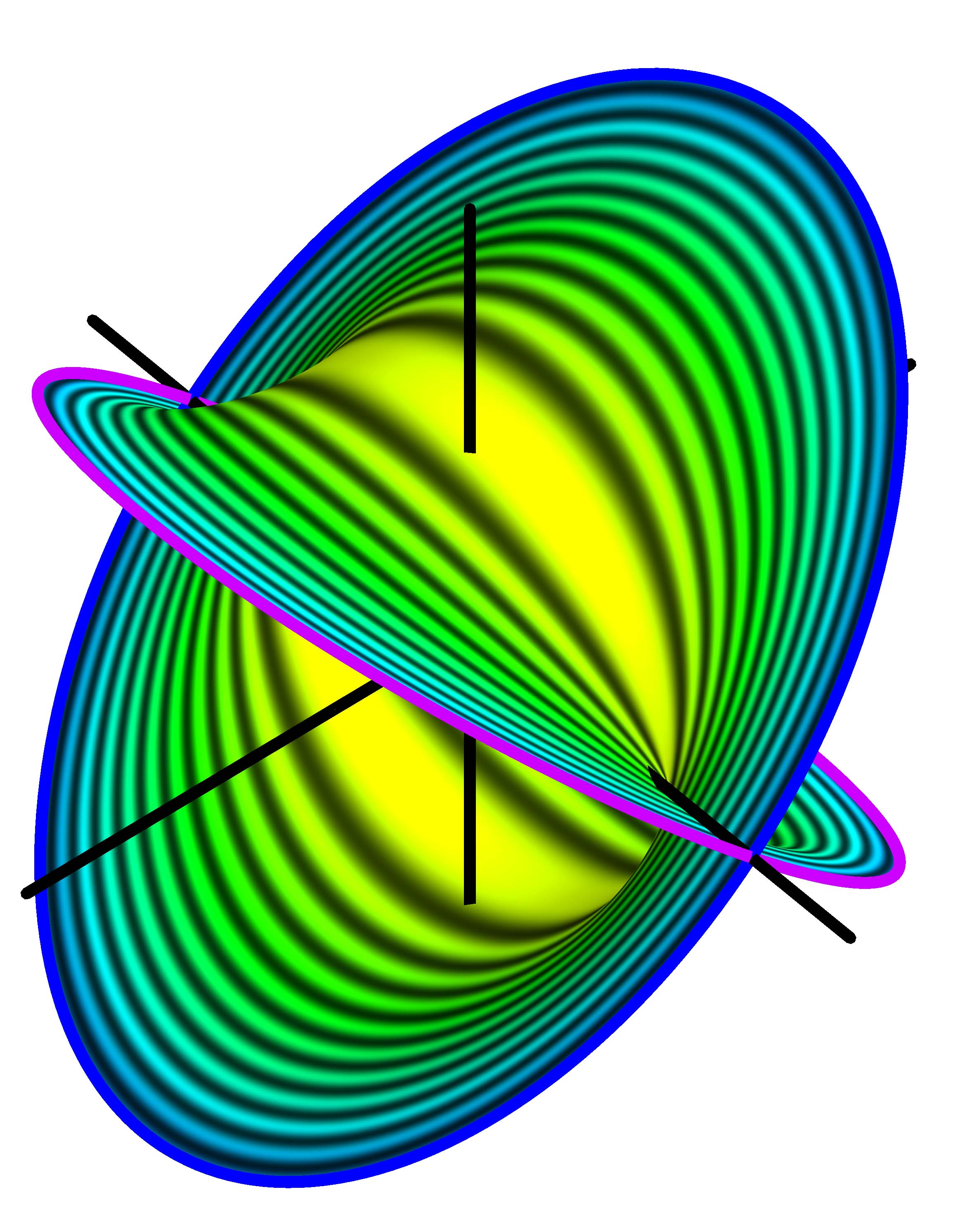}}
\caption{%
Plot of $\log_\Psi(\SSC)$, with $\ket{\Psi}$ as in~(\ref{psialphadef}), for $\alpha=\pi/12$ (left), 
$\pi/3$ (center), $\pi/2$ (right) (projection in the plane 123). The state $\pket{\Psi}$ is at the origin, where the axes intersect (not visible). The highlighted ellipses are the inverse images, under $\exp_\Psi$, of the  SC states $\pket{-n_1}$, $\pket{-n_2}$ in directions antipodal to the stars of $\pket{\Psi}$ --- the singularities of $\log_\Psi$ there blow up individual points to entire circles. The (online) color coding  assigns warmer colors  to the SC states closest to $\pket{\Psi}$ (red for the north pole of $S^2_\text{SC}$ in the first two plots, yellow for the 23 meridian in the third plot), and blue to those farthest away (above mentioned ellipses). The rapid brightness modulation marks equidistance from $\pket{\Psi}$ --- note how it slows down near the above mentioned extrema.
}
\label{fig:S2SC_1}
\end{figure*}
A notable, and initially puzzling, feature of the surface shown in that figure, supposedly the image of a topological 2-sphere, is that it seems to have a boundary: one sees a self-intersecting surface that  ends on two ellipses (highlighted in blue/violet). The latter are the projections, in the 123-plane, of two circles in the full, 4D tangent space. In their turn, the circles are the inverse images, under the exponential map, of the SC states in the directions antipodal to the stars of $\rho_\Psi$. What happens here is that $\ket{\Psi}=\ket{n_1,n_2}$ is orthogonal to $\ket{\! - \! n_i}$, $i=1,2$, so that $\pket{\Psi}$ and, say, $\pket{-n_1}$, are antipodal points on the projective line (real 2-sphere) they define. Then $\pket{- n_1}$ is in the cut locus of $\exp_\Psi$ and all vectors tangent to the above 2-sphere at $\pket{\Psi}$, of length $\pi/2$, ``point'' to $\pket{- n_1}$ --- the circles (ellipses) in the figure are just the loci of those tangent vectors. Going up one dimension, in the full tangent space, the geodesic sphere $S_{\pi/2}$ would look like a euclidean 3-sphere centered at the origin, where $\pket{\Psi}$ lies, and the above circles are great circles on that sphere. This last statement of course needs to be taken with a grain of salt, as $S_{\pi/2}$ is in its entirety in the cut locus of $\exp_\Psi$, but it can be made  precise in a limiting sense.

Two further snapshots of $\SSC$ for $\alpha=\pi/3$, from different viewpoints,  are shown in figure~\ref{fig:S2SC_2} (left and middle plots). In the one in the middle, the complex line defined by $\pket{n_1}$, $\pket{n_2}$, is also plotted --- rather than a topological 2-sphere, it looks like a spherical cap, the reason being that the state $\sqrt{2/3}(\ket{n_1}-\ket{n_2})$, which belongs to that complex line, is orthogonal to $\ket{\Psi}$, so its logarithm is, as we have seen above, an entire circle (the boundary of the cap). Note that this is the rule, rather than the exception: any generic complex line $\ket{\phi_1}+\zeta \ket{\phi_2}$ contains a single state $\ket{\Psi}^\perp$, orthogonal to a given state $\ket{\Psi}$, corresponding to $\zeta=-\braket{\Psi}{\phi_1}/\braket{\Psi}{\phi_2}$. That state will blow up into a full circle under $\log_\Psi$, and, accordingly, the complex line, rather than a 2-sphere, will look like a cap, with $\log_\Psi(\pket{\Psi}^\perp)$ at its boundary. 

Another way to visualize $\SSC$ is to use $\dot{\rho}_c(0)$, in equation~(\ref{rhodt0}), to map $\SSC$ to a surface in the unit tangent sphere $S^3$ at $\pket{\Psi}$. Thus, the radial information about $\SSC$ is erased, and the above mentioned surface only records the direction in which each point of $\SSC$ is viewed from $\pket{\Psi}$. That surface, in turn, may be stereographically projected, from the ``south'' 4-pole to the 123-equatorial plane in $T_\Psi \Ps$ --- the result is plotted in the right in figure~\ref{fig:S2SC_2}. Note that the two circles that correspond to the SC states $\pket{-n_1}$, $\pket{-n_2}$ are linked.
\begin{figure*}
\centering
\raisebox{1.4ex}{\includegraphics[scale=0.25]{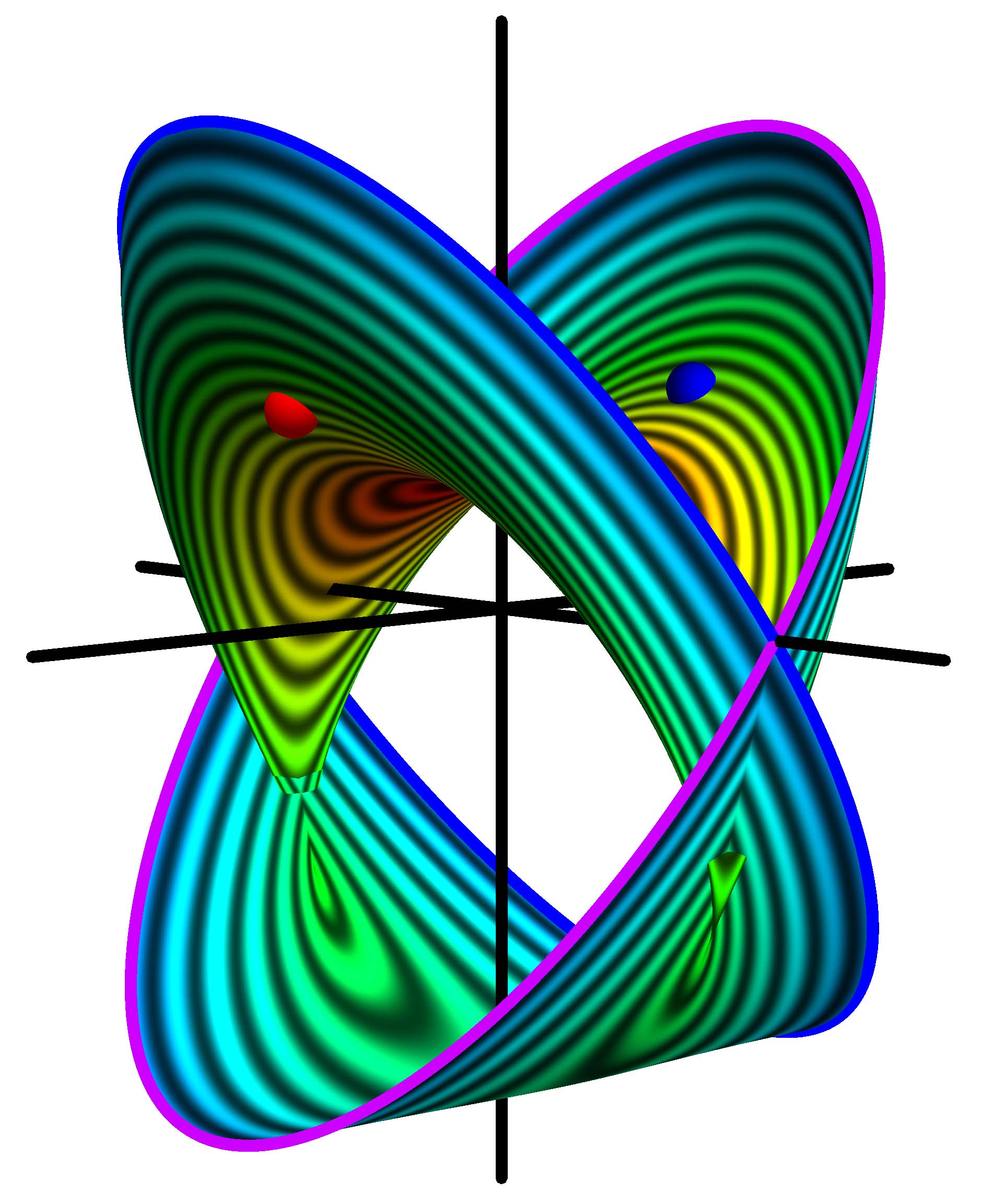}}
\hfill
\raisebox{1.5ex}{\includegraphics[scale=0.25]{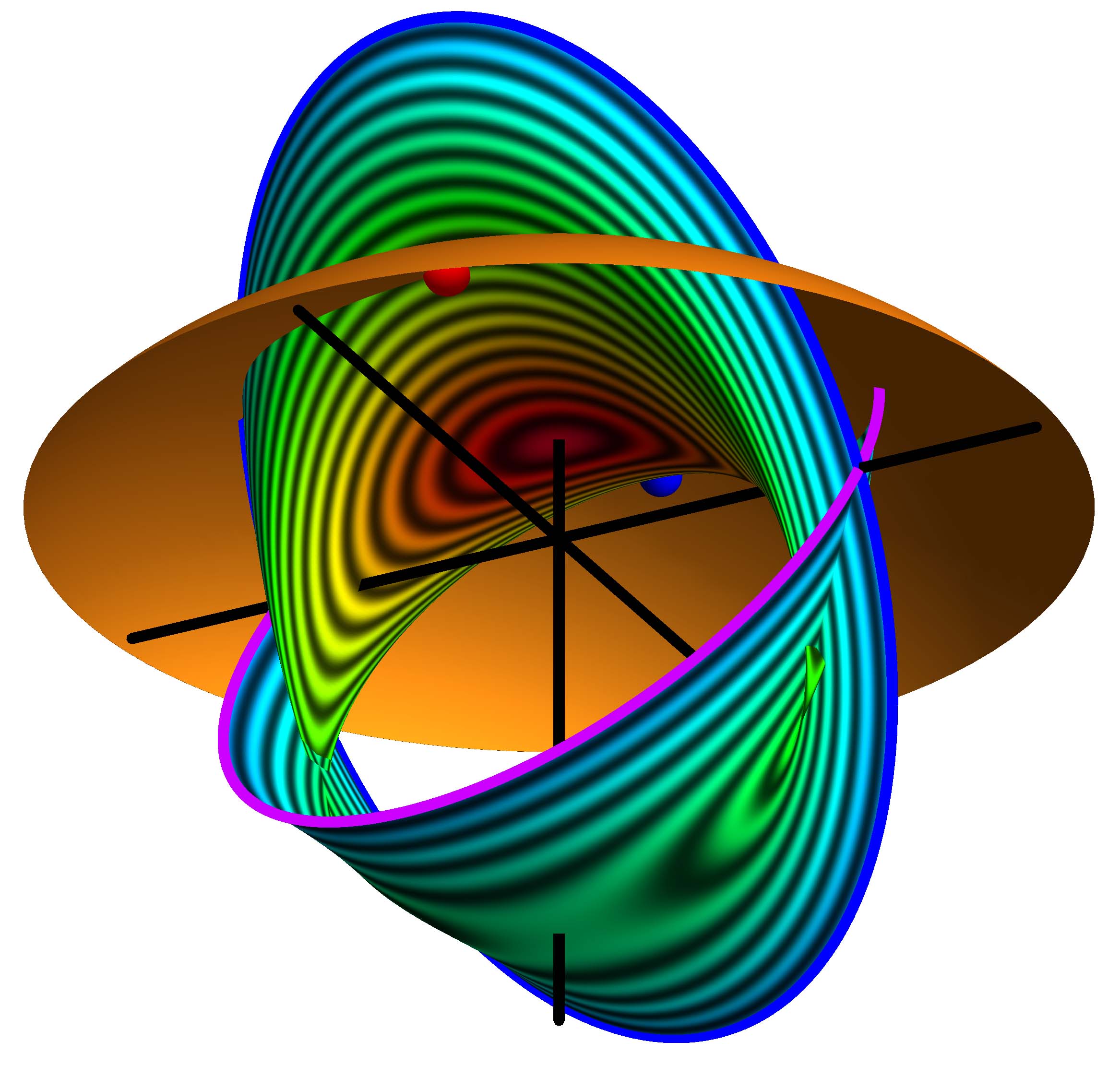}}
\hfill
\raisebox{4.5ex}{\includegraphics[scale=0.25]{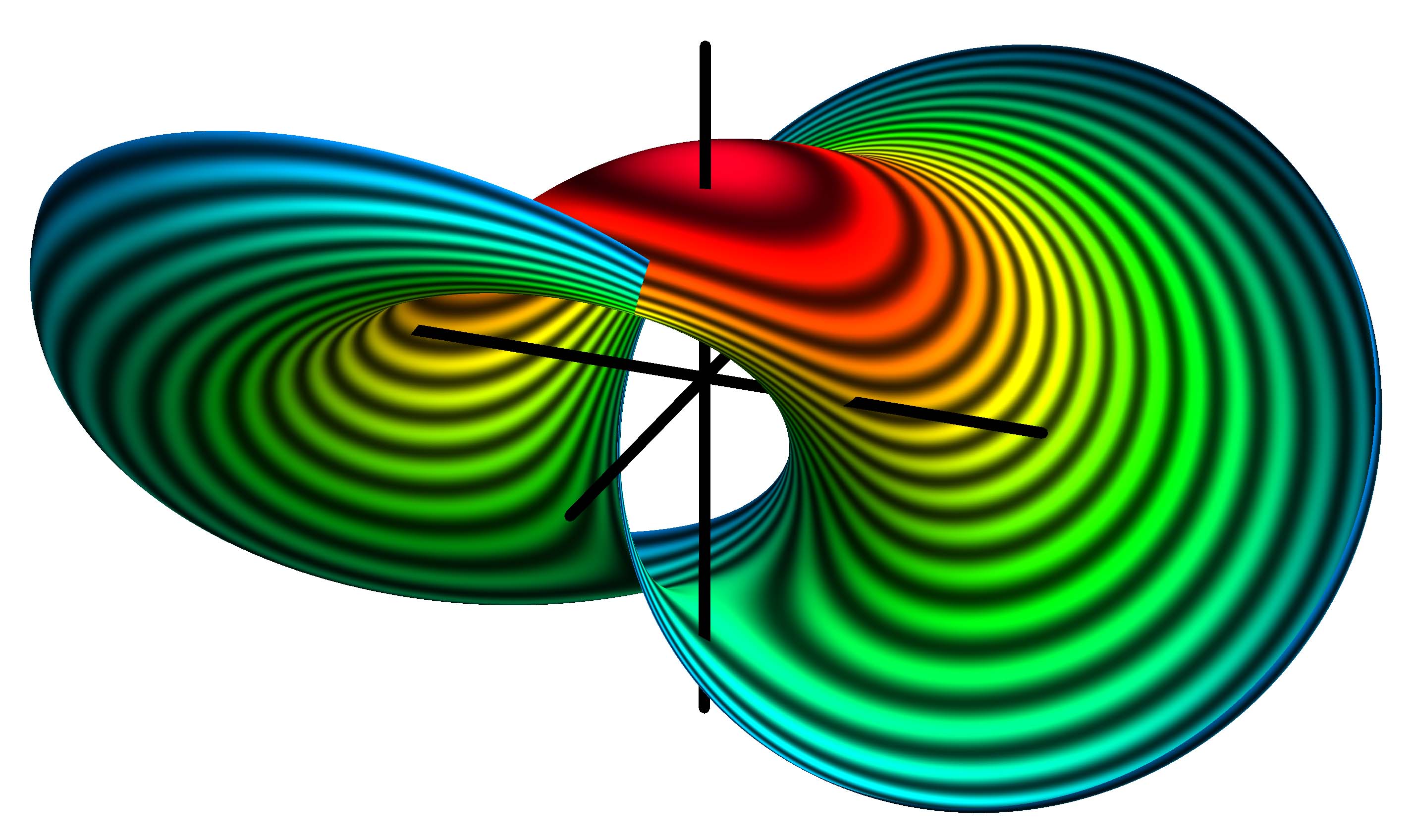}}
\caption{%
\textbf{Left and middle:} Shown is the surface in the middle of Fig.~\ref{fig:S2SC_1} ($\log_\Psi(\SSC)$ for $\alpha=\pi/3$), from two different viewpoints. The two little spheres on the surface denote the position of the SC states $\pket{n_1}$, $\pket{n_2}$ (see~(\ref{n1Sdef}), (\ref{n2Sdef})), corresponding to the stars of  $\pket{\Psi}$ ($\pket{\Psi}$ itself is at the origin). The ``spherical cap'' superimposed in the figure in the middle is the complex line $\ell$ passing through $\pket{n_1}$, $\pket{n_2}$ --- although topologically a 2-sphere, it appears to have a boundary, because the state  $\ket{\Psi}^\perp=\sqrt{2/3}(\ket{n_1}-\ket{n_2})$, which  belongs to $\ell$ and is orthogonal to $\ket{\Psi}$, is blown up into a circle (the boundary of the cap) under $\log_\Psi$.
 Note that (the projection of) $\SSC$, rather than a moon-like object in the horizon, appears to ``wrap around the sky'', when viewed from $\pket{\Psi}$. Note also that, in the full (4D) $T_\Psi \Ps^2$, $\ell$ only intersects $\SSC$ in the two points $\pket{n_1}$, $\pket{n_2}$ --- additional intersections appearing in the figure are an artifact of the projection in the 123-hyperplane.
 \textbf{Right:} Stereographic projection from the south 4-pole  to the equatorial 123-hyperplane in $T_\Psi \Ps$ of the image of $\SSC$ under the map $\dot{\rho}_c(0) \colon \SSC \rightarrow S^3 \subset T_\Psi \Ps$ in equation~(\ref{rhodt0}). The (online)  color coding in all three plots is as in figure~\ref{fig:S2SC_1}.
}
\label{fig:S2SC_2}
\end{figure*}

A further interesting result can be inferred from~(\ref{rhodt0}). To begin with, that relation is valid with $\ket{n}$ being replaced by a general (\ie, not necessarily SC) state $\ket{a}$. We use the notation $\braket{a}{\Psi}=\cos \omega_{a\Psi} e^{i \eta_{a\Psi}}$ for any pair of states. Call $v_a$ the unit vector tangent at $\pket{\Psi}$, pointing towards $\pket{a}$, and similarly for $v_b$. Then, the angle $\Theta_{ab}$ between $v_a$, $v_b$, is found to be 
\begin{align}
\cos  \Theta_{ab}   
&=
\frac{1}{2} \Tr(v_a v_b)
\nonumber
\\
&=
\frac{\cos \omega_{ab} \cos\Omega-\cos \omega_{a\Psi} \cos \omega_{b\Psi} }{\sin \omega_{a\Psi} \sin \omega_{b\Psi}}
\, ,
\label{ProjTri}
\end{align}
where $\Omega=\eta_{ab}+\eta_{b\Psi}+\eta_{\Psi a}$ is the phase of the Bargmann invariant of the three states involved,
\begin{equation}
\label{Binv}
\braket{a}{b}\braket{b}{\Psi}\braket{\Psi}{a}=Re^{i \Omega}
\, ,
\end{equation}
where $R,\Omega \in \mathbb{R}$. Note that, for $\Omega=0$, (\ref{ProjTri}) reduces to the formula for the angle of a spherical geodesic triangle in terms of the lengths (angles) of its sides. This is not an accident, in fact~(\ref{ProjTri}) \emph{is} the spherical trigonometric formula, only expressed in terms of projective space quantities. To see this, consider the real version of the Hilbert space $\Hs$, with $\Hs \ni \ket{\Psi}=(x_0+i y_0,\ldots ,x_N+i y_N) \rightarrow (x_0, \ldots, x_N,y_0, \ldots, y_N) = \Psi \in \mathbb{R}^{2N+2}$, so that normalized kets in $\Hs$ are mapped to the unit sphere $S^{2N+1}$ in $\mathbb{R}^{2N+2}$. The euclidean inner product between two such vectors $\Psi$, $\Phi$, is easily seen to be given by $\Psi \cdot \Phi=\Re \braket{\Psi}{\Phi}$, so that the angle $s$ between them satisfies
\begin{equation}
\label{sreta}
\cos s= \Psi \cdot \Phi = \Re \braket{\Psi}{\Phi}= \cos \omega_{\Psi \Phi} \cos \eta
\, ,
\end{equation}
where $\braket{\Psi}{\Phi}=\cos \omega_{\Psi \Phi} e^{i \eta}$, and $\omega_{\Psi \Phi}$ is the FS distance between $\pket{\Psi}$, $\pket{\Phi}$ in $\Ps$. When the two states are in phase, \ie, their inner product is positive, their distance on $S^{2N+1}$ is equal to the FS one of their images in $\Ps$ --- (\ref{ProjTri}) then follows, keeping in mind that the SC states where assumed in phase with $\ket{\Psi}$.
%
%
%
%
%
%
%
%
%

\section{Summary and Concluding Remarks}

%
%
%
%
%
We have investigated questions regarding the intersection of complex lines and Fubini-Study geodesics in quantum projective state space $\Ps$
 with the 2-sphere of spin coherent states $\SSC$ --- a central role in this discussion is played by our result of the linear independence of any $N+1$ SC states. We showed that for a generic quantum state $\pket{\Psi}$, there exists an adapted SC basis, defined via  the extrema of its Husimi function. We also gave a lower bound on the number of distinct stars of a linear combination of two generic spin-$s$ states, and found a simple expression for the constellation of a linear combination of two spin-$s$ SC states. Finally, we computed the image of the SC 2-sphere, for $s=1$, projected to a 3D subspace of the tangent space to $\Ps^2$, using (the inverse of) the exponential map.
 
 As mentioned before, our motivation in delving into this sort of questions, of a distinctly algebraic geometric flavor, is mainly rooted in our belief that the answers naturally translate into statements that an experimentalist might find not only neat but also useful. Our initial excursion into this territory has left many stones unturned. A basic piece of information that seems missing is the form of the Majorana constellation obtained by linearly combining two given states. This leads back to the mostly open problem of factorizing a sum of polynomials, but apart from a complete description of the result, which might be presently  untenable, one may also envisage partial answers in terms of  bounds and inequalities, already unearthed but hidden deeply in the mathematics literature. Another promising direction seems to be ``intersectology'', hopefully streamlined by a more substantial assimilation of algebraic geometric know-how. In particular, we would like to clarify the role  higher secant varieties might play in a wide array of problems, and whether direct physical implications may be inferred from it. 
\section*{Acknowledgements}
The authors wish to thank J.{} Martin and L.{} L.{} S\'anchez-Soto for kindly bringing to their attention several relevant references. They also acknowledge partial financial support from the UNAM-DGAPA-PAPIIT project IG 100316.
\bibliographystyle{plain}
\bibliography{C:/Users/chryss/Documents/chryss_home/papers/strings}   
%
\end{document}